\def\ps@pprintTitle{%
  \let\@oddhead\@empty
  \let\@evenhead\@empty
  \let\@oddfoot\@empty
  \let\@evenfoot\@oddfoot
}
\numberwithin{equation}{section}
\newtheorem{theorem}{Theorem}[section]
\newtheorem{corollary}[theorem]{Corollary}
\newtheorem{lemma}[theorem]{Lemma}
\newtheorem{proposition}[theorem]{Proposition}
\theoremstyle{definition}
\newtheorem{definition}[theorem]{Definition}
\newtheorem{remark}[theorem]{Remark}
\newtheorem{assumption}[theorem]{Assumption}
\newtheorem{sassumption}[theorem]{Standing Assumption}
\numberwithin{equation}{section}
\DeclareMathOperator*{\argmax}{arg\,max}
\newcommand{\RR}{\mathbb{R}}
\newcommand{\PP}{\mathbb{P}}
\newcommand{\GG}{\mathbb{G}}
\newcommand{\Ir}{\mathbf{I}}
\newcommand{\NN}{\mathbb{N}}
\newcommand{\Ff}{\mathcal{F}}
\newcommand{\Jj}{\mathcal{I}}
\newcommand{\cN}{\mathcal{N}}
\newcommand{\cW}{\mathcal{W}}
\newcommand{\cE}{\mathcal{E}}
\newcommand{\cF}{\mathcal{F}}
\newcommand{\cZ}{\mathcal{Z}}
\newcommand{\cV}{\mathcal{V}}
\newcommand{\cG}{\mathcal{G}}
\newcommand{\cH}{\mathcal{H}}
\newcommand{\cA}{\mathcal{A}}
\newcommand{\cB}{\mathcal{B}}
\newcommand{\cL}{\mathcal{L}}
\newcommand{\EE}{\mathbb{E}}
\newcommand{\OneN}{\{1,\hdots,N\}}
\newcommand{\bOne}{\mathbf{1}}
\newcommand{\eps}{\varepsilon}
\newcommand{\mfv}{\mathfrak{v}}
\newcommand{\vecc}{\mathrm{vec}}
\newcommand{\Diag}{\mathrm{Diag}}
\newcommand{\ee}{\mathfrak{e}}
\newcommand{\bb}{\mathfrak{b}}
\newcommand{\dd}{\mathfrak{d}}
\newcommand{\mfB}{\mathfrak{B}}
\newcommand{\Pp}{{P}}
\newcommand{\Ww}{{W}}
\newcommand{\Cc}{{C}}
\newcommand{\Yy}{{Y}}
\newcommand{\Hh}{{H}}
\newcommand{\xX}{\mathcal{X}}
\newcommand{\Ll}{{\Lambda}}
\newcommand{\llambda}{\boldsymbol{\lambda}}
\def\red#1{\textcolor{black}{#1}}
\newcommand{\LGD}{\mathrm{LGD}}
\newcommand{\EAD}{\mathrm{EAD}}
\newcommand{\PD}{\mathrm{PD}}
\newcommand{\EL}{\mathrm{EL}}
\newcommand{\UL}{\mathrm{UL}}
\newcommand{\ES}{\mathrm{ES}}
\newcommand{\VaR}{\mathrm{VaR}}
\newcommand{\titre}{Propagation of \textcolor{black}{a carbon price} in a credit portfolio through macroeconomic factors}
\begin{document}

\begin{frontmatter}

\title{\titre}
\date{\today}

\author[1]{G\'{e}raldine Bouveret}
\author[2]{Jean-Fran\c{c}ois Chassagneux}
\author[3]{Smail Ibbou}
\author[4,5]{Antoine Jacquier}
\author[2,3,4]{Lionel Sopgoui}

\address[1]{Climate Risks Research Department, Rimm Sustainability}
\address[2]{Laboratoire de Probabilités, Statistique et Modélisation (LPSM), Université Paris Cité}
\address[3]{Validation des modèles, Direction des risques, BPCE S.A.}
\address[4]{Department of Mathematics, Imperial College London}
\address[5]{The Alan Turing Institute}

\journal{SIAM Journal on Financial Mathematics (SIFIN)}

\begin{abstract}\small
We study how the \textcolor{black}{climate transition through a low-carbon economy, implemented by carbon pricing}, propagates in a credit portfolio and precisely describe how \textcolor{black}{carbon price dynamics} affects credit risk measures such as probability of default, expected, and unexpected losses.
We adapt a stochastic multisectoral model to take into account \textcolor{black}{the greenhouse gases (GHG) emissions costs of} both sectoral firms' production and consumption, as well as sectoral household's consumption. \textcolor{black}{GHG emissions costs are the product of carbon prices, provided by the NGFS transition scenarios, and of GHG emissions.} For each sector, our model yields the sensitivity of firms' production and households' consumption to carbon price and the relationships between sectors.
It also allows us to analyze the short-term effects of \textcolor{black}{the introduction of a carbon price} as opposed to standard Integrated Assessment Models (such as REMIND), which are not only deterministic
but also only capture long-term trends of climate transition policy.
Finally, we use a Discounted Cash Flows methodology to compute firms' values which we then combine with a structural credit risk model to describe how the introduction \textcolor{black}{of a carbon price} impacts credit risk measures.
We obtain that the introduction of \textcolor{black}{a carbon price} distorts the distribution of the firm’s value, increases banking fees charged to clients
(materialized by the level of provisions computed from the expected loss), and reduces banks' profitability (translated by the value of the economic capital calculated from the unexpected loss). In addition, the randomness introduced in our model provides extra flexibility to take into account uncertainties on productivity by sector and on the different transition scenarios.
We also compute the sensitivities of the credit risk measures with respect to changes in \textcolor{black}{the carbon price},
yielding further criteria for a more accurate assessment of climate transition risk in a credit portfolio.
This work provides a preliminary methodology to calculate
the evolution of credit risk measures of a multisectoral credit portfolio, starting from a given climate transition scenario described by \textcolor{black}{a carbon price.}

\end{abstract}

\begin{keyword}
Structural credit risk \sep Climate risk \sep Macroeconomic modelling \sep Transition risk \sep Carbon price \sep Firm valuation \sep Stochastic modeling
\end{keyword}

\end{frontmatter}

\footnotesize This research is part of the PhD thesis in Mathematical Finance of Lionel Sopgoui whose works are funded by a CIFRE grant from BPCE S.A. The opinions expressed in this research are those of the authors and are not meant to represent the opinions or official positions of BPCE S.A.

\normalsize
\section*{Introduction}

\vspace{2mm} Climate change has a deep impact on human societies and their environments.
In order to assess and mitigate the associated risks, many summits have been organized in recent decades, resulting in agreements signed by a large majority of countries around the globe. These include the Kyoto Protocol in 1997, the Copenhagen Accord in 2009 and the Paris Agreement in 2015, all of them setting rules to make a transition to a low-carbon economy.

\vspace{2mm}
Climate risk has two components. The first one is \emph{physical risk} and relates to the potential economic and financial losses arising from climate-related hazards, both acute (e.g., droughts, flood, and storms) and chronic (e.g., temperature increase, sea-level rise, changes in precipitation). The second one is \emph{transition risk} and relates to the potential economic and financial losses associated with the process of adjusting towards a low-carbon economy. The financial sector usually considers three main types of transition risk: changes in consumer preferences, changes in technology, and changes in policy. Climate risk thus has a clear impact (negative or positive) on firms, industrial sectors and ultimately on state finances and household savings. This is the reason why assessing transition risk is becoming increasingly important in all parts of the economy. In the financial industry whose role will be to finance this low-carbon transition while ensuring the stability of the system, it is particularly important. There is thus a fundamental need for studying the link between transition risk and credit risk. \\

In this work, we study how the introduction of \textcolor{black}{a carbon price in the economy} could propagate in a bank credit portfolio. Since the Paris climate agreement in 2015, a few papers studying climate-related financial aspects of transition risk have emerged.
Battiston and Monasterolo~\cite{battiston2019climate} deal with transition risk assessment in sovereign bonds’ portfolio,
\red{while the white paper~\cite{allen2020climate} authored by Banque de France in 2020 focuses on corporate credit assessment.
In particular, the latter was used by
several French banks during the 2020-2021 climate stress test organized 
by the French Prudential Supervision and Resolution Authority.} 
This paper provides a general methodology to build credit metrics from transition scenarios.
For a given transition scenario (e.g., less than $2^\circ \rm C$ in 2050), the authors obtain both \textcolor{black}{a carbon price} and a gross domestic product trajectories,
which are then used in static general equilibrium models for the generation of macroeconomic variables and of added values by sector. 
All the macroeconomic trajectories obtained are then used to stress credit portfolios. \textcolor{black}{The methodology presented in this paper is extended in our work to a dynamic stochastic framework. Furthermore, as our aim is to develop an end-by-end methodology starting from climate transition scenarios to the credit portfolio loss, we replace their dividend discount model for firm valuation by a discounted cash flows one, their statistical approach for modelling default by structural risk model. All these adaptations allow us to link more smoothly the different steps.} 
Cartellier~\cite{cartellier2022climate} discusses
methodologies and approaches used by banks and scholars in climate stress testing. Garnier, Gaudemet and Gruz~\cite{garnier2021cerm}
as well as  Gaudemet, Deshamps, and Vinciguerra~\cite{gaudemet2021cerm} propose two models. The first one called CERM (Climate Extended Risk Model) is a model based on the Merton one with a multidimensional Gaussian systemic factor, where the transition risk is diffused to the credit risk by the factor loadings defined as the correlations between the systematic risk factors and the assets. The second one introduces a climate-economic model to calibrate the model of the former. 
Other works, such as~\cite{bourgey2021bridging} or~\cite{bouchet2020credit}, take the economic and capital structure of the firm into account in measuring carbon risk. 
In particular, \cite{bourgey2021bridging} derives the firm value by using the Discounted Cash Flows methodology on cash flows that are affected by the firm's transition policy, while~\cite{bouchet2020credit} directly affects the firm value by a shock depending on the ratio between carbon cost and EBITDA. 
Moreover, Le Guenedal and Tankov~\cite{leguenedal2022climate}
use a structural model for pricing bonds issued by a company subject to climate transition risk and, in particular, take into account the transition scenario uncertainty. Finally, Livieri, Radi and Smaniotto~\cite{livieri2023pricing} use a Jump-Diffusion credit risk model where the downward jumps  describe green policies taken by firms, to price defaultable coupon
bonds and Credit Default Swaps.

The goal of the present work is to study how \textcolor{black}{a carbon price} spreads in a credit portfolio. In a first step, we build a  stochastic and multisectoral model where we introduce \textcolor{black}{greenhouse gases (GHG) emissions costs which are the product of carbon prices, provided
by the NGFS transition scenarios, and of GHG emissions from sectoral households’ consumption
or firms’ production/consumption.} This model helps us analyze the impact of \textcolor{black}{a carbon price} on sectoral production by firms and on sectoral consumption by households. We obtain that at the market equilibrium, the macroeconomic problem is reduced to a non-linear system of output and consumption. Moreover, when the households' utility function is logarithmic in consumption, output and consumption are uniquely defined and precisely described by productivity, \textcolor{black}{the carbon price} and the model parameters. Then, for each sector, we can determine labor and intermediary inputs using the relationship of the latter with output and consumption. The sectoral structure also allows us to quantify the interactions between sectors both in terms of productivity and \textcolor{black}{carbon price}. The model we build in this first step is close to the one developed by Golosov and co-authors in~\cite{golosov2014optimal}. 
However, there are two main differences. Firstly, they obtain an optimal path for their endogenous \textcolor{black}{carbon taxes} while, in our case, \textcolor{black}{the carbon price is} exogenous. 
Secondly, the sectors in their model are allocated between sectors related to energy and a single sector representing the rest of the economy, while our model allows for any type of sectoral organization provided that a proper calibration of the involved parameters can be performed. In addition, our model is also close to the multisectoral model proposed by Devulder and Lisack in~\cite{devulder2020carbon}, with the difference that ours is dynamic and stochastic, and that we appeal to a Cobb Douglas production function instead of a Constant Elasticity of Substitution (CES) one. Finally, the model developed in this first step also differs from the REMIND model described in~\cite{reis2013terminal} as (1) it is a stochastic multisectoral model and (2) the productivity is exogenous.

In a second step, we define the firm value by using the discounted cash flows methodology ~\cite{kruschwitz2020stochastic}. We assume, as mentioned/admitted in the literature, that the cash flow growth is a linear function of the (sectoral) consumption growth. This allows us to describe the firm's cash flows and firm value as functions of productivity and \textcolor{black}{carbon price}. Then, by assuming that the noise term in the productivity is small, we obtain a closed-form formula of the firm value. The results show that the distribution of the firm value is distorted and shifts to the left when \textcolor{black}{the carbon price} increases.

In a third step, we use the firm value in  structural credit risk model.  
For different climate transition scenarios, 
we then calculate the evolution of the annual default probability, the expected loss, and the unexpected loss of a credit portfolio.  
This is close to the analyses in~\cite{garnier2021cerm} and~\cite{bourgey2021bridging}. 
However, \cite{garnier2021cerm} relies on the Vasicek-Merton model with a centered Gaussian systemic factor, while we appeal to a microeconomic definition of the firm value as in~\cite{bourgey2021bridging}. 
Contrary to~\cite{bourgey2021bridging}, (1) we emphasize how firms are affected by macroeconomic factors (e.g., productivity and taxes processes) but do not allow them to optimize their transition strategy, and (2) besides discussing the impacts of \textcolor{black}{a carbon price} on the probability of default, we also investigate their impacts on losses.
We finally introduce an indicator to describe the  sensitivity of the (un)expected loss of a portfolio to \textcolor{black}{a carbon price}. This allows us to see how the above-mentioned risk measures would vary, should we deviate from \textcolor{black}{a carbon price} given by our supposedly deterministic scenarios.

The paper is organized as follows.
In Section~\ref{climeco}, we build a stochastic multisectoral model and analyze how the sectors, grouped in level of GHG emissions, change when one introduces \textcolor{black}{a carbon price}.
In Section~\ref{subse firm valuation}, we define the firm's cash flows and firm value as functions of consumption growth. In Section~\ref{sec2}, we compute and project risk measures such as probability of default, expected and unexpected losses appealing to the structural credit risk model. Finally, Section~\ref{sec3}  is devoted to the calibration of different parameters while Section~\ref{sec4} focuses on presenting and analyzing the numerical results.\\



\paragraph{Notations}
\begin{itemize}
    \item $\NN$ is the set of non-negative integers, $\NN^{*} := \NN\setminus\{0\}$, and $\mathbb{Z}$ is the set of integers.
    \item $\RR^d$ is the $d$-dimensional Euclidean space, $\RR_{+} := [0,\infty)$ and $\RR_{+}^{*} := (0,\infty)$.
    \item $\bOne := (1,\hdots,1) \in\RR^{I}$.
    \item $\RR^{n\times d}$ is the set of real-valued $n\times d$ matrices ($\RR^{n\times 1} = \RR^{n}$), $\Ir_n$ is the identity $n\times n$ matrix.
    \item $x^i$ denotes the $i$-th component of the vector $x \in \RR^d$. For all $A := (A^{ij})_{1\leq i,j\leq n}\in\RR^{n\times n}$, we denote by~$A^\top := (A^{ji})_{1\leq i,j\leq n}\in\RR^{n\times n}$ the transpose matrix.
    \item $\bigotimes$ is the Kronecker product.
    \item For a given finite set~$S$, we define~$\#S$ as its cardinal.
    \item For any $x,y\in\RR^d$, we denote the scalar product $x^\top y$, the Euclidean norm~$ | x | := \sqrt{x^\top x}$ and for a matrix~$M\in\RR^{d\times d}$, we denote
$|M|:= \sup_{a\in\RR^d, |a| \leq 1}   |Ma|$.
\item If $v = \begin{bmatrix}
           v_1 \\
           \vdots \\
           v_I
    \end{bmatrix}\in\RR^{I}$, then $\Diag(v) = \begin{bmatrix}
           v_1 & 0&\hdots&0 \\
           0&v_2 &\hdots&0 \\
           \vdots&\vdots&\ddots&\vdots \\
           0&0&\hdots& v_I
    \end{bmatrix}$.
\item $(\Omega, \mathcal{H}, \mathbb{P})$ is a complete probability space.

\item  For $p \in [1,\infty]$, ${E}$ is a finite dimensional Euclidean vector space and for a $\sigma$-field $\cH$, $\cL^p(\cH,{E})$, denotes the set of  $\cH$-measurable random variable $X$ with values in ${E}$ such that $\Vert X \Vert_{p} := \left(\EE\left[ |X|^p\right] \right)^{\frac1p}<\infty$ for $p < \infty$ and for $p = \infty$, $\Vert X \Vert_{\infty} := \mathrm{esssup}_{\omega\in\Omega} |X(\omega)| < \infty$. 
\item For a filtration $\mathbb{G}$, $p \in [1,+\infty]$ and $I\in \NN^*$, $\mathscr{L}^p_{+}(\mathbb{G},(0,\infty)^I)$ is the set of discrete-time processes that are $\mathbb{G}$-adapted valued in $(0,\infty)^I$ and  which satisfy
\begin{equation*}
    \lVert X_t \rVert_p < \infty \text{ for all } t\in\NN.
\end{equation*}
\item If $X$ and $Y$ are two random variables $\RR^d$-valued, for $x\in\RR^d$, we note $Y|X=x$ the conditional distribution of $Y$ given $X=x$, and $Y|\Ff$ the conditional distribution of 
$Y$ given the filtration~$\Ff$.
\end{itemize}

{\color{black}
\section{The problem}

We consider a bank credit portfolio composed of $N\in\NN^*$ firms in a closed economy (in other words no import and no export). We could cluster all the firms in terms of industry, GHG emissions, carbon intensities, geography, size, and credit rating, for example. However, as we are dealing with climate transition risk, we would like to classify firms by carbon intensity: firms with similar carbon intensities belong to a same homogeneous sector/group.

We thus assume $I\in\NN^*$ ($I\leq N$) homogeneous carbon emission sectors in the economy. Nevertheless, as we rarely have the individual carbon emissions/intensities, 
we assume that each company has the carbon intensity of its industry sector.
This amounts to grouping "industry sectors" into $I$  "carbon emission sectors". From now on, sectors are to be interpreted as carbon emission sectors.

\begin{definition}\label{def:subport}
We divide our portfolio into $I$ disjunct sub-portfolios $g_1, \hdots, g_I$ so that each sub-portfolio represents a single risk class and the firms in each sub-portfolio belong to a single carbon emission sectors. From now on, we denote~$\Jj$ the set of sectors with cardinal $I \in\NN^*$.  We also fix $n_i := \min{\{n\in\{1,\hdots,N\} \text{ such that }n\in g_i\}}$ for each~$i\in\Jj$. Therefore, firm~$n_i$ is a representative of the group~$i$.
\end{definition}
We would like to know how the whole portfolio loss and sub-portfolios losses would be affected should the regulator introduce a carbon price in the economy, in order to mitigate the effects of climate change. This precisely amounts to quantifying the distortion over time of credit risk measures created by the introduction of a carbon price. 
For example, if the government decides to charge firms and households GHG emissions between 2024 and 2030, a bank would like to estimate today how the probability of a company to default in 2025 is impacted. 

We thus build a dynamic, stochastic, and multisectoral economic model in which direct and indirect GHG emissions from companies as well as direct GHG emissions from households are charged. We choose a representative firm in each sector and a representative households for the whole economy. 
By observing that each firm belongs to a sector and its cash flows are a proportion of its sales. The latter are themselves a proportion of the sectoral output, we obtain the cash flows dynamics that we use to model the value of firms in an environment where GHG emissions are charged. Finally, starting from a default model in which a company defaults if its value falls below its debt, we calculate the default probability of each firm as well as the loss (and associated statistics) of the portfolio distorted by the introduction of a carbon price.
}

\section{A multisectoral model with \textcolor{black}{GHG emissions costs}} \label{climeco}

We consider a closed economy with various sectors subject to \textcolor{black}{GHG emissions}. 
In this section, our main goal is to derive the dynamics of output and consumption per sector. The setting is strongly inspired by basic classical monetary models presented in the seminal textbook by Gali~\cite{gali2015monetary}, and also by Devulder and Lisack~\cite{devulder2020carbon}, and in Miranda-Pinto and Young's sectoral model~\cite{miranda2019comparing}. 
We thus consider a discrete-time model with infinite time horizon. \textcolor{black}{The main point here is that firms provoke GHG emissions when they consume intermediary inputs from other sectors and emit GHG when they produce the output. Likewise, households emit GHG when they consume. All these emissions are charged using a carbon price dynamics.} This will allow us in particular to describe the transition process towards a decarbonized economy. 

We first consider two optimization problems: one for the representative firms and one for a representative household. We obtain first-order conditions, namely the optimal behavior of the firm and the consumer as a response to the various variables at hand. Then, relying on market clearing conditions, we derive the equations that the sectoral consumption and output processes must satisfy. Finally, in  the last section, we solve these equations by making assumptions on the values taken by the set of involved parameters.

 Each sector $i\in\Jj$ has a representative firm which produces a single good, so that we can associate sector, firm and good. The (log-)productivity process has stationary dynamics as stated in the following standing assumption.

\begin{sassumption}\label{sassump:VAR}
    We define the $\RR^I$-valued processes~$\Theta$ and~$\mathcal{A}$ which evolve  according to 
\begin{equation*}
     \left\{
     \begin{array}{rl}
     \mathcal{A}_t & = \mathcal{A}_{t-1} + \Theta_{t},\\
     \Theta_t &= \mu + \Gamma \Theta_{t-1} +  \varepsilon \cE_{t},
     \end{array}\quad\textrm{for all } t\in\NN^*,
     \right.
     \end{equation*}
     with the constants $\mu, \mathcal{A}_0 \in \RR^I$ and where the matrix~$\Gamma \in\RR^{I\times I}$ has eigenvalues all strictly less than~$1$ in absolute value, $0 < \varepsilon \le 1$ is an intensity of noise parameter that is fixed: it will be used in Section~\ref{subse firm valuation} to obtain a tractable proxy of the firm value. Moreover, $(\cE_t)_{t\in\mathbb{Z}}$ are independent and identically distributed (iid) with for $t\in\mathbb{Z}$, $\cE_{t}\sim \cN(\mathbf{0}, \Sigma)$ with~$ \Sigma \in\RR^{I\times I}$. We also have $\Theta_0 \sim \cN(\overline \mu, \varepsilon^2 \overline{\Sigma})$ 
     with~$\overline \mu:= (\Ir_{I
     } - \Gamma)^{-1} \mu$, and~$\vecc(\overline{\Sigma}) := (\Ir_{I\times I}- \Gamma\bigotimes\Gamma)^{-1} \vecc(\Sigma)$,
     where, for $M\in\RR^{d\times d}$, $\vecc(M) := [M^{11},\hdots,M^{d1},M^{21},\hdots,M^{d2},\hdots,M^{1d},\hdots,M^{dd}]^\top$. 
The processes $(\cE_{t})_{t\in\NN}$ and the random variable~$\Theta_0$ are independent.
\end{sassumption}
 Let $\mathbb{G}:=(\mathcal{G}_t)_{t\in\NN}$ with $\mathcal{G}_0 := \sigma(\Theta_0)$ and for $t\ge 1$, $\mathcal{G}_t := \sigma\left(\left\{\Theta_0, \cE_{s}: s\in(0, t]\cap\NN^*\right\}\right)$. \\

\red{The processes $\Theta^i$ and $\cA^i$
play a major role in our factor productivity model since, for any $i\in\Jj$, 
the total factor productivity of sector~$i$
is defined as 
\begin{align}\label{eq de level of techno}
    A_{t}^i := \exp{(\cA_{t}^i)},
\end{align}
so that $\Theta^i$ is the log-productivity growth and $\cA^i$ is the cumulative log-productivity growth.
In the rest of the paper, the terminology "productivity" will be used within a context that will allow the reader to understand if the term refers to $\Theta^i$, $\cA^i$, or $A^i$.}
To summarize, the log-productivity growth is a Vector Autoregressive (VAR) Process. 
The literature on VAR is rich, with detailed results and proofs in Hamilton~\cite{hamilton2020time}, or Kilian and Lütkepohl~\cite{kilian2017structural}. We provide in~\ref{app:VAR} additional results  that will be useful.

\begin{remark} \label{rem:VAR1}
    \begin{enumerate}\ 
        \item Obviously, for any $t\in\NN$, $\mathcal{A}_t = \mathcal{A}_0 + \sum_{u=1}^{t} \Theta_u $. For later use, we define 
        \begin{align}\label{de A circ}
            \cA^\circ_t := \mathcal{A}_t - \mathcal{A}_0,
        \end{align}
        and observe that $(\cA^\circ_t,\Theta_t)_{t \ge 0}$ is a Markov process.
        \item Since the eigenvalues of $\Gamma$ are all strictly less than $1$ in absolute value, $(\Theta_t)_{t\in\NN}$ is wide-sense stationary i.e. for $t,u\in\NN$, the first and the second orders moments ($\EE[\Theta_t]$ and $\EE[\Theta_t \Theta_{t+u}]$) do not depend on~$t$.
        Then, given the law of~$\Theta_0$, we have for any $t\in\NN$, $\Theta_t \sim \cN(\overline \mu,\varepsilon^2 \overline{\Sigma})$.
        \item For later use, we also observe the following: let $\cZ_0 \sim \cN(0,\overline{\Sigma})$ s.t. $\Theta_0 = \overline \mu+ \varepsilon \cZ_0$ and for  $t\ge 1$,
            $
            \cZ_{t} = \Gamma \cZ_{t-1} + \cE_t.
            $
        Then
        \begin{align}\label{eq expression Theta}
            \Theta_t = \overline \mu + \varepsilon \cZ_t \;\text{ and }\; \cZ_t \sim \cN(0,\overline{\Sigma}).
        \end{align}
    \end{enumerate}
\end{remark}
For the whole economy, we introduce a deterministic and exogenous \textcolor{black}{carbon price in euro per ton.}
It allows us to model the impact of the transition pathways on the whole economy. We note $\delta$ the \textcolor{black}{carbon price} process and we shall then assume the following setting.

\begin{sassumption}\label{sass:taxes}
\textcolor{black}{We introduce the carbon price and the carbon intensities (defined as the quantities of GHG in tons emitted for each euro of production/consumption) processes:}
\begin{enumerate}
\item Let $0 \le t_\circ < t_\star$ be given. 
The sequence $\delta$ satisfies
\begin{itemize}
    \item for $t \in [0;t_\circ]$, $\delta_t = \delta_0\in \RR_+$, namely \textcolor{black}{the carbon price} is constant;
    \item for $t \in (t_\circ,t_\star)$, $\delta_t \in \RR_+$, \textcolor{black}{the carbon price} may evolve;
    \item for $t \ge t_\star$, $\delta_t = \delta_{t_\star} \in \RR_+$, namely \textcolor{black}{the carbon price} is constant.
\end{itemize}
    {
\color{black}
\item
\noindent We also introduce carbon intensities as the sequences $\tau$, $\zeta$, and $\kappa$ being respectively $\RR_+^I$, $\RR_+^{I\times I}$, and $\RR_+^I$-deterministic processes, and representing respectively carbon intensities on firm's output, on firm's intermediary consumption, and on household’s consumption (expressed in ton of $\text{CO}_2$-equivalent per euro), and satisfying for all\\ $\mathfrak{y} \in\{\tau^1, \hdots, \tau^I, \zeta^{11}, \zeta^{12},\hdots, \zeta^{I I-1}, \zeta^{II}, \kappa^1, \hdots, \kappa^I\}$ and $t\in\NN$,
\begin{equation}\label{eq:GHG_intensity}
\mathfrak{y}_t = \left\{
\begin{array}{ll}
 & \mathfrak{y}_0 \exp{\left(g_{\mathfrak{y}, 0} \frac{1-\exp{(-\theta_\mathfrak{y} t)}}{\theta_\mathfrak{y}}\right)}\qquad \text{if } t\leq t_\star \\
 & \mathfrak{y}_0 \exp{\left(g_{\mathfrak{y}, 0} \frac{1-\exp{(-\theta_\mathfrak{y} t_\star)}}{\theta_\mathfrak{y}}\right)}
\qquad \text{else},
\end{array}
\right.
\end{equation} 

with $\mathfrak{y}_0, g_{\mathfrak{y}, 0}, \theta_\mathfrak{y} > 0$. For each~$t\in\NN$, we call $\mathfrak{y}_t\delta_t$ the \textit{emissions cost rate} at time $t$.\\
\item For each~$i\in\Jj$ and for each~$t\in\NN$,
\begin{equation}
     \delta_t \max_{i\in\Jj}{\tau^i_0} < 1.\label{eq:prod_vs_emiss}
\end{equation}
}
\end{enumerate}

\end{sassumption}

In the first item of the assumption above, we interpret~$t_\circ$ as the start of the transition and $t_\star$ as its end. 
Before the transition, the carbon \textcolor{black}{price is} constant (possibly zero). Then, at the beginning of the transition, which lasts over $(t_\circ, t_\star)$, the carbon \textcolor{black}{price} can be dynamic depending on the objectives we want to reach. 
After~$t_\star$, the \textcolor{black}{price becomes} constant again.
\textcolor{black}{The second item, inspired by the DICE model~\cite{nordhaus1993rolling, traeger20144}, means that the carbon intensity~$\mathfrak{y}$ is exogenous and decreases by a rate ($g_{\mathfrak{y}, \cdot}$) which also decreases\footnote{In fact, carbon intensities decrease in developed countries like France or the US while increase in developing/emerging countries such as India or Nigeria.}. Moreover, $\mathfrak{y}_0$ represents emissions per unit of output/consumption in the beginning,
$g_{\mathfrak{y}, 0}$ the initial decarbonization rate, and $\theta_\mathfrak{y}$ the growth rate of the decarbonization rate. In the following, we will note for all $t\in\NN$, 
\begin{equation}\label{eq:emiss cost rate}
    \dd_t := (\tau_t\delta_t,\zeta_t\delta_t,\kappa_t\delta_t).
\end{equation}}

We now describe the firm and household programs that will allow us to derive the necessary equations that must be satisfied by the output and consumption in each sector. The proposed framework assumes a representative firm in each sector which maximizes its profits by choosing, at each time and for a given productivity, the quantities of labor and intermediary inputs. This corresponds to a sequence of static problems. Then, a representative household solves a dynamic optimization problem to decide how to allocate its consumption expenditures among the different goods and hours worked and among the different sectors.

\subsection{The firm's point of view} 
\label{sec:FirmView}
Aiming to work with a simple model, we follow Gal\`i~\cite[Chapter 2]{gali2015monetary}. It then appears that the firm's problem corresponds to an optimization performed at each period, depending on the state of the world. This problem will depend, in particular, on the productivity and the \textcolor{black}{carbon price process} introduced above. Moreover, it will also depend on $P^i$ and $W^i$, two $\mathbb{G}$-adapted positive stochastic processes representing respectively the price of good $i$ and the wage paid in sector $i\in\Jj$. 
We start by considering the associated deterministic problem below, when time and randomness are fixed. 

\paragraph{Solution for the deterministic problem} 
We denote $\overline{a} \in (0,+\infty)^I$ the level of technology in each sector, $\overline{p} \in (0,\infty)^I$ the price of the goods produced by each sector, $\overline{w} \in (0,\infty)^I$ the nominal wage in each sector, $\overline{\tau} \in \RR_+^I$ and $\overline{\zeta} \in \RR_+^{I\times I}$ the \textcolor{black}{carbon intensities of} firms' production and consumption of goods, \textcolor{black}{and $\overline{\delta}$ the carbon price}.
For $i \in\Jj$,
we consider a representative firm of sector~$i$, with technology described by the production function 
\begin{align}
   \RR_+\times \RR_+^{I} \ni (n,z) \mapsto F^i_{\overline a}(n,z) = \overline{a}^i n^{\psi^i} \prod_{j\in\Jj} (z^{j})^{\llambda^{ji}} \in \RR_+,
\end{align}
where $n$ represents the number of hours of work in the sector, 
and~$z^j$ the firm's consumption of the intermediary input produced by sector~$j$. The coefficients $\psi \in (\RR_+^*)^I$ and $\llambda \in (\RR_+^*)^{I\times I}$ are elasticities with respect to the corresponding inputs. Overall, we assume a constant return to scale, namely
\begin{align}\label{eq const ret to scale}
\psi^i + \sum_{j \in \mathcal{I}} \llambda^{ji} = 1,
\qquad\text{for each }i \in \mathcal{I}.
\end{align} 
The management of firm~$i$
then solves the classical problem of profit maximization
\begin{align}\label{eq firm problem deterministic}
    \widehat{\Pi}^i_{(\overline{a},\overline{w},\overline{p},\overline{\tau},\overline{\zeta}, \overline{\delta})} := \sup_{(n,z) \in \RR_+\times \RR_+^{I}  } \Pi^i(n,z),
\end{align}
where, omitting the dependency in $(\overline{a},\overline{w},\overline{p},\overline{\tau},\overline{\zeta}, \overline{\delta})$,
\begin{equation}
   \Pi^i(n,z) := F^i_{\overline a}(n,z) \overline{p}^i - \overline{\tau}^i F^i_{\overline a}(n,z)\overline{p}^i\overline{\delta}  - \overline{w}^i n - \sum_{j \in \mathcal{I} }z^{j}\overline{p}^j + z^{j}\overline{\zeta}^{ji}\overline{p}^j\overline{\delta}.
\end{equation}

Note that $F^i_{\overline a}(n,z)\overline{p}^i$ represents the firm’s gross revenues \textcolor{black}{and $\overline{\tau}^i F^i_{\overline a}(n,z)\overline{p}^i \overline{\delta}$ represents the firm’s GHG emissions cost\footnote{\textcolor{black}{$F^i_{\overline a}(n,z)$ represents the output in volume, $F^i_{\overline a}(n,z)\overline{p}^i$ the output in value (euro), and $\overline{\tau}^i F^i_{\overline a}(n,z)\overline{p}^i$ the GHG emissions, in tons, generated to produce the output.}}, so that} $F^i_{\overline a}(n,z)(1-\overline{\tau}^i\overline{\delta})\overline{p}^i$ stands for the firm’s revenues after emissions cost. Moreover, observe that~$\overline{w}^i n$ characterizes the
firm’s total compensations and that $\sum_{j \in \mathcal{I} }z^{j}(1+\overline{\zeta}^{ji}\overline{\delta})\overline{p}^j$ is the firm’s total expenses in intermediary inputs \textcolor{black}{whose emissions are also charged. Condition~\eqref{eq:prod_vs_emiss} in Standing Assumption~\ref{sass:taxes} implies that~$\overline{\tau}^i\overline{\delta} < 1$, therefore assures that firms do not spend all the revenues from their production into GHG emissions costs.} Now, we would like to solve the optimization problem for the firms, namely determine the optimal demands~$\mathfrak{n}$ and~$\mathfrak{z}$ as functions of $(\overline{a},\overline{w},\overline{p},\overline{\tau},\overline{\zeta},\overline{\delta})$. Because we will lift these optimal quantities in a dynamical stochastic setting, we impose that they are expressed as measurable functions.
We thus introduce:
\begin{definition}\label{de admissible solution}
    An \emph{admissible solution} to problem~\eqref{eq firm problem deterministic} is a pair of measurable functions 
    \begin{equation*}
        (\mathfrak{n},\mathfrak{z}):(0,+\infty)^I\times(0,+\infty)^I\times (0,+\infty)^I \times \RR_+^I \times \RR_+^{I\times I}\times\RR_+ \rightarrow [0,+\infty)^I \times [0,+\infty)^{I\times I},
    \end{equation*}
    such that, for each sector $i$, denoting $\overline{n}:=\mathfrak{n}^i(\overline{a},\overline{w},\overline{p},\overline{\tau},\overline{\zeta}, \overline{\delta})$ and $\overline{z}:=\mathfrak{z}^{\cdot i}(\overline{a},\overline{w},\overline{p},\overline{\tau},\overline{\zeta}, \overline{\delta})$,
    \begin{align}
        F^i_{\overline a}(\overline{n},\overline{z})(1-\overline{\tau}^i\overline{\delta})\overline{p}^i - \overline{w}^i \overline n - \sum_{j \in \mathcal{I} }\overline z^{j}(1+\overline{\zeta}^{ji}\overline{\delta})\overline{p}^j = \widehat{\Pi}^i_{(\overline{a},\overline{w},\overline{p},\overline{\tau},\overline{\zeta}, \overline{\delta})},
    \end{align}
 and $F^i_{\overline a}(\overline{n},\overline{z})>0$ (non-zero production), according to~\eqref{eq firm problem deterministic}.
\end{definition}

\begin{remark}~\\
    \begin{enumerate}
        \item The solution obviously depends also on the coefficients~$\psi$ and~$\llambda$. 
        These are however fixed and we will not study the dependence of the solution with respect to them. 
        \red{\item For each sector~$i\in\Jj$, we assume a unique representative firm. Therefore, if the latter decide not to produce, then the whole sector will not produce either. In this case, as a fraction of its output is used as inputs for other sectors (goods market clearing conditions in Definition~\ref{def:market equilibrium}), those sectors will not be able to produce either. Hence the non-zero production hypothesis.}
    \end{enumerate}
\end{remark}
{\color{black}
\begin{proposition}\label{pr firm foc}
There exist admissible solutions in the sense of Definition~\ref{de admissible solution}. Any admissible solution is given by, for all $i \in \Jj$, $\mathfrak{n}^i>0$ and for all $(i,j) \in \Jj^2$,
\begin{align}\label{eq optimal demand}
   \mathfrak{z}^{ji} = 
   \frac{\llambda^{ji}}{\psi^i} \frac{\overline w^i}{(1+\overline \zeta^{ji}\overline{\delta})\overline p^j}\mathfrak{n}^i > 0.
\end{align}
Moreover, it holds that $\widehat{\Pi}^i_{(\overline{a},\overline{w},\overline{p},\overline{\tau},\overline{\zeta},\overline{\delta})} = 0 $ (according to~\eqref{eq firm problem deterministic}) and 
\begin{subequations}
\begin{align}
\displaystyle \mathfrak{n}^i &= \psi^i F^i_{\overline a}(\mathfrak{n}^i, \mathfrak{z}^{\cdot i}) \frac{(1-\overline \tau^i\overline{\delta})\overline p^i}{\overline w^i} \label{eq:Nopti-firm-inputs},\\
\displaystyle\mathfrak{z}^{ji} &= \llambda^{ji} F^i_{\overline a}(\mathfrak{n}^i, \mathfrak{z}^{\cdot i}) \frac{(1-\overline \tau^i\overline{\delta})\overline p^i}{(1+\overline \zeta^{ji}\overline{\delta})\overline p^j}.\label{eq:Zopti-firm-inputs}
\end{align}
\end{subequations}

\end{proposition}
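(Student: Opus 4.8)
The plan is to recast problem \eqref{eq firm problem deterministic} as a smooth concave program on the open positive orthant and read off its first-order conditions. First I would absorb the emissions costs into effective prices by setting, for the fixed parameters, $\tilde p^{i}:=(1-\overline\tau^{i}\overline\delta)\overline p^{i}$ (which is strictly positive since \eqref{eq:prod_vs_emiss} forces $\overline\tau^{i}\overline\delta<1$) and $\tilde q^{ji}:=(1+\overline\zeta^{ji}\overline\delta)\overline p^{j}>0$, so that the objective becomes $\Pi^{i}(n,z)=\tilde p^{i}F^{i}_{\overline a}(n,z)-\overline w^{i}n-\sum_{j\in\Jj}\tilde q^{ji}z^{j}$, matching the revenue-after-cost form recorded in Definition \ref{de admissible solution}. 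The two structural facts I would exploit are that, under the constant-returns hypothesis \eqref{eq const ret to scale}, the Cobb--Douglas map $F^{i}_{\overline a}$ is concave and positively homogeneous of degree one on $(0,\infty)^{1+I}$; hence $\Pi^{i}$ is concave on the open orthant and positively homogeneous of degree one on the whole cone.

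Next I would locate the optimizer. Any admissible solution has $F^{i}_{\overline a}>0$, and since a single vanishing input annihilates the product, this forces $n>0$ and $z^{j}>0$ for all $j$; the maximizer is therefore interior and stationary. Using $\partial_{n}F^{i}_{\overline a}=\psi^{i}F^{i}_{\overline a}/n$ and $\partial_{z^{j}}F^{i}_{\overline a}=\llambda^{ji}F^{i}_{\overline a}/z^{j}$, the equations $\partial_{n}\Pi^{i}=0$ and $\partial_{z^{j}}\Pi^{i}=0$ give exactly \eqref{eq:Nopti-firm-inputs} and \eqref{eq:Zopti-firm-inputs}, and dividing them eliminates $F^{i}_{\overline a}$ to yield the demand proportions \eqref{eq optimal demand}. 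Concavity of $\Pi^{i}$ is what lets me upgrade these stationarity relations from ``critical point'' to ``global maximizer over the interior'', so no separate second-order check is needed.

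I would then compute the optimal value by substitution. Rewriting \eqref{eq:Nopti-firm-inputs} and \eqref{eq:Zopti-firm-inputs} as $\overline w^{i}\mathfrak n^{i}=\psi^{i}\tilde p^{i}F^{i}_{\overline a}$ and $\tilde q^{ji}\mathfrak z^{ji}=\llambda^{ji}\tilde p^{i}F^{i}_{\overline a}$, summing the labor and intermediate costs and invoking \eqref{eq const ret to scale} collapses the total cost to $(\psi^{i}+\sum_{j}\llambda^{ji})\tilde p^{i}F^{i}_{\overline a}=\tilde p^{i}F^{i}_{\overline a}$, whence $\Pi^{i}=0$ and therefore $\widehat\Pi^{i}=0$. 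For existence I would insert the stationary bundle back into $F^{i}_{\overline a}$; the output level cancels by homogeneity, leaving the scalar compatibility relation $\overline a^{i}\tilde p^{i}(\psi^{i}/\overline w^{i})^{\psi^{i}}\prod_{j}(\llambda^{ji}/\tilde q^{ji})^{\llambda^{ji}}=1$, i.e. output price equals Cobb--Douglas unit cost. Under this relation the first-order system is consistent and pins down only the input proportions, leaving the scale free, which exhibits a one-parameter family of interior maximizers, all with zero profit; fixing a normalization (for instance prescribing $\mathfrak n^{i}>0$) turns these into explicit continuous, hence measurable, functions of $(\overline a,\overline w,\overline p,\overline\tau,\overline\zeta,\overline\delta)$, delivering the required admissible solution.

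The delicate point is exactly this existence/value interplay rather than the calculus. Because $\Pi^{i}$ is homogeneous of degree one, its supremum over the cone can only be $0$ or $+\infty$, and it is attained at a positive-output point precisely on the knife-edge where price equals unit cost. The real content of the argument is thus twofold: verifying that stationarity is compatible only under this price relation, so that an admissible solution can never generate strictly positive profit (which would send the supremum to $+\infty$ and destroy existence), and confirming that on this relation the value $0$ is genuinely achieved with $F^{i}_{\overline a}>0$. Concavity is the tool for the first half and the degree-one homogeneity for the second; together they also account for the scale indeterminacy of $(\mathfrak n,\mathfrak z)$ that \eqref{eq optimal demand} reflects.
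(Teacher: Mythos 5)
Your first-order-condition computation is exactly the paper's own proof: positivity of any admissible bundle (a vanishing input kills the Cobb--Douglas product, contradicting non-zero production), then stationarity of $\Pi^i$ at the interior optimum, giving \eqref{eq:Nopti-firm-inputs}--\eqref{eq:Zopti-firm-inputs} and, by taking ratios, \eqref{eq optimal demand}. Where you genuinely go beyond the paper is on the remaining claims of the statement: the paper's proof stops at the first-order conditions and never verifies the existence assertion nor $\widehat{\Pi}^i_{(\overline{a},\overline{w},\overline{p},\overline{\tau},\overline{\zeta},\overline{\delta})}=0$, whereas you obtain zero profit by summing the first-order conditions against the constant-returns identity \eqref{eq const ret to scale}, use concavity of the degree-one homogeneous objective to upgrade stationary points to global maximizers, and, most importantly, observe that by homogeneity the supremum can only be $0$ or $+\infty$, so that a maximizer with positive production exists precisely when the price-equals-unit-cost relation $\overline{a}^i(1-\overline{\tau}^i\overline{\delta})\,\overline{p}^i\left(\frac{\psi^i}{\overline{w}^i}\right)^{\psi^i}\prod_{j\in\Jj}\left(\frac{\llambda^{ji}}{(1+\overline{\zeta}^{ji}\overline{\delta})\,\overline{p}^j}\right)^{\llambda^{ji}}=1$ holds. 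This buys a precision the paper lacks: read literally, Definition~\ref{de admissible solution} requires the supremum to be attained with positive production at \emph{every} parameter value, which your knife-edge analysis shows is impossible off that relation; so the blanket ``there exist admissible solutions'' is really existence exactly on the zero-profit set of prices --- the set that the market-equilibrium Definition~\ref{def:market equilibrium} subsequently enforces, and the only place the formulas are used. In short, your argument is correct, coincides with the paper's on the characterization part, and is strictly more complete (indeed a mild sharpening of the statement) on the existence and optimal-value parts.
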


\begin{proof}
We study the optimization problem for the representative firm $i \in \Jj$.
Since $\psi^i>0$ and $\llambda^{ji} > 0$, for all $j \in \Jj$, as soon as $n=0$ or  $z^j = 0$, for some $j \in \Jj$, the production is equal to $0$. From problem~\eqref{eq firm problem deterministic}, we obtain that necessarily $n \neq 0$ and $z^j\neq 0$ for all $j$ in this case. So an admissible solution, which has non-zero production, has positive components.
\\
Setting $\overline{n}=\mathfrak{n}^i(\overline{a},\overline{w},\overline{p},\overline{\tau},\overline{\zeta},\overline{\delta})>0$ and $\overline{z}=\mathfrak{z}^{\cdot i}(\overline{a},\overline{w},\overline{p},\overline{\tau},\overline{\zeta},\overline{\delta})>0$, the optimality of $(\overline n, \overline z)$ yields
\begin{equation*}
    \partial_{n} \Pi^i(\overline n, \overline z) = 0 \text{ and for any }j\in\Jj,\quad \partial_{z^j} \Pi^i(\overline n, \overline z) = 0.
\end{equation*}
We then compute
\begin{equation*}
    \psi^i \frac{F^i_{\overline a}(\overline n, \overline z)}{\overline n}(1-\overline \tau^i\overline{\delta})\overline p^i-\overline w^i = 0
    \text{ and for any }j\in\Jj,\quad 
    \llambda^{ji} \frac{F^i_{\overline a}(\overline n, \overline z)}{\overline z^j}(1-\overline \tau^i\overline{\delta})\overline p^i- (1+\overline \zeta^{ji}\overline{\delta})\overline p^j = 0,
\end{equation*}
which leads to~\eqref{eq optimal demand}, \eqref{eq:Nopti-firm-inputs}, and~\eqref{eq:Zopti-firm-inputs}.
\end{proof}
}

\paragraph{Dynamic setting} In Section~\ref{subse equilibrium} below, we characterize the dynamics of the output and consumption processes using market equilibrium arguments. There, the optimal demand by the firm for intermediary inputs and labor is lifted to the stochastic setting where the admissible solutions then write as functions of the productivity, \textcolor{black}{carbon price}, goods/services prices, and wage processes, see Definition~\ref{def:market equilibrium}. 

\subsection{The household's point of view}
Let $(r_t)_{t\in\NN}$ be the (exogenous) deterministic interest rate, valued in~$\RR_{+}$.
At each time $t \in \NN$ and for each sector~$i \in\Jj$, in addition to the price~$P_{t}^{i}$ of the goods produced in sector~$i$ and the wage~$W_{t}^{i}$ paid in sector $i$, introduced at the beginning of Section~\ref{sec:FirmView}, we denote
\begin{itemize}
    \item $C_{t}^i$ the quantity consumed of the single good in the sector~$i$, valued in~$\RR_{+}^{*}$;
    \item $H_{t}^{i}$ the number of hours of work in sector~$i$, valued in~$\RR_{+}^{*}$.
\end{itemize}

We also introduce a time preference parameter 
$\beta \in [0,1)$
and a utility function~$U:(0,\infty)^2 \to \RR$ given, for $\varphi\geq 0$, by~$U(x, y) := \frac{x^{1-\sigma}}{1-\sigma} - \frac{y^{1+\varphi}}{1+\varphi}$
if  $\sigma \in [0, 1)\cup(1, +\infty)$ and by $U(x, y) := \log(x) - \frac{y^{1+\varphi}}{1+\varphi}$, if $\sigma = 1$. 
We also suppose that 
\begin{align}\label{eq hyp integrability P,W}
\mathfrak{P}:=\sup_{t\in\NN,i \in\Jj} \EE\left[\left(\frac{P_{t}^i}{W_{t}^i}\right)^{1+\varphi}\right]<+\infty.
\end{align}

\noindent For any
$\Cc, \Hh \in \mathscr{L}^1_{+}(\mathbb{G},(0,\infty)^I)$, we introduce the wealth process
\begin{equation}\label{eq:WealthProcess}
Q_{t} := (1 + r_{t-1}) Q_{t-1} + \sum_{i\in\Jj} W_{t}^i H_{t}^{i}- \sum_{i\in\Jj}P_{t}^i  C_{t}^i - \sum_{i\in\Jj}\kappa_{t}^{i}P_{t}^iC_{t}^i\delta_t ,
\qquad\text{for any }t\geq 0,
\end{equation}
with the convention $Q_{-1}:=0$ and $r_{-1}:=0$. 
Note that we do not indicate the dependence of~$Q$ upon~$C$ and~$H$ to alleviate the notations.\\

For~$t\in\NN$ and~$i \in\Jj$, $P_{t}^i C_{t}^i$ represents the household's consumption in the sector~$i$ and  \textcolor{black}{$\kappa_{t}^{i}P_{t}^iC_{t}^i\delta_t$ is the cost paid by households due to their emissions when they consume goods~$i$, so $\sum_{i\in\Jj}P_{t}^i  C_{t}^i(1 + \kappa_{t}^{i}\delta_t)$ is the household's total expenses.} Moreover, $W_{t}^i H_{t}^{i}$ is the household's labor income in the sector~$i$, $(1 + r_{t-1}) Q_{t-1}$ the household's capital income, and $(1 + r_{t-1}) Q_{t-1} + \sum_{i\in\Jj} W_{t}^i H_{t}^{i}$ the household's total revenue. 

We define $\mathscr{A}$ as the set of all  couples $(\Cc, \Hh)$ with~$\Cc, \Hh \in \mathscr{L}^1_{+}(\mathbb{G},(0,\infty)^I)$ such that
\begin{equation*}
\left\{
\begin{array}{ll}
 & \displaystyle \EE \left [ \sum_{i\in\Jj} \sum_{t=0}^{\infty} \beta^t |U(C_{t}^i, H_{t}^i)|\right] < \infty,\\
 &\lim_{T\uparrow\infty}\EE[Q_T|\cG_t]\geq 0,
\qquad \text{for all } t \geq 0.
\end{array}
\right.
\end{equation*} 
The second condition is a solvency
constraint that prevents it from engaging in Ponzi-type schemes. The representative household consumes the~$I$ goods of the economy and provides labor to all the sectors. For any $(\Cc, \Hh)\in\mathscr{A}$, let
\begin{equation*}
    \mathcal{J}(\Cc, \Hh) := \sum_{i\in\Jj} \mathcal{J}_i(\Cc^{i}, \Hh^{i}),
    \qquad\textrm{with}\qquad
    \mathcal{J}_i( \Cc^{i}, \Hh^{i}) := \EE \left[\sum_{t=0}^{\infty} \beta^t U(C_{t}^i, H_{t}^{i})\right], 
    \quad\textrm{for all } i\in\Jj.
\end{equation*} 
The representative household seeks to maximize its objective function by solving
\begin{equation}\label{eq:HouseholdMax}
    \max_{(\Cc, \Hh) \in \mathscr{A}}\quad 
    \mathcal{J}(\Cc, \Hh).
\end{equation}
We choose above a separable utility function as Miranda-Pinto and Young~\cite{miranda2019comparing} does, meaning that the representative household optimizes its consumption and hours of work for each sector independently but under a global budget constraint.
\noindent The following proposition provides first order conditions to~\eqref{eq:HouseholdMax}.

{\color{black}
\begin{proposition}\label{prop:HouseholdMax}
Assume that~\eqref{eq:HouseholdMax} has a solution $(C,H)\in \mathscr{A}$. Then,
for all $i,j\in\Jj$, 
the household’s optimality condition reads,
for any $t\in \NN$,
\begin{subequations}
\begin{align}
\displaystyle\frac{P_{t}^i}{W_{t}^i} & = \displaystyle \frac{1}{1 + \kappa_{t}^{i}\delta_t}  (H_{t}^i)^{-\varphi} (C_{t}^i)^{-\sigma}\label{eq:fH_first_order_1},\\
\displaystyle\frac{P_{t}^i}{P_{t}^j} & = \displaystyle \frac{1 + \kappa_{t}^{j}\delta_t}{1 + \kappa_{t}^{i}\delta_t} \left( \frac{C_{t}^i}{C_{t}^j}\right)^{-\sigma}.\label{eq:sH_first_order_1}
\end{align}
\end{subequations}
\end{proposition}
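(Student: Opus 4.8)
The plan is to derive both identities by a variational (perturbation) argument, exploiting that both conditions are \emph{intratemporal} trade-offs: they compare marginal utilities at a single date~$t$, so it suffices to perturb the optimal $(C,H)$ only in the date-$t$ variables in a way that leaves the entire wealth process~$Q$ (hence the solvency constraint $\lim_{T}\EE[Q_T\mid\cG_s]\ge 0$) unchanged. Fix the assumed optimal $(C,H)\in\mathscr{A}$, a date~$t$, a sector~$i$, and a bounded $\cG_t$-measurable direction~$\phi$. I first treat the labour--consumption trade-off in sector~$i$: perturb $H_t^i\rightsquigarrow H_t^i+\eta\phi$ and compensate with $C_t^i\rightsquigarrow C_t^i+\eta\psi$, where the $\cG_t$-measurable~$\psi$ is chosen so that the date-$t$ wealth increment in~\eqref{eq:WealthProcess} vanishes, i.e. $W_t^i\phi=P_t^i(1+\kappa_t^i\delta_t)\psi$. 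Since $Q_t$ is unchanged and all later choices are kept fixed, every $Q_s$ with $s\ge t$ is unchanged, so the perturbed pair is again admissible and the objective is differentiable in~$\eta$ at~$0$.

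Differentiating~$\mathcal{J}$ and retaining only the surviving date-$t$, sector-$i$ term gives
\[
\frac{\D}{\D\eta}\Big|_{\eta=0}\mathcal{J}
=\beta^t\,\EE\big[(C_t^i)^{-\sigma}\psi-(H_t^i)^{\varphi}\phi\big]=0,
\]
and substituting $\psi=\tfrac{W_t^i}{P_t^i(1+\kappa_t^i\delta_t)}\phi$ leaves $\EE\big[(\,(C_t^i)^{-\sigma}\tfrac{W_t^i}{P_t^i(1+\kappa_t^i\delta_t)}-(H_t^i)^{\varphi}\,)\phi\big]=0$ for every bounded $\cG_t$-measurable~$\phi$; as the bracket is $\cG_t$-measurable it vanishes a.s., which is exactly~\eqref{eq:fH_first_order_1}. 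For the cross-sector condition I instead keep labour fixed and perturb the two consumptions, $C_t^i\rightsquigarrow C_t^i+\eta\phi$ and $C_t^j\rightsquigarrow C_t^j-\eta\psi$, with~$\psi$ now enforcing $P_t^i(1+\kappa_t^i\delta_t)\phi=P_t^j(1+\kappa_t^j\delta_t)\psi$ so that $Q_t$ is again preserved; the same computation yields $(C_t^i)^{-\sigma}\phi=(C_t^j)^{-\sigma}\psi$ a.s., which rearranges to~\eqref{eq:sH_first_order_1}. Both derivations are uniform in the case $\sigma=1$ versus $\sigma\neq1$, since $\partial_x U(x,y)=x^{-\sigma}$ either way.

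The steps that require genuine care are analytic rather than structural. First, differentiating~$\mathcal{J}$ under the infinite sum and the expectation must be justified: I would dominate the difference quotients using the integrability built into~$\mathscr{A}$ (finiteness of $\EE\sum_{t}\beta^t|U(C_t^i,H_t^i)|$) together with $\mathfrak{P}<\infty$ from~\eqref{eq hyp integrability P,W}, and invoke dominated convergence. Second, and the main obstacle, is feasibility: the admissible processes are valued in~$(0,\infty)$, so $C_t^i+\eta\psi$ and $H_t^i+\eta\phi$ need not stay positive for two-sided~$\eta$ unless the perturbed quantities are bounded away from~$0$. I would circumvent this by localising~$\phi$ on events $\{C_t^i\wedge H_t^i>\epsilon\}$ on which the compensating coefficient $\tfrac{W_t^i}{P_t^i(1+\kappa_t^i\delta_t)}$ is bounded by some~$M$, on which a uniformly small~$\eta$ keeps positivity; I then derive the identity on each such event and let $\epsilon\downarrow0$, $M\uparrow\infty$ to recover it almost surely.
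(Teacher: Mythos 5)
Your proposal is correct and follows essentially the same route as the paper: budget-neutral, intratemporal perturbations of the optimal $(C,H)$ (one trading consumption against hours within sector~$i$, one trading consumption between sectors~$i$ and~$j$), followed by a first-order-condition argument over $\cG_t$-measurable directions. The only differences are technical implementation details: the paper enforces positivity by scaling the perturbation with a random $\cG_s$-measurable factor $\theta^{(i,s)}$ and extracts the condition from one-sided difference quotients with $\eta=\pm1$ and $h\to 0$, whereas you assert differentiability in $\eta$ and handle positivity by localizing on events where $C_t^i\wedge H_t^i$ is bounded away from zero before passing to the limit --- both devices serve the same purpose.
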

Note that the discrete-time processes~$C$ and~$H$ cannot hit zero by definition of~$\mathscr{A}$,
so that the quantities above are well defined.

\begin{proof}
\textcolor{black}{Suppose that $\sigma = 1$.} We first check that $\mathscr{A}$ is non empty.
Assume that, for all~$t\in\NN$ and~$i\in\Jj$, $\tilde{C}_t^i = 1$ and $\tilde{H}_t^i = \frac{P_t^i(1+\kappa_t^i\delta_t)}{W_t^i}$, then
\begin{align*}
    \EE \left [ \sum_{i\in\Jj} \sum_{t=0}^{\infty} \beta^t |U(\tilde{C}_{t}^i, \tilde{H}_{t}^i)|\right] & \le \sum_{i\in\Jj} \sum_{t=0}^{\infty} \beta^t \left( 1 + \frac{1}{1+\varphi} \EE \left [ \left(\frac{P_t^i(1+\kappa_t^i\delta_t)}{W_t^i}\right)^{1+\varphi}\right] \right).
    \\
    &\le \sum_{i\in\Jj} 
    \sum_{t=0}^{\infty} \beta^t \left(1 + \frac{\mathfrak{P}(1+\kappa_t^i\delta_t)^{1+\varphi}}{1+\varphi}  \right)<+\infty,
\end{align*}
using~\eqref{eq hyp integrability P,W}. 
We also observe that $Q$ built from $\tilde{H}, \tilde{C}$  satisfies $Q_t = 0$, for $t\in\NN$. Thus $(\tilde{H}, \tilde{C}) \in \mathscr{A}$.
Let now $(\widehat{\Cc}, \widehat{\Hh}) \in\mathscr{A}$ be such that $\displaystyle\mathcal{J}(\widehat{\Cc}, \widehat{\Hh}) = \max_{(\Cc, \Hh) \in \mathscr{A}} \mathcal{J}(\Cc, \Hh)$.
\noindent We fix $s\in\NN$ and~$i\in\Jj$. Let~$\eta = \pm1$, $0<h<1$, $A^s \in \mathcal{G}_s$, 
$\Delta^{(i,s)}:= (\bOne_{\{i=k,s=t\}})_{k\in\Jj,t\in\NN}$ and
$\theta^{(i,s)} := \frac12(1\wedge \frac{W^i_s}{P^i_s(1+\kappa^i_s\delta_s)})\hat{C}_s^i \wedge \hat{H}^i_s\wedge 1>0$.
Set 
\begin{align}
    \overline{\Cc} := \widehat{\Cc} + \eta h \theta^{(i,s)} \bOne_{A^s} \Delta^{(i,s)}
    \text{ and } 
    \overline{\Hh} := \widehat{\Hh} + \eta h \theta^{(i,s)} \bOne_{A^s} \Delta^{(i,s)} \frac{\Pp^i(1+\kappa^i\delta_s)}{\Ww^i}.
\end{align}
We observe that for $(j,t)\neq (i,s)$, $\overline{\Cc}^j_t = \widehat{\Cc}^j_t$ and $\overline{H}^j_t = \widehat{H}^j_t$ and we compute 
\begin{align*}
    \overline{\Cc}^i_s \ge \widehat{\Cc}^i_s - \theta^{(i,s)} \ge \frac12\widehat{\Cc}^i_s > 0.
\end{align*}
Similarly, we obtain $\overline{H}^i_s>0$. We also observe that $\overline{C} \le \frac32 \widehat{C}$ and $\overline{H} \le \frac32 \widehat{H}$. Finally, we have
$$\sum_{j\in\Jj} W_{t}^j \overline{H}_{t}^{j}- \sum_{j\in\Jj}P_{t}^j (1 + \kappa_{t}^{j}\delta_t) \overline{C} _{t}^j
=
\sum_{j\in\Jj} W_{t}^j \widehat{H}_{t}^{j}- \sum_{j\in\Jj}P_{t}^j (1 + \kappa_{t}^{j}\delta_t) \widehat{C} _{t}^j.
$$
This allows us to conclude that $(\overline{\Cc}, \overline{\Hh})\in\mathscr{A}$.\\

We have, by optimality of $(\widehat{\Cc}, \widehat{\Hh})$,
    \begin{equation*}
        \mathcal{J}(\widehat{\Cc}, \widehat{\Hh}) - \mathcal{J}(\overline{\Cc}, \overline{\Hh}) = \sum_{j\in\Jj} \mathcal{J}_j(\widehat{\Cc}^{j}, \widehat{\Hh}^{j}) - \sum_{j\in\Jj} \mathcal{J}_j(\overline{\Cc}^{j}, \overline{\Hh}^{j}) \geq 0.
    \end{equation*}
   
    However, for all~$(t,j) \neq (s,i)$, $\overline{\Cc}^{j}_t = \widehat{\Cc}^{j}_t$ and $\overline{\Hh}^{j}_t = \widehat{\Hh}^{j}_t$, then
    \begin{equation*}
          \EE \left[\beta^s U(\widehat{\Cc}_{s}^{i}, \widehat{\Hh}_{s}^{i})\right] - \EE  \left[\beta^s U\left(\widehat{\Cc}_{s}^{i} + \eta h \theta^{(i,s)} \bOne_{A^s}, \widehat{\Hh}_{s}^{i} + \eta h \theta^{(i,s)} \bOne_{A^s}\frac{P_{s}^{i}(1+\kappa_{s}^{i}\delta_s)}{W_{s}^{i}}\right)\right] \geq 0,
    \end{equation*}
    i.e.
    \begin{equation*}
          \frac{1}{h} \EE \left[U(\widehat{\Cc}_{s}^{i}, \widehat{\Hh}_{s}^{i}) - U\left(\widehat{\Cc}_{s}^{i} + \eta h  \theta^{(i,s)} \bOne_{A^s}, \widehat{\Hh}_{s}^{i} + \eta h \theta^{(i,s)} \bOne_{A^s} \frac{P_{s}^{i}(1+\kappa_{s}^{i}\delta_s)}{W_{s}^{i}}\right)\right] \geq 0.
    \end{equation*}
    Letting $h$ tend to $0$, we obtain
    \begin{equation*}
          \EE \left[\eta \theta^{(i,s)} \bOne_{A^s} \frac{\partial U}{\partial x}(\widehat{\Cc}_{s}^{i}, \widehat{\Hh}_{s}^{i}) + \eta \theta^{(i,s)} \bOne_{A^s} \frac{P_{s}^{i}(1+\kappa_{s}^{i}\delta_s)}{W_{s}^{i}} \frac{\partial U}{\partial y}(\widehat{\Cc}_{s}^{i}, \widehat{\Hh}_{s}^{i})\right] \geq 0.
    \end{equation*}
    Since the above holds for all $A^s \in \mathcal{G}_s$, $\eta = \pm1$ and since $\theta^{(i,s)}>0$, then
    \begin{equation*}
        \frac{\partial U}{\partial x}(\widehat{\Cc}_{s}^{i}, \widehat{\Hh}_{s}^{i}) + \frac{P_{s}^{i}(1+\kappa_{s}^{i}\delta_s)}{W_{s}^{i}} \frac{\partial U}{\partial y}(\widehat{\Cc}_{s}^{i}, \widehat{\Hh}_{s}^{i}) = 0,
    \end{equation*}
    leading to~\eqref{eq:fH_first_order_1}.\\
    For~$j\in\Jj \setminus \{i\}$ and
    $
    \theta^{(i,j,s)} := \frac12 \left(1 \wedge \frac{\Pp^j_s(1+\kappa_s^j\delta_s)}{\Pp^i_s(1+\kappa_s^i\delta_s)}\right)(1 \wedge \widehat{C}^i_s \wedge \widehat{C}^j_s)>0
    $, 
    setting now
     $$\overline{\Cc} := \widehat{\Cc} + \eta h \bOne_{A^s} \theta^{(i,j,s)}\left(\Delta^{(i,s)} -  \Delta^{(j,s)} \frac{\Pp^i(1+\kappa^i\delta_s)}{\Pp^j(1+\kappa^j\delta_s)}\right) 
     \quad\text{and}\quad \overline{\Hh} := \widehat{\Hh},$$ and using similar arguments as above, we obtain~\eqref{eq:sH_first_order_1}.

\noindent \textcolor{black}{When $\sigma \neq 1$}, we carry out an analogous proof.
\end{proof}
}

\subsection{Market equilibrium}
\label{subse equilibrium}
\noindent We now consider that firms and households interact on the labor and goods markets.

\begin{definition} \label{def:market equilibrium} A \emph{market equilibrium} is a $\mathbb{G}$-adapted positive random process $(\overline{W},\overline{P})$ such that
    \begin{enumerate}
        \item Condition~\eqref{eq hyp integrability P,W} holds true for $(\overline{W},\overline{P})$.
        \item The goods' and labor's market clearing conditions are met, namely, for each sector~$i\in\Jj$, and for all $t \in \NN$,
        \begin{align}\label{eq market clearing}
            Y_{t}^i = C_{t}^i + \sum_{j\in\Jj} Z_{t}^{ij}
            \qquad\text{and}\qquad  H_{t}^{i}= N_{t}^i, 
        \end{align}
        where $N_t = \overline{n}(A_t,\overline{W}_t,\overline{P}_t,\kappa_t,\zeta_t, \delta_t)$, $Z_t = \overline{z}(A_t,\overline{W}_t,\overline{P}_t,\kappa_t,\zeta_t, \delta_t)$, $Y =F_A(N,Z)$ with $(\overline{n},\overline{z})$ an admissible solution~\eqref{eq:Nopti-firm-inputs}-\eqref{eq:Zopti-firm-inputs} to~\eqref{eq firm problem deterministic}, from Proposition~\ref{pr firm foc} while $C$ and $H$ satisfy~\eqref{eq:fH_first_order_1}-\eqref{eq:sH_first_order_1} for $(\overline{W},\overline{P})$.
    \end{enumerate}
\end{definition}

In the case of the existence of a market equilibrium, we can derive equations that must be satisfied by the output process~$Y$ and the consumption process~$C$.

\begin{proposition} \label{prop:equilibrium}
Assume that there exists a market equilibrium as in Definition~\ref{def:market equilibrium}. Then, for $t\in\NN$, $i\in\Jj$, it must hold that

\begin{equation}
\left\{
    \begin{array}{rl}
        Y_{t}^i & = \displaystyle C_{t}^i + \sum_{j \in\Jj}  \Lambda^{ij}(\dd_t) \left( \frac{C_{t}^j}{C_{t}^i}\right)^{-\sigma} Y_{t}^j,\\
        Y_{t}^i &= \displaystyle A^i_{t} \left[\Psi^i(\dd_t)(C_{t}^i)^{-\sigma} Y_{t}^i\right]^{\frac{\psi^i}{1+\varphi}} \prod_{j\in\Jj}  \left[\Lambda^{ji}(\dd_t) \left( \frac{C_{t}^i}{C_{t}^j}\right)^{-\sigma} Y_{t}^i \right]^{\llambda^{ji}},
    \end{array}
\right.\label{eq:Y_it_C_it}
\end{equation}
where $\dd_t$ is defined in~\eqref{eq:emiss cost rate} and where~$\Psi$ 
and~$\Lambda$ are given, 
{\color{black}
for 
$(\overline\tau, \overline\zeta, \overline\kappa, \overline\delta) \in \RR_+^I\times \RR_+^{I\times I} \times \RR_+^I \times \RR_+$, by
\begin{align}
    \Psi(\overline\dd) & := \left( \psi^i \frac{1 - \overline \tau^i\overline\delta}{1 + \overline \kappa^i\overline\delta}
      \right)_{i \in \Jj}\,\label{eq:Psi},\\
\Lambda(\overline \dd) & := \left( \llambda^{ji} \frac{1 - \overline \tau^{i}\overline\delta}{1+\overline \zeta_t^{ji}\overline\delta} \frac{1 + \overline \kappa^{j}\overline\delta}{1 + \overline \kappa^{i}\overline\delta}
    \right)_{j,i \in \Jj}\label{eq:Lambda},
\end{align} 
with $\overline{\dd} := (\overline\tau\overline\delta, \overline\zeta\overline\delta, \overline\kappa\overline\delta)$.
}
\end{proposition}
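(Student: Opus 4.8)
The plan is to eliminate the price and wage processes $\overline{P}$ and $\overline{W}$ from the firm's and household's first-order conditions, by invoking the market clearing conditions of Definition~\ref{def:market equilibrium}, so that only the output and consumption processes survive. The two lines of~\eqref{eq:Y_it_C_it} come from two distinct sources: the first from the goods' market clearing condition, and the second from substituting the optimal inputs into the production function $Y_t^i = F^i_{A}(N_t^i, Z_t^{\cdot i})$. Throughout, I would use that at equilibrium $Y_t^j = F^j_A(N_t^j, Z_t^{\cdot j})$, $H_t^i = N_t^i$, and that the firm inputs $N, Z$ are given by Proposition~\ref{pr firm foc} evaluated at $(\overline{W},\overline{P})$.

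For the first equation, I would start from the goods' market clearing condition $Y_t^i = C_t^i + \sum_{j\in\Jj} Z_t^{ij}$ in~\eqref{eq market clearing}, where $Z_t^{ij}$ is the quantity of good $i$ used by firm $j$. Rewriting this quantity through the firm's optimal demand~\eqref{eq:Zopti-firm-inputs} (with producing and consuming sectors suitably matched) gives $Z_t^{ij} = \llambda^{ij}Y_t^j(1-\tau_t^j\delta_t)P_t^j / \big[(1+\zeta_t^{ij}\delta_t)P_t^i\big]$. The only non-quantity term left is the relative price $P_t^j/P_t^i$, which I would remove using the household's relative-price condition~\eqref{eq:sH_first_order_1}. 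After this substitution, the emissions-cost factors regroup precisely into $\Lambda^{ij}(\dd_t)$ as defined in~\eqref{eq:Lambda}, and the consumption ratio $(C_t^j/C_t^i)^{-\sigma}$ appears; summing over $j$ then yields the first line of~\eqref{eq:Y_it_C_it}.

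For the second equation, I would rewrite each factor of the production function $Y_t^i = A_t^i (N_t^i)^{\psi^i}\prod_{j\in\Jj}(Z_t^{ji})^{\llambda^{ji}}$. The intermediary inputs $Z_t^{ji}$ are handled exactly as above, producing the factor $\big[\Lambda^{ji}(\dd_t)(C_t^i/C_t^j)^{-\sigma}Y_t^i\big]^{\llambda^{ji}}$. For the labor term, I would combine the firm's labor condition~\eqref{eq:Nopti-firm-inputs}, namely $N_t^i = \psi^i Y_t^i (1-\tau_t^i\delta_t)P_t^i/W_t^i$, with the household's labor condition~\eqref{eq:fH_first_order_1} after imposing the labor-market clearing $H_t^i = N_t^i$; this eliminates the ratio $W_t^i/P_t^i$ and leaves $(N_t^i)^{1+\varphi} = \Psi^i(\dd_t)(C_t^i)^{-\sigma}Y_t^i$, with $\Psi^i$ as in~\eqref{eq:Psi}. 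Raising this to the power $\psi^i/(1+\varphi)$ gives the labor factor in the second line of~\eqref{eq:Y_it_C_it}, completing the derivation.

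The derivation is computational rather than conceptual, so the main obstacle is bookkeeping: one must carefully track which index of $Z$ and of $\Lambda$ refers to the producing sector and which to the consuming sector (so that $Z_t^{ij}$ matches the correct entry of $\Lambda$, not its transpose), and verify that the three emissions-cost ratios $1-\tau\delta$, $1+\zeta\delta$, and $1+\kappa\delta$—some arising from the firm's problem and some from the household's relative-price condition—recombine exactly into the definitions~\eqref{eq:Psi} and~\eqref{eq:Lambda}. No existence or regularity issue arises, since the proposition only asserts that \emph{any} equilibrium must satisfy these relations; the strict positivity of $C$, $H$, $N$ and of the prices (guaranteed by Definition~\ref{def:market equilibrium} and Proposition~\ref{pr firm foc}) ensures every ratio manipulated above is well defined.
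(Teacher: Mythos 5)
Your proposal is correct and takes essentially the same route as the paper's proof: it eliminates prices and wages by combining the firm's optimality conditions~\eqref{eq:Nopti-firm-inputs}--\eqref{eq:Zopti-firm-inputs} with the household's conditions~\eqref{eq:fH_first_order_1}--\eqref{eq:sH_first_order_1}, uses labor-market clearing to obtain $(N_t^i)^{1+\varphi}=\Psi^i(\dd_t)(C_t^i)^{-\sigma}Y_t^i$, and then substitutes into the goods-market clearing identity and the production function. Even the index bookkeeping you highlight (the expression for $Z_t^{ij}$ versus $Z_t^{ji}$, matching $\Lambda^{ij}$ rather than its transpose) is exactly how the paper's proof proceeds.
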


\begin{proof}
{\color{black}
Let $i,j\in\Jj$ and $t\in\NN$.
Combining Proposition~\ref{pr firm foc} and Proposition~\ref{prop:HouseholdMax}, we obtain
\begin{align}\label{eq Z equilibrium}
Z_{t}^{ji} = \llambda^{ji} \frac{1 - \tau_{t}^{i}\delta_t}{1+\zeta^{ji}_t\delta_t} \frac{1 + \kappa_{t}^{j}\delta_t}{1 + \kappa_{t}^{i}\delta_t} \left( \frac{C_{t}^i}{C_{t}^j}\right)^{-\sigma} Y_{t}^i.
\end{align}
From Propositions~\ref{pr firm foc} and~\ref{prop:HouseholdMax} again, we also have
\begin{align*}
    N_{t}^i
    = \psi^i \frac{1 - \tau_{t}^{i}\delta_t}{1 + \kappa_{t}^{i}\delta_t}  (H_{t}^{i})^{-\varphi} (C_{t}^i)^{-\sigma} Y_{t}^i.
\end{align*}
The labor market clearing condition in Definition~\ref{def:market equilibrium} yields
\begin{equation}\label{eq N equilibrium}
    N_{t}^i = \left[ \psi^i \frac{1 - \tau_{t}^{i}\delta_t}{1 + \kappa_{t}^{i}\delta_t} (C_{t}^i)^{-\sigma} Y_{t}^i\right]^{\frac{1}{1+\varphi}}.
\end{equation}
By inserting  the expression of $N_{t}^i$ given in~\eqref{eq N equilibrium}and~$Z_{t}^{ji}$ given in~\eqref{eq Z equilibrium} into the production function $F$, we obtain the second equation in~\eqref{eq:Y_it_C_it}.
}
The first equation in~\eqref{eq:Y_it_C_it} is obtained by combining the market clearing condition with~\eqref{eq Z equilibrium} (at index $(i,j)$ instead of $(j,i)$).
\end{proof}

\subsection{Output and consumption dynamics and associated growth}

For each time~$t\in\NN$ and noise realization, the system~\eqref{eq:Y_it_C_it} is nonlinear with~$2I$ equations and~$2I$ variables,
and its well-posedness is hence relatively involved. 
Moreover, it is computationally heavy to solve this system for each \textcolor{black}{carbon price} trajectory and productivity scenario. We thus consider a special value for the parameter $\sigma$ which allows to derive a unique solution in closed form. From now on, and following~\cite[page 63]{golosov2014optimal}, we assume that  $\sigma = 1$, namely $U(x, y) := \log(x) - \frac{y^{1+\varphi}}{1+\varphi}$
on $(0,\infty)^2$. 

\begin{theorem} \label{eq:output_cons}
Assume that 
\begin{enumerate}
    \item $\sigma = 1$,
    \item $\Ir_I - \llambda$ is not singular,
    \item $\Ir_I - \Ll(\dd_t)^\top$ is not singular for all $t \ge 0$.
\end{enumerate}
Then for all $t\in\NN$, there exists an unique $(\Cc_{t},\Yy_{t})$ satisfying~\eqref{eq:Y_it_C_it}.
Moreover, with $\ee_{t}^i := \frac{Y^i_t}{C^i_t}$ for $i\in\Jj$, we have 
\begin{align}
    \ee_{t} = \ee(\dd_t) := (\Ir_I-\Ll(\dd_t)^\top)^{-1} \bOne, \label{eq:ee}
\end{align}
and using $\cB_t = (\cB^i_t)_{i \in \Jj} := \left[ \cA_{t}^{i} + v^i(\dd_t)  \right]_{i\in\Jj}$ with
\begin{equation} 
    v^i(\dd_t) := \log\left((\ee_{t}^{i})^{-\frac{\varphi\psi^i}{1+\varphi}}   \left(\Psi^i(\dd_t) \right)^{\frac{\psi^i}{1+\varphi}} \prod_{j\in\Jj}  \left(\Lambda^{ji}(\dd_t)\right)^{\llambda^{ji}} \right),\label{eq:taxfuncC}
\end{equation}
and $\dd_t$ defined in\eqref{eq:emiss cost rate}. We obtain
\begin{equation}
    \Cc_{t} = \exp{\left( (\Ir_I-\llambda)^{-1}{\cB_t}\right)}.\label{eq:consumption}
\end{equation} 
\end{theorem}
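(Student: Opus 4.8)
The plan is to decouple the $2I$ nonlinear equations in~\eqref{eq:Y_it_C_it} through the change of variables $\ee^i_t := Y^i_t/C^i_t$, which is licit because we seek solutions in $(0,\infty)^I\times(0,\infty)^I$. Under $\sigma=1$ the first line of~\eqref{eq:Y_it_C_it} becomes \emph{linear} in $\ee_t$ and the second becomes \emph{log-linear} in $\log C_t$, so each can be inverted separately. Throughout I would fix $t\in\NN$ and a realisation of the noise, so that $\cA_t$ and $\dd_t$ are fixed and the entries of $\Ll(\dd_t)$ and $\Psi(\dd_t)$ are fixed \emph{positive} numbers (positivity of the entries follows from $\llambda,\psi>0$ together with $1-\tau^i_t\delta_t>0$, the latter guaranteed by~\eqref{eq:prod_vs_emiss}); this also makes the logarithms below well defined.

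\emph{First equation.} Setting $\sigma=1$ and using $\big(C^j_t/C^i_t\big)^{-1}Y^j_t = C^i_t\,\ee^j_t$, the first line of~\eqref{eq:Y_it_C_it} reads $Y^i_t = C^i_t\big(1+\sum_{j}\Lambda^{ij}(\dd_t)\ee^j_t\big)$; dividing by $C^i_t>0$ yields $\ee^i_t=1+\sum_j\Lambda^{ij}(\dd_t)\ee^j_t$, i.e.\ the linear system $(\Ir_I-\Ll(\dd_t)^\top)\ee_t=\bOne$, whose unique solution is~\eqref{eq:ee} by Assumption~(3). The delicate point is that this solution must be \emph{positive}, as the change of variables requires. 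For this I would prove $\rho(\Ll(\dd_t))<1$: conjugating by $D:=\Diag(\bOne+\delta_t\kappa_t)$ cancels the factor $\tfrac{1+\kappa^j_t\delta_t}{1+\kappa^i_t\delta_t}$, so $\Ll(\dd_t)$ is similar to the nonnegative matrix with entries $\llambda^{ji}\tfrac{1-\tau^i_t\delta_t}{1+\zeta^{ji}_t\delta_t}\le\llambda^{ji}$; by monotonicity of the spectral radius on nonnegative matrices and the constant-returns-to-scale identity~\eqref{eq const ret to scale} (which makes the relevant row/column sums of $\llambda$ equal to $1-\psi^i<1$, whence $\rho(\llambda)<1$), one gets $\rho(\Ll(\dd_t))\le\rho(\llambda)<1$. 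The Neumann series $(\Ir_I-\Ll(\dd_t)^\top)^{-1}=\sum_{k\ge0}(\Ll(\dd_t)^\top)^k$ then has nonnegative entries, so $\ee_t\ge\bOne>0$.

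\emph{Second equation.} Substituting $Y^i_t=\ee^i_t C^i_t$ with $\sigma=1$ into the second line of~\eqref{eq:Y_it_C_it} and taking logarithms gives, with $c^i:=\log C^i_t$,
\[
\log\ee^i_t+c^i=\cA^i_t+\tfrac{\psi^i}{1+\varphi}\big(\log\Psi^i(\dd_t)+\log\ee^i_t\big)+\sum_{j}\llambda^{ji}\big(\log\Lambda^{ji}(\dd_t)+c^j+\log\ee^i_t\big).
\]
Using $\sum_j\llambda^{ji}=1-\psi^i$ from~\eqref{eq const ret to scale} to isolate $c^i$ on the left, the $\log\ee^i_t$ contributions combine into the single term $-\tfrac{\varphi\psi^i}{1+\varphi}\log\ee^i_t$, which together with $\tfrac{\psi^i}{1+\varphi}\log\Psi^i(\dd_t)+\sum_j\llambda^{ji}\log\Lambda^{ji}(\dd_t)$ is exactly $v^i(\dd_t)$ from~\eqref{eq:taxfuncC}. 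One is then left with $c^i=\cA^i_t+v^i(\dd_t)+\sum_j\llambda^{ji}c^j=\cB^i_t+\sum_j\llambda^{ji}c^j$, i.e.\ $(\Ir_I-\llambda)c=\cB_t$. By Assumption~(2) this has the unique solution $c=(\Ir_I-\llambda)^{-1}\cB_t$, giving~\eqref{eq:consumption} and $C_t\in(0,\infty)^I$ automatically; then $Y^i_t=\ee^i_t C^i_t>0$.

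\emph{Conclusion and main obstacle.} These two steps set up a bijection between positive solutions $(C_t,Y_t)$ of~\eqref{eq:Y_it_C_it} and pairs $(\ee_t,c)$ solving the two linear systems: any positive solution yields such $(\ee_t,c)$ via $\ee^i_t=Y^i_t/C^i_t$ and $c^i=\log C^i_t$, while the explicit pair just constructed satisfies~\eqref{eq:Y_it_C_it} because every manipulation above is reversible for positive quantities. Uniqueness of each linear solve then gives uniqueness of $(C_t,Y_t)$, and existence follows from the explicit construction. I expect the main obstacle to be precisely the positivity of $\ee_t$ (equivalently of $Y_t$): the linear algebra only delivers a \emph{unique} $\ee_t$, and it is the spectral bound $\rho(\Ll(\dd_t))<1$—resting on constant returns to scale and on $1-\tau^i_t\delta_t>0$—that upgrades it to an economically admissible equilibrium; the log-linearisation of the second equation is then routine bookkeeping.
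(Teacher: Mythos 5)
Your proof is correct and follows essentially the same route as the paper's: divide the first equation of~\eqref{eq:Y_it_C_it} by $C^i_t$ to obtain the linear system $(\Ir_I-\Ll(\dd_t)^\top)\ee_t=\bOne$, then substitute $Y^i_t=\ee^i_t C^i_t$ into the second equation, take logarithms, use~\eqref{eq const ret to scale} to collect the exponents, and invert $\Ir_I-\llambda$ to get~\eqref{eq:consumption}. The one genuine addition is your positivity argument for $\ee_t$: the paper's proof solves the linear system and then takes powers and logarithms of $\ee^i_t$ in~\eqref{eq:taxfuncC} without ever checking $\ee^i_t>0$, whereas you establish $\rho(\Ll(\dd_t))\le\rho(\llambda)<1$ (via the diagonal similarity $\Diag(\bOne+\delta_t\kappa_t)$ cancelling the $\kappa$-factors, entrywise domination by $\llambda$ using $0<1-\tau^i_t\delta_t\le 1$ from~\eqref{eq:prod_vs_emiss}, and the column sums $1-\psi^i<1$ coming from constant returns to scale), so that the Neumann series gives $\ee_t\ge\bOne>0$. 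This closes a small gap left implicit in the published proof (well-definedness of $v^i(\dd_t)$ and economic admissibility of the equilibrium), makes assumption (3) of the theorem automatic rather than hypothesised, and your explicit bijection between positive solutions $(C_t,Y_t)$ and solutions $(\ee_t,\log C_t)$ of the two linear systems spells out the existence-and-uniqueness step that the paper treats as immediate.
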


\begin{proof}
     Let~$t\in \NN$. When $\sigma = 1$, the system~\eqref{eq:Y_it_C_it} becomes for all~$i\in\Jj$,
    \begin{align}\label{eq equilibrium system}
\left\{
    \begin{array}{rl}
        Y_{t}^i & = \displaystyle C_{t}^i + \sum_{j \in\Jj}  \Lambda^{ij}(\dd_t) \left( \frac{C_{t}^i}{C_{t}^j}\right) Y_{t}^j,\\
        Y_{t}^i &= \displaystyle A_{t}^{i} \left[\Psi^i(\dd_t) \ee ^i_t\right]^{\frac{\psi^i}{1+\varphi}} \prod_{j\in\Jj}  \left[\Lambda^{ji}(\dd_t) C_{t}^j \ee_{t}^i \right]^{\llambda^{ji}}.
    \end{array}
\right.
\end{align}
For any~$i\in\Jj$,  dividing the first equation in~\eqref{eq equilibrium system} by $C_{t}^i$, we get
\begin{equation*}
    \ee^i_t = 1 + \sum_{j \in\Jj}  \Lambda^{ij}(\dd_t) \ee^j_t,
\end{equation*}
which corresponds to~\eqref{eq:ee},
thanks to~\eqref{eq const ret to scale}.
Using $\sum_{j\in\Jj} \llambda^{ji} = 1-\psi^{i}$ and $Y_{t}^i = \ee_{t}^{i} C_{t}^i$ in the second equation in~\eqref{eq equilibrium system}, we compute
\begin{equation*}
        C_{t}^i = A_{t}^{i} (\ee_{t}^{i})^{-\frac{\varphi \psi^i}{1+\varphi} }   \left[\Psi^i(\dd_t)\right]^{\frac{\psi^i}{1+\varphi}} \prod_{j\in\Jj}  \left[\Lambda^{ji}(\dd_t)\right]^{\llambda^{ji}} 
        \prod_{j\in\Jj} (C_{t}^j)^{\llambda^{ji}}.
\end{equation*}

\noindent Applying log and writing in matrix form, we obtain 
$(\Ir_I-\llambda) \log(\Cc_{t}) = {\cB_t}$,
implying~\eqref{eq:consumption}.
\end{proof}

\begin{remark}
    The matrix $\llambda$ is generally not diagonal, and therefore, from~\eqref{eq:consumption}, the sectors (in output and in consumption) are linked to each other through their respective productivity process. Similarly, charging carbon emissions of one sector affects all the other ones.
\end{remark}

\begin{remark}
For any~$t\in\NN$, $i\in\Jj$, we observe that 
\begin{align}\label{eq de v}
\cB^i_t = \cA^i_t + v^i(\dd_t),
\end{align}
where $v^i(\cdot)$ is defined using~\eqref{eq:taxfuncC}.
Namely, $\cB_t$ is the sum of the (random) productivity term and \red{a term involving the carbon intensities as well as the carbon price.} The economy is therefore subject to fluctuations of two different natures: \textit{the first one comes from the productivity process while the second one comes from the carbon price process.}
\end{remark}
We now look at the dynamics of production and consumption growth.
\begin{theorem}\label{prop:deltaY_C}
For any $t\in\NN^*$, 
let $\Delta^{\varpi}_t := \log\left(\varpi_{t}\right) - \log\left(\varpi_{t-1}\right)$,
for $\varpi \in \{\Yy, \Cc\}$.
Then, with the same assumptions as in Theorem~\ref{eq:output_cons}, 
\begin{equation}
\Delta^{\varpi}_t \sim \cN\left({m}_{t}^{\varpi}, \widehat{\Sigma} \right),
\qquad\text{for }\varpi \in \{\Yy, \Cc\},\label{eq:OutputLaw}
\end{equation}
with 
{\color{black}
\begin{align}
    \widehat{\Sigma} &= \varepsilon^2 (\Ir_I-\llambda)^{-1}\overline{\Sigma}(\Ir_I-\llambda^\top)^{-1},\\
    m^C_t &= (I-\llambda)^{-1}\left[\overline \mu + v(\dd_t)-v(\dd_{t-1}) \right],\\
    m^Y_t &= (I-\llambda)^{-1}\left[\overline \mu + \mfv(\dd_t)-\mfv(\dd_{t-1}) \right],
\end{align}
and 
\begin{equation}
    \mfv(\dd_t) :=  v(\dd_t) + (\Ir_I-\llambda)\log(\ee(\dd_t)), \label{eq:taxfuncY}
\end{equation}
}
 and where $\overline \mu$ and $\varepsilon^2\overline{\Sigma}$ 
are the mean and the variance of the stationary process~$\Theta$ (Remark~\ref{rem:VAR1}), 
$v$ is defined in~\eqref{eq:taxfuncC} and~$\ee$ in~\eqref{eq:ee}.
\end{theorem}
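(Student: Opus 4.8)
The plan is to read the law of each growth increment directly off the closed-form solution of Theorem~\ref{eq:output_cons}, exploiting that the only source of randomness is the stationary Gaussian increment $\Theta_t$, while every carbon-related quantity is deterministic. The entire argument reduces to recognising an affine image of a Gaussian vector.

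First I would handle consumption. From~\eqref{eq:consumption} together with~\eqref{eq de v} we have $\log(\Cc_t) = (\Ir_I-\llambda)^{-1}(\cA_t + v(\dd_t))$, so differencing and using $\cA_t-\cA_{t-1}=\Theta_t$ from Standing Assumption~\ref{sassump:VAR} gives
\begin{equation*}
\Delta^{\Cc}_t = (\Ir_I-\llambda)^{-1}\bigl[\Theta_t + v(\dd_t) - v(\dd_{t-1})\bigr].
\end{equation*}
Because $\delta,\tau,\zeta,\kappa$ are deterministic, so is $\dd_t$ and hence the difference $v(\dd_t)-v(\dd_{t-1})$; the right-hand side is therefore an affine image of $\Theta_t$, and the matrix $(\Ir_I-\llambda)^{-1}$ is well defined by assumption~(2) of Theorem~\ref{eq:output_cons}.

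Second, I would invoke Remark~\ref{rem:VAR1}(2), i.e. $\Theta_t \sim \cN(\overline\mu,\varepsilon^2\overline\Sigma)$, together with the elementary fact that an affine map $M\Theta_t+b$ of a Gaussian vector is Gaussian with mean $M\overline\mu+b$ and covariance $\varepsilon^2 M\overline\Sigma M^\top$. Taking $M=(\Ir_I-\llambda)^{-1}$ and $b=(\Ir_I-\llambda)^{-1}(v(\dd_t)-v(\dd_{t-1}))$ yields immediately the announced mean $m^C_t$ and, via $M\overline\Sigma M^\top=(\Ir_I-\llambda)^{-1}\overline\Sigma(\Ir_I-\llambda^\top)^{-1}$, the covariance $\widehat\Sigma$. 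For output I would reduce to this same computation: since $\ee^i_t=Y^i_t/C^i_t$ (Theorem~\ref{eq:output_cons} and~\eqref{eq:ee}), componentwise $\log(\Yy_t)=\log(\ee(\dd_t))+\log(\Cc_t)$, and factoring out $(\Ir_I-\llambda)^{-1}$ through $\log(\ee(\dd_t))=(\Ir_I-\llambda)^{-1}(\Ir_I-\llambda)\log(\ee(\dd_t))$ gives $\log(\Yy_t)=(\Ir_I-\llambda)^{-1}(\cA_t+\mfv(\dd_t))$ with $\mfv$ exactly as in~\eqref{eq:taxfuncY}. Thus $\log(\Yy_t)$ has the same form as $\log(\Cc_t)$ with $v$ replaced by the deterministic $\mfv$, and repeating the two previous steps verbatim produces $\Delta^{\Yy}_t\sim\cN(m^Y_t,\widehat\Sigma)$.

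I do not anticipate any genuine analytic obstacle: the statement is essentially a corollary of the explicit formula of Theorem~\ref{eq:output_cons}. The only points requiring care are (i) noting that the carbon-intensity and carbon-price terms, being deterministic, shift the mean but leave the covariance untouched, which is why $\Yy$ and $\Cc$ share the same $\widehat\Sigma$, and (ii) the short algebraic rewriting that casts $\log(\Yy_t)$ into the form $(\Ir_I-\llambda)^{-1}(\cA_t+\mfv(\dd_t))$ so that the output case is literally the consumption case with $v\mapsto\mfv$.
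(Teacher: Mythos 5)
Your proposal is correct and follows essentially the same route as the paper's own proof: difference the closed form~\eqref{eq:consumption} to get $\Delta^{\Cc}_t = (\Ir_I-\llambda)^{-1}[\Theta_t + v(\dd_t)-v(\dd_{t-1})]$, apply the affine-image-of-a-Gaussian fact with $\Theta_t\sim\cN(\overline\mu,\varepsilon^2\overline\Sigma)$, and then treat output via $\log(\Yy_t)=\log(\ee(\dd_t))+\log(\Cc_t)$, absorbing the deterministic term $(\Ir_I-\llambda)\log(\ee(\dd_t))$ into $\mfv$ exactly as in~\eqref{eq:taxfuncY}. The paper phrases the output step as $\Delta^{\Yy}_t=\Delta^{\Cc}_t+\log(\ee(\dd_t))-\log(\ee(\dd_{t-1}))$ rather than differencing $\log(\Yy_t)$ directly, but this is the same computation.
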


\begin{proof}
Let $t\in \NN^*$, from~\eqref{eq de v}, we have,
\begin{equation*}
    \cB_t - \cB_{t-1} = \Theta_t + v(\dd_t)-v(\dd_{t-1}).
\end{equation*}
Combining the previous equality with~\eqref{eq:consumption}, we get
\begin{equation}\label{eq:DeltalogC}
    \Delta^{\Cc}_t = (\Ir_I-\llambda)^{-1} \left[\Theta_t + v(\dd_t)-v(\dd_{t-1})\right].
\end{equation}
Applying Remark~\ref{rem:VAR1} leads to $\Delta^{\Cc}_t \sim \cN\left({m}_{t}^{C}, \widehat{\Sigma} \right)$.
Using~\eqref{eq:ee}, we observe that, for $i \in \Jj$,
\begin{align*}
    \begin{split}
        (\Delta^Y_t) &= (\Delta^C_t) + \log(\ee(\dd_t))-\log(\ee(\dd_{t-1}))\\
        &= (\Ir_I-\llambda)^{-1} \left[\Theta_t + v(\dd_t)-v(\dd_{t-1})\right] + \log(\ee(\dd_t))-\log(\ee(\dd_{t-1})),
    \end{split}
\end{align*}
\begin{equation}\label{eq:DeltalogY}
    (\Delta^Y_t) = (\Ir_I-\llambda)^{-1} \left[\Theta_t + v(\dd_t)-v(\dd_{t-1}) + (\Ir_I-\llambda)(\log(\ee(\dd_t))-\log(\ee(\dd_{t-1})))\right],
\end{equation}
which, using the previous characterization of the law of~$\Delta^C_t$, allows to conclude.
\end{proof}

From the previous result, we observe that output and consumption growth processes have a stationary variance but a time-dependent mean. \red{Moreover, output growth and consumption growth have the same variance because $\ee$ is deterministic~\eqref{eq:ee}, the latter being a consequence of the goods market clearing conditions.} In the context of our standing assumption~\ref{sass:taxes}, we can also make the following observation:

\begin{corollary}\label{cor:growthwotax} Let $t \in \NN^*$.
    If $t\le t_\circ$ (before the transition scenario), the \textcolor{black}{carbon price is zero} or $t\geq t_\star$ (after the transition), the \textcolor{black}{carbon price is constant}, with the same assumptions as in Theorem~\ref{eq:output_cons}, we have
    \begin{equation}
        \Delta^{\Cc}_t = \Delta^{\Yy}_t = (\Ir_I-\llambda)^{-1} \Theta_t.\label{eq:mgrowthwotax}
    \end{equation}
\end{corollary}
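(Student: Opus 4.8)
The plan is to read the result straight off the two growth formulas \eqref{eq:DeltalogC} and \eqref{eq:DeltalogY} established in the proof of Theorem~\ref{prop:deltaY_C}, namely $\Delta^{\Cc}_t = (\Ir_I-\llambda)^{-1}[\Theta_t + v(\dd_t) - v(\dd_{t-1})]$ and $\Delta^{\Yy}_t = (\Ir_I-\llambda)^{-1}[\Theta_t + v(\dd_t) - v(\dd_{t-1}) + (\Ir_I-\llambda)(\log\ee(\dd_t) - \log\ee(\dd_{t-1}))]$. Both increments collapse to the claimed $(\Ir_I-\llambda)^{-1}\Theta_t$ precisely when the emission-cost increment $v(\dd_t) - v(\dd_{t-1})$ and the ratio increment $\log\ee(\dd_t) - \log\ee(\dd_{t-1})$ vanish. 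Since $v$ (defined in \eqref{eq:taxfuncC}) and $\ee$ (defined in \eqref{eq:ee}) are deterministic functions of $\dd$, it suffices to prove that $\dd_t = \dd_{t-1}$ in each of the two regimes.

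First I would treat the pre-transition regime $t \le t_\circ$. Here the carbon price vanishes, $\delta_t = \delta_0 = 0$, and for $t \ge 1$ also $t-1 \in [0,t_\circ]$ so $\delta_{t-1} = 0$. By the definition \eqref{eq:emiss cost rate}, $\dd_t = (\tau_t\delta_t,\zeta_t\delta_t,\kappa_t\delta_t) = \mathbf{0} = \dd_{t-1}$, irrespective of the carbon intensities, which may themselves still be drifting through \eqref{eq:GHG_intensity}. Next I would treat the post-transition regime: when both $t$ and $t-1$ lie in the range $\ge t_\star$, Standing Assumption~\ref{sass:taxes} gives $\delta_t = \delta_{t-1} = \delta_{t_\star}$, while \eqref{eq:GHG_intensity} freezes each intensity $\mathfrak{y}$ at its $t_\star$ value for indices $\ge t_\star$; hence every coordinate of $\dd_t$ matches that of $\dd_{t-1}$, i.e. $\dd_t = \dd_{t-1}$.

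With $\dd_t = \dd_{t-1}$ in hand, $v(\dd_t) = v(\dd_{t-1})$ and $\ee(\dd_t) = \ee(\dd_{t-1})$, so the bracketed emission-cost terms in \eqref{eq:DeltalogC} and \eqref{eq:DeltalogY} cancel, leaving $\Delta^{\Cc}_t = \Delta^{\Yy}_t = (\Ir_I-\llambda)^{-1}\Theta_t$, which is exactly \eqref{eq:mgrowthwotax}. The argument is therefore short: it is purely a constancy-of-$\dd$ check feeding into formulas already derived.

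The only delicate point, and the one I would flag explicitly, is the boundary index $t = t_\star$ in the second regime. There $t-1 = t_\star - 1$ still belongs to the transition window $(t_\circ,t_\star)$, so in general $\delta_{t_\star-1} \neq \delta_{t_\star}$ and $\tau_{t_\star-1} \neq \tau_{t_\star}$, whence $\dd_{t_\star} \neq \dd_{t_\star-1}$ and the identity can fail exactly at $t = t_\star$. A clean statement thus needs both $t$ and $t-1$ inside a single constant-$\dd$ window, i.e. $t \le t_\circ$ with $\delta_0 = 0$, or $t \ge t_\star+1$; I would either restrict the post-transition case to $t \ge t_\star+1$, or make explicit that at $t = t_\star$ the growth picks up the residual transition jump $v(\dd_{t_\star}) - v(\dd_{t_\star-1})$.
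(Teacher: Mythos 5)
Your proposal is correct and is exactly the argument the paper leaves implicit: the corollary is read off from \eqref{eq:DeltalogC} and \eqref{eq:DeltalogY} by checking that $\dd_t = \dd_{t-1}$ in each regime (zero price killing the drifting intensities before $t_\circ$, frozen price and intensities after $t_\star$), so that the $v$ and $\log \ee$ increments vanish. Your boundary remark is a genuine (minor) imprecision in the statement as written: at $t = t_\star$ one has $t-1$ inside the transition window, so $\dd_{t_\star} \neq \dd_{t_\star - 1}$ in general and the identity \eqref{eq:mgrowthwotax} picks up the residual term $v(\dd_{t_\star}) - v(\dd_{t_\star-1})$; the post-transition case should indeed be read as $t \ge t_\star + 1$.
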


\noindent Standing Assumption~\ref{sass:taxes}, Theorem~\ref{prop:deltaY_C}, and Corollary~\ref{cor:growthwotax} show that our economy follows three regimes:
\begin{itemize}
    \item  Before the climate transition, \textcolor{black}{if the carbon price is zero}, the economy is a stationary state led by  productivity.
    \item During the transition, the economy is in a transitory state led by productivity and \textcolor{black}{carbon price}.
    \item After the transition, we reach \textcolor{black}{constant carbon price and carbon intensities, therefore} the economy returns in a stationary state ruled by productivity.
 \end{itemize}

\section{A firm valuation model}
\label{subse firm valuation}

When an economy is in good health, the probabilities of default are relatively low, but when it enters a recession, the number of failed firms increases significantly.
The same phenomenon is observed on the loss given default. 
This relationship between default rate and business cycle has been extensively studied in the literature: Nickell~\cite{nickell2000stability} quantifies the dependency between business cycles and rating transition probabilities while Bangia~\cite{bangia2002ratings} shows that the loss distribution of credit portfolios varies highly with the health of the economy, and Castro~\cite{castro2013macroeconomic} uses an econometric model to show the link between macroeconomic conditions and the banking credit risk in European countries.

Following these works, Pesaran~\cite{pesaran2006macroeconomic} uses an econometric model to empirically characterize the time series behaviour of probabilities of default and of recovery rates.
The goal of that work is "to show how global macroeconometric models can be linked to firm-specific return processes which are an integral part of Merton-type credit risk models so that quantitative answers to such questions can be obtained".
This simply implies that macroeconomic variables are used as systemic factors introduced in the Merton model.
The endogenous variables typically include \textit{real GDP, inflation, interest rate, real equity prices, exchange rate and real money balances}. One way to choose the macroeconomic variables is to run a LASSO regression between the logit function ($p \mapsto \log\left(\frac{p}{1-p}\right)$ on $(0,1)$) of observed default rates of firms and a set of macroeconomic variables. We perform such an analysis on a segment of S\&P's data in~\ref{LASSO}.

In addition to these statistical works, Baker, Delong and Krugman~\cite{baker2005asset} show through three different models that, in a steady state economy, economic growth and asset returns are linearly related.
On the one hand, economic growth is equivalent to productivity growth. On the other hand, the physical capital rate of gross profit, the net rate of return on a balanced financial portfolio and the net rate of return on equities are supposed to behave similarly. In particular, in the Solow model~\cite{solow1956contribution}, the \textit{physical capital rate of gross profit} is proportional to the return-to-capital parameter, to the productivity growth, and inversely proportional to the gross saving.
In the Diamond model~\cite{diamond1965national}, the \textit{net rate
of return on a balanced financial portfolio} is proportional to the labor productivity growth. In the Ramsey model~\cite{romer2012advanced} with a log utility function, the \textit{net rate of return
on equities} is proportional to the reduction in labor productivity growth.

Inspired by the aforementioned works, we introduce the following assumption describing the cash flows dynamics. Consider a portfolio of~$N\in\mathbb{N}^*$ firms.
\begin{assumption}\label{ass:Fgrowth}
For each~$n\in\{1,\dots,N\}$, the $\RR^N$-valued process on the cash flows growth of firm~$n$ denoted by $(\omega_t^n)_{t\in\NN^*}$  is linear in the economic factors (the output growth of the sector introduced in~\eqref{eq:DeltalogY}),  
 specifically we set for all $t\in\NN$,
\begin{equation} \label{eq:CF_vs_GDPGrowth with conso}
    \omega_{t}^n = \tilde{\mathfrak{a}}^{n\cdot}\Delta^Y_{t}+ \bb_{t}^n,
\end{equation}
for $\tilde{\mathfrak{a}}^{n\cdot}\in\RR^{ I}$, where the idiosyncratic noise $(\bb_{t})_{t\in\NN} := (\bb_{t}^n)_{t\in\NN, 1\leq n\leq N}$ is i.i.d. with law $\cN(0, \Diag(\sigma_{\mathfrak{b}^n}^2))$ with $\sigma_{\mathfrak{b}^n} > 0$ for $n \in \OneN$. 
Moreover, $( \Delta^Y_{t})_{t\in\NN^*}$ and $(\bb_{t})_{t\in\NN}$ are independent.
\end{assumption}

\begin{remark}
For each~$n\in\{1,\dots,N\}$, the definition~\eqref{eq:CF_vs_GDPGrowth with conso} can be rewritten, 
    with $\mathfrak{a}^{n\cdot} := \tilde{\mathfrak{a}}^{n\cdot}(\Ir_I-\llambda)^{-1}$, as
    \begin{equation} \label{eq:CF_vs_GDPGrowth}
    \omega_{t}^n = \mathfrak{a}^{n\cdot}\left( \Theta_t + \mfv(\dd_t)-\mfv(\dd_{t-1})\right)+ \bb_{t}^n,
\end{equation}
using~\eqref{eq:DeltalogY}. 
We call~$\mathfrak{a}^{n\cdot}$ and~$\tilde{\mathfrak{a}}^{n\cdot}$ \textit{factor loadings}, quantifying the extent to which~$\omega^{n}$ is related to~$\Delta^Y$.
\textcolor{black}{
Moreover, the economic motivation behind~\eqref{eq:CF_vs_GDPGrowth with conso} comes from the fact that if firm~$n$ belongs to sector~$i$, then its production is proportional to the sectoral output and its cash flows are proportional to its production (as in the DKW model~\cite{barth2001accruals,dechow1998relation}). Thus, we obtain a relation between the cash flows of firm~$n$ and the total output of sector~$i$. 
The assumption $\tilde{\mathfrak{a}}^{n\cdot}\in\RR^I$ stems from the fact that a company is not restricted to one activity sector only in general.
However, since we are considering the emission sector here, 
we expect that each firm~$n$ only belongs to one sector 
($i$ for example).
Therefore $\mathfrak{a}^{n j} = 0$ for all~$i\neq j$
and hence $|\mathfrak{a}^{n i}| = \max_{j\in\Jj} |\mathfrak{a}^{n j}|$.}
\end{remark}

\noindent We define the filtration~$\mathbb{F}=(\cF_t)_{t\in\NN}$ by $\cF_{t} = \sigma\left(\cG_t \cup  \sigma\left\{ \bb_{s}: s\in[0, t]\cap\NN\right\}\right)$ for $t\in\NN$, denote~$\EE_{t}[\cdot] := \EE[\cdot|\cF_{t}]$ and, for all~$1\leq n\leq N$, 
\begin{equation}\label{eq: sum of idio noise}
    \cW^n_t := \sum_{u=1}^{t} \bb_{u}^n.
\end{equation}

In addition to the empirical results on the dependency between default indicators and business cycles, firm valuation models provide additional explanatory arguments. 
On the one hand, the structural credit risk model models says that default metrics (such as default probability) depend on the firm value; 
on the other hand, valuation models help express the firm value as a function of economic cycles. 
Reis and Augusto~\cite{reis2013terminal} organize valuations models in five groups: {"models based on the
discount of cash flows, models of dividends, models related to the firm value, models based on accounting elements creation, and sustaining models in real options"}.

\begin{definition} \label{def:DCF}
Considering the Discounted Cash Flows method and following Kruschwitz and Löffer~\cite{kruschwitz2020stochastic}, the firm value is the sum of the present value of all future cash flows. 
For any time $t \ge 0$ and firm~$n \in \OneN$, we note~$F^n_{t}$ the free cash flows~\footnote{Which is defined as cash flow beyond what is necessary to maintain assets in place and to finance expected new investments.} of~$n$ at $t$, and $r>0$ the discount rate\footnote{Here, to simplify, we take~$r$ constant, deterministic, and independent of the companies. However, in a more general setting, it should be a stochastic process depending on the firm.}. 
Then, the value~$V^{n}_t$ of the firm~$n$, at time~$t$, is
\begin{equation}
    V^n_{t} := \EE_{t}\left[\sum_{s=0}^{+\infty} e^{-r s} F^n_{t+s} \right].\label{eq:DCF}
\end{equation}
\end{definition}
\noindent From Definition~\ref{def:DCF} above and from Assumption~\ref{ass:Fgrowth}, we can write
    \begin{align}\label{eq main contribution jf}
F^n_{t+1} = F^n_{t} \exp\{\omega^n_{t+1}\},
    \qquad\text{for }t\in\NN,
\end{align}
with $F^n_{0}$ and $\frac1{F^n_0}$ both belonging to $\cL^\infty(\cF_0)$.

\noindent The following proposition, proved in~\ref{proof:prop:existenceCondFV}, studies the well-posedness of the firm value.
\begin{proposition}\label{prop:existenceCondFV}
Assume that $|\Gamma| < 1$ and that
\begin{align} \label{eq main technical ass}
    \rho := \max_{1 \le n \le N} \Big\{\mathfrak{a}^{n\cdot} \overline \mu+ \frac12 \sigma_{\bb^n}^2 + \frac{\varepsilon^2}2 | \mathfrak{a}^{n\cdot} |^2 | \sqrt{\Sigma}|^2(1-|\Gamma|)^{-2}\Big\} < r.
\end{align}
Then, for any~$t\in\NN$ and~$1\leq n\leq N$, $V_t^n$ is well defined and for some $ p > 1$, which does not depend on t nor on $n$ but on $\rho$ and $r$, $\Vert V_t^n \Vert_{p} \le C_p\| F^n_t \|_{q} < +\infty$, for some $q>1$ that depends on $p,\rho$ and~$r$. 
\end{proposition}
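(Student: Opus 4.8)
The plan is to iterate the cash-flow recursion~\eqref{eq main contribution jf} and exploit the conditional log-normality of future cash flows. Writing $S^n_s := \sum_{u=1}^{s}\omega^n_{t+u}$ for $s\ge 1$ and $S^n_0 := 0$, \eqref{eq main contribution jf} gives $F^n_{t+s} = F^n_t\exp(S^n_s)$. Since $F^n_0>0$ (because $F^n_0,1/F^n_0\in\cL^\infty(\cF_0)$) and each exponential is positive, every $F^n_{t+s}$ is positive; hence, by the conditional monotone convergence theorem and the $\cF_t$-measurability of $F^n_t$,
\begin{equation*}
V^n_t = \EE_t\Big[\sum_{s=0}^{\infty}e^{-rs}F^n_{t+s}\Big] = F^n_t\,G^n_t, \qquad G^n_t := \sum_{s=0}^{\infty}e^{-rs}\,\EE_t\big[\exp(S^n_s)\big],
\end{equation*}
where $G^n_t$ is $\cF_t$-measurable. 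It then suffices to bound $G^n_t$ by an $\cF_t$-measurable random variable lying in every $\cL^{q'}$, uniformly in $t$.

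Next I would compute the conditional moment generating function. Inserting~\eqref{eq:CF_vs_GDPGrowth} together with $\Theta_{t+u}=\overline\mu+\varepsilon\cZ_{t+u}$ (Remark~\ref{rem:VAR1}) and the iterated representation $\cZ_{t+u}=\Gamma^u\cZ_t+\sum_{k=1}^{u}\Gamma^{u-k}\cE_{t+k}$, the deterministic increments telescope to $\mfv(\dd_{t+s})-\mfv(\dd_t)$. Conditionally on $\cF_t$, the scalar $S^n_s$ is therefore Gaussian: its mean gathers the $\cF_t$-measurable terms $s\,\mathfrak{a}^{n\cdot}\overline\mu+\varepsilon\,\mathfrak{a}^{n\cdot}M_s\cZ_t+\mathfrak{a}^{n\cdot}(\mfv(\dd_{t+s})-\mfv(\dd_t))$, with $M_s:=\sum_{u=1}^{s}\Gamma^u$, while its conditional variance comes from the independent future shocks, namely $\varepsilon^2\sum_{k=1}^{s}\mathfrak{a}^{n\cdot}B^{(s)}_k\Sigma (B^{(s)}_k)^\top(\mathfrak{a}^{n\cdot})^\top+s\,\sigma_{\bb^n}^2$ with $B^{(s)}_k:=\sum_{j=0}^{s-k}\Gamma^j$. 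The log-normal formula then yields $\EE_t[\exp(S^n_s)]=\exp(\text{conditional mean}+\tfrac12\,\text{conditional variance})$.

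I would then bound each ingredient using $|\Gamma|<1$. The geometric estimates $|M_s|\le|\Gamma|(1-|\Gamma|)^{-1}$ and $|B^{(s)}_k|\le(1-|\Gamma|)^{-1}$ give $\mathfrak{a}^{n\cdot}B^{(s)}_k\Sigma(B^{(s)}_k)^\top(\mathfrak{a}^{n\cdot})^\top\le|\mathfrak{a}^{n\cdot}|^2|\sqrt\Sigma|^2(1-|\Gamma|)^{-2}$, so the conditional variance is at most $s\,(\varepsilon^2|\mathfrak{a}^{n\cdot}|^2|\sqrt\Sigma|^2(1-|\Gamma|)^{-2}+\sigma_{\bb^n}^2)$; combined with the mean this produces a per-period exponential growth rate precisely equal to the bracket defining $\rho$ in~\eqref{eq main technical ass}. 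Since $\dd$ is deterministic and eventually constant (Standing Assumption~\ref{sass:taxes}), the set $\{\mfv(\dd_u):u\in\NN\}$ is bounded by some $\overline v$, so $|\mathfrak{a}^{n\cdot}(\mfv(\dd_{t+s})-\mfv(\dd_t))|\le 2|\mathfrak{a}^{n\cdot}|\,\overline v$ uniformly in $s,t$, while $\varepsilon\,\mathfrak{a}^{n\cdot}M_s\cZ_t\le \varepsilon|\mathfrak{a}^{n\cdot}|\,|\Gamma|(1-|\Gamma|)^{-1}|\cZ_t|$. Altogether $\EE_t[\exp(S^n_s)]\le e^{\rho s}R^n_t$, where $R^n_t:=\exp(\varepsilon|\mathfrak{a}^{n\cdot}|\,|\Gamma|(1-|\Gamma|)^{-1}|\cZ_t|+2|\mathfrak{a}^{n\cdot}|\,\overline v)$ is $\cF_t$-measurable; as $\rho<r$, the geometric series converges and $G^n_t\le (1-e^{-(r-\rho)})^{-1}R^n_t<\infty$ almost surely.

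Finally I would conclude by Hölder's inequality: for $p>1$ pick $q,q'>1$ with $\frac1p=\frac1q+\frac1{q'}$, so that $\|V^n_t\|_p=\|F^n_t G^n_t\|_p\le\|F^n_t\|_q\|G^n_t\|_{q'}\le(1-e^{-(r-\rho)})^{-1}\|F^n_t\|_q\|R^n_t\|_{q'}$. Because $\cZ_t\sim\cN(0,\overline\Sigma)$ has a law independent of $t$ and $|\cZ_t|$ is sub-Gaussian, $\|R^n_t\|_{q'}$ is finite and independent of $t$; taking the maximum over the finitely many firms makes the constant uniform in $t$ and $n$, yielding the claim. The main obstacle is Steps 2--3: correctly identifying the conditional law of $S^n_s$ by splitting $\cZ_{t+u}$ into its $\cF_t$-measurable part and the future-shock part, and then bounding the conditional variance by a geometric series so that the per-period growth rate is exactly $\rho$; the uniform control of the $\mfv$-increments through the standing assumption is the remaining technical point.
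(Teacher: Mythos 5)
Your proposal is correct and follows essentially the same route as the paper's proof: factor out $F^n_t$, split $\Theta_{t+u}$ through the VAR recursion into the $\cF_t$-measurable part involving $\cZ_t$ and the independent future shocks, bound the partial-sum matrices $\Upsilon_k$ geometrically by $(1-|\Gamma|)^{-1}$, control the increments of $\mfv(\dd_\cdot)$ uniformly via Standing Assumption~\ref{sass:taxes}, identify the per-period growth rate as the bracket defining $\rho$, and conclude with H\"older's inequality together with the $t$-independence of the law of $\cZ_t$. The only (minor, and in fact slightly sharper) difference is in execution: you evaluate $\EE_t[\exp(S^n_s)]$ exactly at exponent one via the conditional log-normal formula and bound it pathwise by $e^{\rho s}R^n_t$, where $R^n_t$ has moments of all orders, which yields $\Vert V^n_t\Vert_p \le C_p \Vert F^n_t\Vert_q$ for \emph{every} $p>1$, whereas the paper bounds the unconditional norms $\Vert \exp(\sum_{u=1}^s w^n_{t+u})\Vert_p \le C_p e^{p\rho s}$ term by term and must therefore choose $p>1$ small enough that $p\rho<r$.
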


\begin{remark} 
The inequality~\eqref{eq main technical ass} guarantees the non-explosion of the expected discounted future cash flows of the firm. Moreover, we could remove the condition $|\Gamma|<1$. Indeed, we know that, by Assumption~\ref{sassump:VAR}, $\Gamma$ has eigenvalues with absolute value strictly lower than one. However, we would need to alter condition~\eqref{eq main technical ass} by using a matrix norm $|\cdot|_s$ (subordinated) such that $|\Gamma|_s < 1$. The condition would then involve equivalence of norm constants between $|\cdot|$ and $|\cdot|_s$.
\end{remark}

We now derive a more explicit expression for~$V^n_t$. 
Describing it as a function of the underlying processes driving the economy
does not lead to an easily tractable formula, but allows us to write it as a fixed-point problem which can be solved by numerical methods such as Picard iteration~\cite{berinde2007iterative} or by deep learning methods~\cite{hammad2022new}. 
To facilitate the forthcoming credit risk analysis, we approximate~$\frac{V^n_t}{F^n_t}$ by the first term of an expansion in terms of the noise intensity~$\varepsilon$ appearing in~$\Theta$ (Assumption~\ref{sassump:VAR}). An expanded expression of the firm value is
\begin{equation*}
    V_t^n = F^n_t\left(1+ \sum_{s=1}^{+\infty} e^{-rs}\EE_{t}\left[ \exp \left( \mathfrak{a}^{n\cdot}\left(\mfv(\dd_{t+s})-\mfv(\dd_t) + \sum_{u=1}^{s}\Theta_{t+u}\right)
    + \sum_{u=1}^{s}\bb^n_{t+u}
    \right)\right] \right).
\end{equation*}
Let us introduce, for a firm $n$ and $t\in \NN$, the quantity 
\begin{equation}\label{eq de cV}
    \cV_t^n := F^n_t\left(1+ \sum_{s=1}^{+\infty} e^{-rs}\EE_{t}\left[ \exp \left( \mathfrak{a}^{n\cdot}\left(\mfv(\dd_{t+s})-\mfv(\dd_t) +  s\overline \mu\right)
    + \sum_{u=1}^{s}\bb^n_{t+u}
    \right)\right] \right).
\end{equation}
We recall that~$\Theta$ depends on~$\varepsilon$ according to the Standing Assumption~\ref{sassump:VAR}, therefore~$\omega$ and~$F^n$ also depend on~$\varepsilon$ according to Assumption~\ref{ass:Fgrowth} respectively. This gives the dependence of $V^n$ on $\varepsilon$. From \eqref{eq de cV}, $\frac{\cV_t^n}{F^n_t}$ almost corresponds to the definition of $\frac{V_t^n}{F^n_t}$ but with the noise term coming from the economic factor in the definition of $\Theta$ set to zero, for the dates after $t$, according to~\eqref{eq main contribution jf},~\eqref{eq:DCF}, and~\eqref{eq:CF_vs_GDPGrowth}. 
We first make the following observation.
\begin{lemma}\label{lem:approx firm value}
    For any $n \in \OneN$, 
    assume that $\varrho_n := \frac{1}{2}\sigma_{\mathfrak{b}_n}^2 + \mathfrak{a}^{n\cdot}\overline \mu -r<0$.
    Then $\cV_t^n$ is well defined for all $t \in \NN$ and
    \begin{align}\label{eq expression of cV}
        \cV^n_t = F^n_0 \mathfrak{R}^n_t(\mathfrak{d}) \exp{\left(\mathfrak{a}^{n\cdot}(\cA^\circ_t-\mfv(\dd_{0}))\right)}\exp\left(\cW^n_t\right),
    \end{align}
where $\cW$ is defined in~\eqref{eq: sum of idio noise} and
    \begin{equation}
     \mathfrak{R}^n_t(\dd) := \sum_{s=0}^{\infty} e^{\varrho_n s}\exp{\left(\mathfrak{a}^{n\cdot} \mfv(\dd_{t+s})  \right)} 
     \label{eq:FV_second_term}.
    \end{equation}
    Moreover, with $t_\circ$ and $t_\star$ defined in Standing Assumption~\ref{sass:taxes}, we obtain the explicit form
    \begin{equation}
    \mathfrak{R}^n_t(\dd) = \left\{
    \begin{array}{ll}
    \displaystyle \frac{e^{\mathfrak{a}^{n\cdot} \mfv(\delta_{t_\star})}}{1-e^{\varrho_n}}, & \mbox{if } t \geq t_\star,\\
    \displaystyle \sum_{s=0}^{t_\star-t} e^{\varrho_n s}\exp{\left(\mathfrak{a}^{n\cdot} \mfv(\dd_{t+s})  \right)} +  \frac{e^{\mathfrak{a}^{n\cdot} \mfv(\delta_{t_\star}) + \varrho_n (t_\star-t+1)}}{1-e^{\varrho_n}}, & \mbox{if } t_\circ \leq t < t_\star,\\
    \displaystyle e^{\mathfrak{a}^{n\cdot}\mfv(\delta_{t_\circ})} \frac{1-e^{\varrho_n (t_\circ-t+1)}}{1-e^{\varrho_n}} + \sum_{s=t_\circ-t+1}^{t_\star-t} e^{\varrho_n s}e^{\mathfrak{a}^{n\cdot} \mfv(\dd_{t+s})} + \frac{e^{\mathfrak{a}^{n\cdot} \mfv(\delta_{t_\star}) + \varrho_n (t_\star-t+1)}}{1-e^{\varrho_n}}, &  \mbox{otherwise}.
    \end{array}
\right.
\end{equation}    
\end{lemma}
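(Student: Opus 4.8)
The plan is to evaluate the conditional expectation in~\eqref{eq de cV} in closed form, using that the only randomness left under $\EE_t[\cdot]$ is the future idiosyncratic noise, and then to reconstitute $F^n_t$ from $F^n_0$ by telescoping the cash-flow growth; the explicit form of $\mathfrak{R}^n_t$ is obtained at the end by summing geometric series over the three regimes of Standing Assumption~\ref{sass:taxes}.

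First I would simplify the generic summand of~\eqref{eq de cV}. Since the emission cost rate process~$\dd$ is deterministic, the terms $\mfv(\dd_{t+s})$, $\mfv(\dd_t)$ and $s\overline\mu$ factor out of $\EE_t[\cdot]$, leaving $\EE_t[\exp(\sum_{u=1}^s \bb^n_{t+u})]$. Because $(\bb_t)_t$ is i.i.d. $\cN(0,\Diag(\sigma_{\mathfrak{b}^n}^2))$ and, by construction of~$\cF_t$, the increments $\bb^n_{t+1},\dots,\bb^n_{t+s}$ are independent of~$\cF_t$, the Gaussian moment generating function gives $\EE_t[\exp(\sum_{u=1}^s \bb^n_{t+u})]=\exp(\tfrac{s}{2}\sigma_{\mathfrak{b}^n}^2)$. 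Collecting $e^{-rs}$, the $s\overline\mu$ contribution and the variance term into the single exponent $\varrho_n s$, and noticing that the leading~$1$ is exactly the $s=0$ term, the whole bracket collapses to $\exp(-\mathfrak{a}^{n\cdot}\mfv(\dd_t))\sum_{s\geq 0}e^{\varrho_n s}\exp(\mathfrak{a}^{n\cdot}\mfv(\dd_{t+s}))$, i.e. $\exp(-\mathfrak{a}^{n\cdot}\mfv(\dd_t))\,\mathfrak{R}^n_t(\dd)$. Well-posedness is then immediate: for fixed~$t$ the sequence $(\dd_{t+s})_{s\geq 0}$ takes finitely many values (it is constant equal to $\dd_{t_\star}$ once $t+s\geq t_\star$), so $\mfv(\dd_{t+s})$ is bounded uniformly in~$s$, while $\varrho_n<0$ makes $e^{\varrho_n s}$ summable; hence the series converges absolutely. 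This yields $\cV^n_t = F^n_t\exp(-\mathfrak{a}^{n\cdot}\mfv(\dd_t))\,\mathfrak{R}^n_t(\dd)$.

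Next I would replace $F^n_t$ by $F^n_0$. Iterating~\eqref{eq main contribution jf} gives $F^n_t=F^n_0\exp(\sum_{u=1}^t\omega^n_u)$, and substituting~\eqref{eq:CF_vs_GDPGrowth} the exponent telescopes: the deterministic part $\sum_{u=1}^t\mathfrak{a}^{n\cdot}(\mfv(\dd_u)-\mfv(\dd_{u-1}))$ equals $\mathfrak{a}^{n\cdot}(\mfv(\dd_t)-\mfv(\dd_0))$, while $\sum_{u=1}^t\Theta_u=\cA^\circ_t$ by Remark~\ref{rem:VAR1} and~\eqref{de A circ}, and $\sum_{u=1}^t\bb^n_u=\cW^n_t$ by~\eqref{eq: sum of idio noise}. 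Thus $F^n_t=F^n_0\exp(\mathfrak{a}^{n\cdot}(\cA^\circ_t+\mfv(\dd_t)-\mfv(\dd_0))+\cW^n_t)$. Inserting this into the factorization above, the factor $\exp(\mathfrak{a}^{n\cdot}\mfv(\dd_t))$ cancels against $\exp(-\mathfrak{a}^{n\cdot}\mfv(\dd_t))$, delivering exactly~\eqref{eq expression of cV}.

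Finally, for the explicit form of $\mathfrak{R}^n_t(\dd)$ I would split the series according to the three regimes. When $t\geq t_\star$, every $\dd_{t+s}=\dd_{t_\star}$ and $\mathfrak{R}^n_t$ is a single geometric series equal to $e^{\mathfrak{a}^{n\cdot}\mfv(\delta_{t_\star})}/(1-e^{\varrho_n})$. When $t_\circ\leq t<t_\star$, the terms with $t+s>t_\star$ are again constant and sum to the geometric tail $e^{\mathfrak{a}^{n\cdot}\mfv(\delta_{t_\star})+\varrho_n(t_\star-t+1)}/(1-e^{\varrho_n})$, leaving the finite head $\sum_{s=0}^{t_\star-t}$. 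When $t<t_\circ$, I additionally use that the carbon price vanishes before the transition (as in Corollary~\ref{cor:growthwotax}), so $\dd_{t+s}\equiv\dd_{t_\circ}$ for $t+s\leq t_\circ$, whence the pre-transition block $\sum_{s=0}^{t_\circ-t}e^{\varrho_n s}$ is the finite geometric sum $(1-e^{\varrho_n(t_\circ-t+1)})/(1-e^{\varrho_n})$; adding the middle transition block and the constant tail completes the formula. These are elementary geometric-series computations; the only delicate points are the book-keeping of the summation ranges in the last case and the observation that the pre-transition block can be summed in closed form precisely because $\mfv(\dd_{t+s})$ is constant there.
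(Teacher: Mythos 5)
Your proof is correct and follows essentially the same route as the paper's: evaluate the conditional expectation via the Gaussian moment generating function of the idiosyncratic noise, telescope the cash-flow growth to replace $F^n_t$ by $F^n_0$ (cancelling the $\mfv(\dd_t)$ factors), and sum geometric series over the three regimes of Standing Assumption~\ref{sass:taxes}. The only difference is presentational — you work directly with the infinite series and justify absolute convergence by the uniform boundedness of $\mfv(\dd_{t+s})$ together with $\varrho_n<0$, whereas the paper truncates at a finite horizon $K$ and passes to the limit — and your explicit flagging that the pre-transition block sums in closed form only because $\dd$ is constant there matches (indeed makes explicit) what the paper's own computation assumes.
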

\begin{proof}
    Let~$t\in\NN$ and introduce, for $K > t_\star$,
    \begin{equation}\label{eq intro cVnKt}
        \cV^{n,K}_t := F^n_t\left(1+ \sum_{s=1}^{K} e^{-rs}\EE_{t}\left[ \exp \left( s\mathfrak{a}^{n\cdot}\overline \mu
        +\mathfrak{a}^{n\cdot}\left(\mfv(\dd_{t+s})-\mfv(\dd_t)\right)
        + \sum_{u=1}^{s}\bb^n_{t+u}
        \right)\right] \right).
    \end{equation}  
    Similar computations as (in fact easier than) the ones performed in the proof of Proposition~\ref{prop:existenceCondFV} show that $\cV^n_t = \lim_{K\to+\infty} \cV^{n,K}_t $ is well defined in $\cL^q(\cH,\mathbb{E})$ for any $q\ge 1$.
    Furthermore,
    \begin{equation*}
        \cV^{n,K}_t = F^n_t\left(1+ \sum_{s=1}^{K} e^{\varrho_n s}\exp{\left(\mathfrak{a}^{n\cdot}\left(\mfv(\dd_{t+s})-\mfv(\dd_t)\right)
        \right)} \right) = F^n_t\left(1+ e^{-\mathfrak{a}^{n\cdot}\mfv(\dd_t)} \sum_{s=1}^{K} e^{\varrho_n s} \exp{(\mathfrak{a}^{n\cdot}\mfv(\dd_{t+s}))} \right),
    \end{equation*}  
    where $\varrho_n$ is defined in the lemma, and from Assumptions~\ref{ass:Fgrowth} and~\ref{ass:Fgrowth}, 
    \begin{equation*}
        F^n_{t} = F^n_0 \exp \left( \sum_{u=1}^{t} w^n_{u} \right) 
        = F^n_0 e^{ \mathfrak{a}^{n\cdot}\left(\mfv(\dd_{t})-\mfv(\delta_0)\right)} \exp \left( \mathfrak{a}^{n\cdot}\cA^\circ_t+ \cW^n_t \right).
    \end{equation*}
    We then have
    \begin{equation*}
        F^n_t \left(1+ e^{-\mathfrak{a}^{n\cdot}\mfv(\dd_t)} \sum_{s=1}^{K} e^{\varrho_n s} \exp{(\mathfrak{a}^{n\cdot}\mfv(\dd_{t+s}))}\right) = F^n_0 e^{-\mathfrak{a}^{n\cdot}\mfv(\delta_0)} \exp \left( \mathfrak{a}^{n\cdot}\cA^\circ_t+ \cW^n_t \right) \sum_{s=0}^{K} e^{\varrho_n s}\exp{\left(\mathfrak{a}^{n\cdot} \mfv(\dd_{t+s}) 
        \right)}.
    \end{equation*}  
    (1) If $t < t_\circ$, then
    \begin{equation*}
        \begin{split}
    \mathfrak{R}^{n,K}_t(\dd) &:= \sum_{s=0}^{K} e^{\varrho_n s}\exp{\left(\mathfrak{a}^{n\cdot} \mfv(\dd_{t+s})  \right)}\\ 
    &= \sum_{s=0}^{t_\circ-t} e^{\varrho_n s}\exp{\left(\mathfrak{a}^{n\cdot} \mfv(\dd_{t+s})  \right)} + \sum_{s=t_\circ-t+1}^{t_\star-t} e^{\varrho_n s}\exp{\left(\mathfrak{a}^{n\cdot} \mfv(\dd_{t+s})  \right)} + \sum_{s=t_\star-t+1}^{K} e^{\varrho_n s}\exp{\left(\mathfrak{a}^{n\cdot} \mfv(\dd_{t+s})  \right)}\\
    &= e^{\mathfrak{a}^{n\cdot}\mfv(\delta_{t_\circ})} \frac{1-e^{\varrho_n (t_\circ-t+1)}}{1-e^{\varrho_n}} + \sum_{s=t_\circ-t+1}^{t_\star-t} e^{\varrho_n s}\exp{\left(\mathfrak{a}^{n\cdot} \mfv(\dd_{t+s})  \right)} + e^{\mathfrak{a}^{n\cdot} \mfv(\delta_{t_\star}) + \varrho_n (t_\star-t+1)} \frac{1-e^{\varrho_n (K-t_\star+t)}}{1-e^{\varrho_n}}.
    \end{split}
    \end{equation*}
(2) If $t_\circ \leq t < t_\star$, then
\begin{equation*}
        \begin{split}
    \sum_{s=0}^{K} e^{\varrho_n s}\exp{\left(\mathfrak{a}^{n\cdot} \mfv(\dd_{t+s})  \right)} &= \sum_{s=0}^{t_\star-t} e^{\varrho_n s}\exp{\left(\mathfrak{a}^{n\cdot} \mfv(\dd_{t+s})  \right)} + \sum_{s=t_\star-t+1}^{K} e^{\varrho_n s}\exp{\left(\mathfrak{a}^{n\cdot} \mfv(\dd_{t+s})  \right)}\\
    &= \sum_{s=0}^{t_\star-t} e^{\varrho_n s}\exp{\left(\mathfrak{a}^{n\cdot} \mfv(\dd_{t+s})  \right)} + e^{\mathfrak{a}^{n\cdot} \mfv(\delta_{t_\star}) + \varrho_n (t_\star-t+1)} \frac{1-e^{\varrho_n (K-t_\star+t)}}{1-e^{\varrho_n}}.
    \end{split}
    \end{equation*}
    (3) If $t\geq t_\star$, then
    \begin{equation*}
        \begin{split}
    \sum_{s=0}^{K} e^{\varrho_n s}\exp{\left(\mathfrak{a}^{n\cdot} \mfv(\dd_{t+s})  \right)} &=\sum_{s=0}^{K} e^{\varrho_n s}\exp{\left(\mathfrak{a}^{n\cdot} \mfv(\delta_{t_\star})  \right)} = e^{\mathfrak{a}^{n\cdot} \mfv(\delta_{t_\star})} \frac{1-e^{\varrho_n (K+1)}}{1-e^{\varrho_n}}.
    \end{split}
    \end{equation*}
    Finally, $e^{\varrho_n (K+1)}$ and $e^{\varrho_n (K-t_\star+t)}$ converge to~$0$ for $\varrho_n < 0$ 
    as~$K$ tends to infinity 
and the result follows.
\end{proof}

It follows from Lemma~\ref{lem:approx firm value} that at time~$t\in\NN$, the (proxy of the) firm value~$\cV^n_t$ is a function of the productivity processes~$\mathcal{A}_t$, the \textcolor{black}{carbon price process~$\delta$, the carbon intensities}   processes~$\tau,\zeta,\kappa$, 
the parameters~$F_{0}^n$, $\mathfrak{a}^{n\cdot}$, $\sigma_{\bb^n}^2, \varepsilon$ and the different parameters introduced in Section~\ref{climeco}.
\textcolor{black}{In addition, by applying the log function to~\eqref{eq expression of cV}, we get
\begin{align}
        \log{\cV^n_t} = \log{F^n_0 \mathfrak{R}^n_t(\mathfrak{d})} - \mathfrak{a}^{n\cdot}\mfv(\dd_{0}) + \mathfrak{a}^{n\cdot}\cA^\circ_t + \cW^n_t.
    \end{align}
It appears that our model is close to the Vasicek one~\cite{vasicek2002model} (which assumes that the asset value depends on a common risk factor and an idiosyncratic one, both being two independent standard Gaussian random variables). However, the main differences are that (1) our systemic factor~$\cA$, standing for the cumulative log-productivity growth, is not a standard Gaussian random variable but a non-stationary and non-centered Gaussian one, 
(2) our idiosyncratic factor~$\cW$,
representing the noise of cumulative cash flows growth, is a Gaussian random variable but a non-stationary and centered one, and (3) we introduce an additional term depending on climate transition risk through the  carbon price process.
}

Moreover, we can identify the law of~$\cV^n_{t}|\mathcal{G}_t$.
\begin{corollary}\label{cor:condLawV}
For all~$t\in\NN$, 
    \begin{equation}
        (\log \cV^n_{t})_{1\le n \le N}|\mathcal{G}_t \sim \cN\left(\log(F^n_0)+\mathfrak{m}(\delta,t,\cA^\circ_t),  \Diag(t\sigma_{\mathfrak{b}_n}^2) \right),\label{eq:condLawV}
    \end{equation}
    with for $n \in \OneN$,
\begin{equation}\label{eq de log mean firm value}
        \mathfrak{m}^n(\dd,t,\cA^\circ_t) := \mathfrak{a}^{n\cdot}\left(\cA^\circ_t -\mfv(\dd_{0})\right) + \log(\mathfrak{R}^n_t(\dd)).
\end{equation}
\end{corollary}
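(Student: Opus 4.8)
The plan is to read off the result directly from the closed form in Lemma~\ref{lem:approx firm value} by taking logarithms and then isolating the single genuinely random term relative to~$\mathcal{G}_t$. Starting from~\eqref{eq expression of cV}, applying~$\log$ gives, for each firm~$n\in\OneN$ and each~$t\in\NN$,
\begin{equation*}
    \log \cV^n_t = \log F^n_0 + \log\mathfrak{R}^n_t(\dd) + \mathfrak{a}^{n\cdot}\bigl(\cA^\circ_t-\mfv(\dd_0)\bigr) + \cW^n_t,
\end{equation*}
which is precisely $\log F^n_0 + \mathfrak{m}^n(\dd,t,\cA^\circ_t) + \cW^n_t$ with~$\mathfrak{m}^n$ as in~\eqref{eq de log mean firm value}. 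The first step is to classify these terms. The factors~$\mathfrak{R}^n_t(\dd)$ and~$\mfv(\dd_0)$ are deterministic since~$\dd$ is deterministic by Standing Assumption~\ref{sass:taxes}; the term~$\cA^\circ_t=\sum_{u=1}^{t}\Theta_u$ is~$\mathcal{G}_t$-measurable because each~$\Theta_u$ for~$u\le t$ is, by Standing Assumption~\ref{sassump:VAR}, a measurable function of~$(\Theta_0,\cE_1,\dots,\cE_u)$; and~$F^n_0$ enters only as known initial data. Hence the entire vector $\bigl(\log F^n_0+\mathfrak{m}^n(\dd,t,\cA^\circ_t)\bigr)_{n}$ is (treated as)~$\mathcal{G}_t$-measurable.

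The second step is to handle the residual~$\cW^n_t=\sum_{u=1}^{t}\bb^n_u$ from~\eqref{eq: sum of idio noise}. By Assumption~\ref{ass:Fgrowth}, the idiosyncratic process~$(\bb_u)_u$ is independent of the economic factor and, through it, of~$\mathcal{G}_t$; therefore conditioning on~$\mathcal{G}_t$ does not alter the law of~$(\cW^n_t)_{n}$. It then remains to compute that law: since the~$\RR^N$-valued vectors~$(\bb_u)_{u}$ are i.i.d.\ with common law~$\cN(0,\Diag(\sigma_{\bb^n}^2))$, the sum of~$t$ of them is Gaussian with mean zero and covariance~$t\,\Diag(\sigma_{\bb^n}^2)=\Diag(t\sigma_{\bb^n}^2)$, the covariance being diagonal precisely because the noises are independent across firms. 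Thus $(\cW^n_t)_{n}\sim\cN\!\bigl(0,\Diag(t\sigma_{\bb^n}^2)\bigr)$.

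The final step combines the two: adding a~$\mathcal{G}_t$-measurable vector to a Gaussian vector that is independent of~$\mathcal{G}_t$ yields, conditionally on~$\mathcal{G}_t$, a Gaussian with the mean shifted by that measurable vector and with unchanged covariance. This delivers~\eqref{eq:condLawV} with conditional mean $\bigl(\log F^n_0+\mathfrak{m}^n(\dd,t,\cA^\circ_t)\bigr)_{n}$ and covariance~$\Diag(t\sigma_{\bb^n}^2)$. The one point deserving care — the main (if modest) obstacle — is the independence of~$\cW^n_t$ from~$\mathcal{G}_t$: Assumption~\ref{ass:Fgrowth} states independence of~$(\bb_u)_u$ from~$(\Delta^Y_t)_t$, and one must confirm that this transfers to independence from the filtration~$\mathbb{G}$ generated by~$\Theta_0$ and the innovations~$(\cE_s)_s$, which underlie~$\Delta^Y$; once this is granted (together with the reading of~$F^n_0$ as known initial data), the remainder is bookkeeping on Gaussian laws.
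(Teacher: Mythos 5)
Your proposal is correct and follows essentially the same route as the paper: take logarithms of the closed form~\eqref{eq expression of cV} from Lemma~\ref{lem:approx firm value}, note that $\log F^n_0 + \mathfrak{m}^n(\dd,t,\cA^\circ_t)$ is $\mathcal{G}_t$-measurable, and identify the residual $\cW^n_t$ as a centered Gaussian vector with covariance $\Diag(t\sigma_{\bb^n}^2)$ independent of $\mathcal{G}_t$. If anything, you are more explicit than the paper (which asserts the conditional law directly) in flagging that the independence of $(\bb_u)_u$ from $\Delta^Y$ in Assumption~\ref{ass:Fgrowth} must be read as independence from the filtration $\mathbb{G}$ itself, which is indeed the intended reading underlying the construction of $\mathbb{F}$.
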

\begin{proof}
    Let $t\geq 1$ and $n \in \OneN$, we have from~\eqref{eq expression of cV}
    \begin{equation*}
        \cV^n_{t} = F^n_0 \mathfrak{R}^n_{t}(\mathfrak{d}) \exp{\left(\mathfrak{a}^{n\cdot}(\cA^\circ_{t}-\mfv(\dd_{0}))\right)}\exp\left(\sum_{u=1}^{t}\bb^n_{u}\right),
    \end{equation*}
    then
    \begin{equation*}
        \log(\cV^n_{t}) = \log(F^n_0) + \log(\mathfrak{R}^n_{t}(\mathfrak{d})) + \mathfrak{a}^{n\cdot}(\cA^\circ_{t}-\mfv(\dd_{0})) + \sum_{u=1}^{t}\bb^n_{u}.
    \end{equation*}
    Therefore $\log(\cV^n_{t})|\cG_t \sim \cN\left(\log(F^n_0\mathfrak{R}^n_{t}(\mathfrak{d})) + \mathfrak{a}^{n\cdot}(\cA^\circ_{t}-\mfv(\dd_{0})), t \sigma_{\bb^n}^2\right)$
    and the conclusion follows.
\end{proof}
The following remark, whose proof is in~\ref{proof:rem:cond expectation fv}, gives the law of the firm value at time~$t+T$ conditionally on $\cG_t$, with~$t,T\in\NN$.

\begin{remark}\label{rem:cond expectation fv}
Let $(\Upsilon_{u}:=\sum_{v=0}^{u}\Gamma^v)_{u\in\NN}$. For $t, T \in\NN$ and $1\leq n\leq N$, denote
\begin{equation}\label{eq mean FV cond}
\mathcal{K}^n(\dd, t, T, \cA^\circ_{t},\Theta_t) := \displaystyle\mathfrak{m}^n(\dd,t,\cA^\circ_{t}) + \log\left(\frac{\mathfrak{R}^n_{t+T}(\dd)}{\mathfrak{R}^n_{t}(\dd)}\right) + \mathfrak{a}^{n\cdot}\Gamma \Upsilon_{T-1} \Theta_{t} + \mathfrak{a}^{n\cdot}\left(\sum_{u=1}^{T}\Upsilon_{u-1}\right)\mu,
\end{equation}
and
\begin{equation}\label{eq std FV cond}
\mathcal{L}^n(t,T) := \displaystyle\sigma_{\bb^n}^2(t+T) + \eps^2\sum_{u=1}^{T} (\mathfrak{a}^{n\cdot}\Upsilon_{T-u})\Sigma (\mathfrak{a}^{n\cdot}\Upsilon_{T-u})^\top.
\end{equation}
We have
\begin{equation*}
    \log(\cV^n_{t+T})\vert\cG_t \sim \cN\left(\log(F^n_0)+\mathcal{K}^n(\dd, t, T, \cA^\circ_{t},\Theta_t), \mathcal{L}^n(t,T)\right).
\end{equation*}
\end{remark}

In the following, we will work directly with $\cV^n_t$ instead of $V^n_t$, as it appears to be a tractable proxy (its law can be easily identified). Indeed, this is justified when the noise term in the productivity process is small as shown in the following result \cite{baker2005asset}.
The following proposition, whose proof is given in~\ref{proof:prop:boundFV}, shows that $\frac{\cV^n_t}{F^n_t}$ and $\frac{V^n_t}{F^n_t}$ become closer as $\varepsilon$ gets to $0$.

\begin{proposition}\label{prop:boundFV}
    Assume that $|\Gamma| < 1$ and that
    \eqref{eq main technical ass} is satisfied,
    then 
    \begin{align}
        \EE \left[\left|\frac{V^n_t}{F^n_t} - \frac{\cV^n_t}{F^n_t}\right|\right] \le C \varepsilon,
    \end{align}
    for some positive constant $C$ (depending on~$t,\rho$).
\end{proposition}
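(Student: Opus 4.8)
The plan is to form the difference of the two series and show that every summand carries a factor~$\varepsilon$. Dividing the expansion of~$V^n_t$ preceding~\eqref{eq de cV} and the definition~\eqref{eq de cV} of~$\cV^n_t$ by~$F^n_t$, the constant terms~$1$ and the common factors cancel, leaving
\[
\frac{V^n_t}{F^n_t}-\frac{\cV^n_t}{F^n_t}
=\sum_{s=1}^{\infty} e^{-rs}\,\EE_t\!\big[P^n_s\,(e^{\varepsilon X^n_s}-1)\big],
\]
where, using $\Theta_{t+u}=\overline\mu+\varepsilon\cZ_{t+u}$ from~\eqref{eq expression Theta}, I set
\[
P^n_s:=\exp\!\Big(\mathfrak{a}^{n\cdot}\big(\mfv(\dd_{t+s})-\mfv(\dd_t)\big)+s\,\mathfrak{a}^{n\cdot}\overline\mu+\sum_{u=1}^{s}\bb^n_{t+u}\Big),
\qquad
X^n_s:=\mathfrak{a}^{n\cdot}\sum_{u=1}^{s}\cZ_{t+u}.
\]
The whole discrepancy is thus concentrated in the factor $e^{\varepsilon X^n_s}-1$, which vanishes as $\varepsilon\to0$.

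Next I would take absolute values and use the conditional Jensen inequality together with the tower property to reduce to unconditional expectations, $\EE\big[|V^n_t/F^n_t-\cV^n_t/F^n_t|\big]\le\sum_s e^{-rs}\,\EE\big[|P^n_s|\,|e^{\varepsilon X^n_s}-1|\big]$. Then the elementary bounds $|e^{y}-1|\le|y|e^{|y|}$ and, for any $\eta>0$, $|y|\le\eta^{-1}e^{\eta|y|}$, give
\[
|e^{\varepsilon X^n_s}-1|\le \varepsilon\,|X^n_s|\,e^{\varepsilon|X^n_s|}\le \frac{\varepsilon}{\eta}\,e^{(\varepsilon+\eta)|X^n_s|}.
\]
This isolates the announced factor~$\varepsilon$, and it remains to prove that $\sum_s e^{-rs}\,\EE\big[|P^n_s|\,e^{(\varepsilon+\eta)|X^n_s|}\big]<\infty$ for some~$\eta>0$.

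To estimate each summand I would argue as follows. Since $\mfv(\dd_{t+s})$ is deterministic and, by Standing Assumption~\ref{sass:taxes}, constant once $t+s\ge t_\star$, the factor $\exp(\mathfrak{a}^{n\cdot}(\mfv(\dd_{t+s})-\mfv(\dd_t)))$ is bounded uniformly in~$s$ by some $M<\infty$ (possibly depending on~$t$). By Assumption~\ref{ass:Fgrowth} the idiosyncratic noise is independent of~$(\cZ_u)$, hence of~$X^n_s$, so the expectation factorizes, and the Gaussian moment generating functions yield $\EE[e^{\sum_u\bb^n_{t+u}}]=e^{\frac{s}2\sigma_{\bb^n}^2}$ and $\EE[e^{(\varepsilon+\eta)|X^n_s|}]\le 2e^{\frac12(\varepsilon+\eta)^2(\sigma^n_s)^2}$, with $(\sigma^n_s)^2:=\mathrm{Var}(X^n_s)$. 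Reusing the variance estimate from the proof of Proposition~\ref{prop:existenceCondFV} --- writing $\cZ_{t+u}=\Gamma^u\cZ_t+\sum_{v=1}^u\Gamma^{u-v}\cE_{t+v}$ and bounding the partial sums of~$\Gamma$ by $(1-|\Gamma|)^{-1}$ via $|\Gamma|<1$ --- gives $(\sigma^n_s)^2\le s\,|\mathfrak{a}^{n\cdot}|^2|\sqrt{\Sigma}|^2(1-|\Gamma|)^{-2}+C_0$ with $C_0$ independent of~$s$. Collecting everything, the $s$-th summand is bounded by a constant times $\exp\big(s[-r+\mathfrak{a}^{n\cdot}\overline\mu+\tfrac12\sigma_{\bb^n}^2+\tfrac12(\varepsilon+\eta)^2|\mathfrak{a}^{n\cdot}|^2|\sqrt{\Sigma}|^2(1-|\Gamma|)^{-2}]\big)$.

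\textbf{The main obstacle} is precisely controlling this exponent. As $\eta\downarrow0$ the bracket converges to $-r$ plus the $n$-th entry of the maximum in~\eqref{eq main technical ass}, which is at most $-r+\rho<0$; hence for $\eta$ small enough it is strictly negative, the geometric series converges, and taking the maximum over the finitely many firms produces the constant~$C$ (depending on~$t$ through~$M,C_0$ and on~$\rho,r$) with $\EE[|V^n_t/F^n_t-\cV^n_t/F^n_t|]\le C\varepsilon$. The subtlety is that the extra weight $e^{\varepsilon|X^n_s|}$ must not consume the whole margin $r-\rho$; this is exactly why I use the bound $|y|\le\eta^{-1}e^{\eta|y|}$, which keeps the variance coefficient aligned with the one appearing in~$\rho$, rather than a cruder Cauchy--Schwarz split that would double that coefficient and could violate the strict inequality $\rho<r$.
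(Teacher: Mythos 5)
Your proof is correct, and its skeleton coincides with the paper's: both write $(V^n_t-\cV^n_t)/F^n_t$ as a series whose $s$-th term is a (deterministic times idiosyncratic) prefactor multiplying $e^{\varepsilon X^n_s}-1$, with $X^n_s=\mathfrak{a}^{n\cdot}\sum_{u=1}^s\cZ_{t+u}$, and both win by extracting one factor of $\varepsilon$ while keeping the residual exponential moment small enough that the exponent of the $s$-th summand stays strictly below $r$, i.e.\ inside the margin $r-\rho$ of~\eqref{eq main technical ass}. Where you genuinely differ is the execution of this central estimate. The paper first truncates (introducing $V^{n,K}_t$, $\cV^{n,K}_t$ and disposing of the two tail terms via Proposition~\ref{prop:existenceCondFV} and Lemma~\ref{lem:approx firm value}), then represents $\Delta_s=e^{\varepsilon X^n_s}-1=\varepsilon\int_0^1 e^{\varepsilon\lambda X^n_s}X^n_s\,\mathrm{d}\lambda$ and applies H\"older's inequality with an exponent $p$ close to $1$, bounding $\Vert X^n_s\Vert_q\le C_q s$ and the exponential factor by $\exp\bigl(s\tfrac{\varepsilon^2p}{2}|\mathfrak{a}^{n\cdot}|^2|\sqrt{\Sigma}|^2(1-|\Gamma|)^{-2}\bigr)$, ending with summands of the form $\varepsilon\, s\, e^{-cs}$. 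You instead work with the full series directly (legitimate, since both series are already known to converge) and replace the integral representation plus H\"older by the pointwise bounds $|e^y-1|\le|y|e^{|y|}$ and $|y|\le\eta^{-1}e^{\eta|y|}$ followed by the Gaussian bound $\EE[e^{a|Z|}]\le 2e^{a^2\sigma^2/2}$; your $\eta\downarrow0$ plays exactly the role of the paper's $p\downarrow1$ in keeping the variance coefficient aligned with the one inside $\rho$, which you correctly identify as the crux. Your route is more elementary (no truncation bookkeeping, no H\"older), at the cost of absorbing the polynomial factor $|X^n_s|$ into the exponential. One point you should make explicit: the admissible $\eta$ a priori depends on $\varepsilon$ through the cross term $2\varepsilon\eta$ in $(\varepsilon+\eta)^2-\varepsilon^2$; since Standing Assumption~\ref{sassump:VAR} imposes $\varepsilon\le 1$, you can bound $2\varepsilon\eta+\eta^2\le 2\eta+\eta^2$ and choose $\eta$ uniformly in $\varepsilon$, so the final constant $C$ does not secretly depend on $\varepsilon$ --- the same uniformity the paper needs (and also leaves implicit) when choosing $p-1$ small.
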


\section{A credit risk model}\label{sec2}
\subsection{General information on credit risk}

In credit risk assessment, Internal Rating Based (IRB)~\cite{basel2017ead} introduces four parameters:
the probability of default (PD) measures the default risk associated with each borrower,
the exposure at default (EAD) measures the outstanding debt at the time of default,
the loss given default (LGD) denotes the expected percentage of EAD that is lost if the debtor defaults, 
and the effective maturity~$T$ represents the duration of the credit.
With these four parameters, we can compute the portfolio loss~$L$, with a few assumptions:
\begin{assumption}\label{ass:portfolio}
Consider a portfolio of $N\in\mathbb{N}^*$ credits. 
For $1 \leq n \leq N$, 
\begin{enumerate}[(1)]
    \item \textcolor{black}{Firm~$n$ has issued two classes of securities: equity and debt;}
    \item $(\EAD_t^n)_{t\in\NN^*}$ is a $\RR^+_*$-valued deterministic process; 
    \item $(\LGD_t^n)_{t\in\NN^*}$ is a $(0, 1]$-valued deterministic process;
    \item \textcolor{black}{the default barrier $\mfB^n\in\RR^+$ is a deterministic scalar that we will use to define the conditions under which a borrower is considered to be in default.} We will also denote $B^n := \displaystyle \frac{\mfB^n}{F_0^n}$ as debt on cash flow ratio,
    \item \textcolor{black}{the value of the firm~$n$ at time~$t$ is assumed to be a tradable asset given by~$V^n_{t}$ defined in~\eqref{eq:DCF}.}
\end{enumerate}
\end{assumption}
Even if the $\LGD$ and the $\EAD$ are assumed here to be deterministic, we could take them to be stochastic. In particular, they could (or should) depend on the climate transition scenario: (1) the $\LGD$ could be impacted by the premature write down of assets - that is stranded assets - due to the climate transition, while (2) the $\EAD$ could depend on the bank's balance sheet, which can be modified according to the bank's policy or to the credit conversion factor
of the obligor (if related to climate transition). This will be the object of future research.

{
\color{black}
According to~\cite{kruschwitz2020stochastic}, there are two ways to handle the default of a company: for a given financing policy, a levered firm is
\begin{enumerate}
    \item \textit{in danger of illiquidity} if the cash flows do not suffice to fulfill the creditors’ payment claims (interest and net redemption) as contracted,
    \item \textit{over-indebted} if the market value of debt exceeds the firm’s market value.
\end{enumerate}
We recall that for all $n\in \OneN$, we consider $\cV^n_t$, defined in~\eqref{eq de cV}, to be the proxy value of firm~$n$ at time~$t$ and its conditional law given in Corollary~\ref{cor:condLawV}. We consider the second definition proposed by~\cite{kruschwitz2020stochastic}: a firm default when it is \textit{over-indebted}, that is in fact the same approach used in the structural credit risk models. Therefore, the default of entity~$n$ occurs when~$\cV^n_{t}$ falls below a given barrier~$\mfB^n$, related to the net debt, given in Assumption~\ref{ass:portfolio}(3). 
}

\begin{definition}\label{def:loss}
For $t \ge 1$, the potential loss of the portfolio at time~$t$ is defined as
\begin{equation}
L^N_{t} := \sum_{n=1}^{N} \EAD_t^n \cdot \LGD_t^n \cdot \bOne_{\{ \cV^n_{t} \leq \mfB^n\}}\label{def:ptloss}.
\end{equation}
\end{definition}

We take the point of view of the bank managing its credit portfolio and which has to compute various risk measures impacting its daily/monthly/quarterly/yearly routine, some of which may be required by regulators. 
We are also interested in understanding and visualizing how these risk measures evolve in time and particularly how they change due to \textcolor{black}{carbon price} paths, i.e. due to transition scenarios. This explains why all these measures are defined below with respect to the information available at~$t$, namely the $\mathbb{F}$-filtration.

We now study statistics of the process
$(L^N_{t})_{t\ge 0}$, typically its mean, variance, and quantiles, under various transition scenarios.
This could be achieved through (intensive) numerical simulations, however we shall assume that the portfolio is fine grained so that the idiosyncratic risks can be averaged out. The above quantities can then be approximated by only taking into account the common risk factors. We thus make the following assumption:

\begin{assumption}\label{ass:gordy2003}
\,
For all~$t\in\NN^*$, the family $(\EAD_t^n)_{n=1,\ldots, N}$ is a sequence of positive constants such that
    \begin{enumerate}
        \item $\displaystyle\sum_{n\geq 1} \EAD_t^n = +\infty$;
        \item there exists $\upsilon > 0$ such that $\frac{\EAD_t^n}{\sum_{n=1}^{N} \EAD_t^n} = \mathcal{O}(N^{-(\frac{1}{2}+\upsilon)})$, as~$N$ tends to infinity.
    \end{enumerate}
\end{assumption}
    
    The following proposition, similar to the one introduced in~\cite[Propositions~1, 2]{gordy2003model} and used when a portfolio is perfectly fine grained,
    shows that we can approximate the portfolio loss by the conditional expectation of losses given the systemic factor.
    For all~$t\in\NN$, define 
    \begin{equation}
         \mathrm{L}^{\GG,N}_{t} := \EE\left[L^N_{t}\middle|\cG_t\right] =  \sum_{n=1}^{N} \EAD_t^n \cdot \LGD_t^n \cdot \Phi\left(\frac{\log(B^n) - \mathfrak{m}^n(\dd,t,\cA^\circ_t) }{\sigma_{\bb^n}\sqrt{t}} \right),
    \end{equation}
    where $\mathfrak{m}^n(\dd,t,\cA^\circ_t)$ is defined in Corollary~\ref{cor:condLawV}.     
    \begin{proposition} \label{theo:gordy2003}
Under Assumptions~\ref{ass:portfolio} and~\ref{ass:gordy2003}, 
     $L^N_{t}- \mathrm{L}^{\GG,N}_{t}$ converges to zero almost surely as~$N$ tends to infinity, for each $t \in \NN$.
\end{proposition}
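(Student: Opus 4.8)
The plan is to exploit the conditional independence of defaults given the systemic information $\cG_t$ and then to invoke a strong law of large numbers for weighted sums of independent, uniformly bounded random variables. First I would fix $t\in\NN$ and condition on $\cG_t$. By Corollary~\ref{cor:condLawV} together with $\mfB^n = B^n F^n_0$, the default event rewrites as $\{\cW^n_t \le \log(B^n) - \mathfrak{m}^n(\dd,t,\cA^\circ_t)\}$, where $\cW^n_t=\sum_{u=1}^t \bb^n_u$. Since, by Assumption~\ref{ass:Fgrowth}, the idiosyncratic noises $(\bb^n)_{1\le n\le N}$ are independent across $n$ and independent of $\Theta$ (hence of $\cG_t$), the indicators $X_n := \bOne_{\{\cV^n_t\le\mfB^n\}}$ are, conditionally on $\cG_t$, independent Bernoulli variables with $\EE[X_n\mid\cG_t]=\Phi\big((\log B^n-\mathfrak{m}^n(\dd,t,\cA^\circ_t))/(\sigma_{\bb^n}\sqrt t)\big)$, because $\mathfrak{m}^n(\dd,t,\cA^\circ_t)$ is $\cG_t$-measurable while $\cW^n_t\sim\cN(0,t\sigma_{\bb^n}^2)$ is independent of $\cG_t$. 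This simultaneously yields the announced closed form for $\mathrm{L}^{\GG,N}_t=\EE[L^N_t\mid\cG_t]$ and exhibits $L^N_t-\mathrm{L}^{\GG,N}_t=\sum_{n=1}^N \EAD_t^n\LGD_t^n\,(X_n-\EE[X_n\mid\cG_t])$ as a sum of conditionally centered, conditionally independent terms bounded by $\EAD_t^n$ (since $\LGD_t^n\le 1$ and $|X_n-\EE[X_n\mid\cG_t]|\le1$).

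The core estimate is a conditional variance bound. Writing $S_N:=\sum_{n=1}^N\EAD_t^n$ and working with the exposure-weighted average (the quantity on which Assumption~\ref{ass:gordy2003} bites), I would use the conditional independence to obtain
\begin{equation*}
\mathrm{Var}\!\left(\frac{L^N_t-\mathrm{L}^{\GG,N}_t}{S_N}\,\middle|\,\cG_t\right)=\frac{1}{S_N^2}\sum_{n=1}^N (\EAD_t^n\LGD_t^n)^2\,\mathrm{Var}(X_n\mid\cG_t)\le \frac{1}{4S_N^2}\sum_{n=1}^N (\EAD_t^n)^2\le \frac{1}{4}\,\frac{\max_{1\le n\le N}\EAD_t^n}{S_N}.
\end{equation*}
By Assumption~\ref{ass:gordy2003}(2) the last ratio is $\Oo(N^{-1/2-\upsilon})$, while Assumption~\ref{ass:gordy2003}(1) guarantees $S_N\uparrow\infty$, so that the normalization is meaningful. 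Chebyshev's inequality then gives convergence to zero in conditional probability, for $\PP$-almost every realization of $\cG_t$.

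The remaining task is the almost-sure upgrade, which I expect to be the main obstacle. Because $\upsilon>0$ is not assumed to exceed $1/2$, the bound $\Oo(N^{-1/2-\upsilon})$ need not be summable, so Borel--Cantelli cannot be applied along the full sequence. I would circumvent this by extracting a subsequence $(N_k)$ with $\sum_k N_k^{-1/2-\upsilon}<\infty$ (for instance $N_k=\lceil k^{\alpha}\rceil$ with $\alpha(1/2+\upsilon)>1$), along which Borel--Cantelli yields conditional almost-sure convergence, and then controlling the oscillation of the partial sums between $N_k$ and $N_{k+1}$ through Kolmogorov's (or Doob's) maximal inequality applied to the conditionally independent increments, bounded once more via the exposure-share estimate. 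This is precisely the triangular-array refinement underlying Gordy's Propositions~1--2~\cite{gordy2003model}, the delicate point being that the normalization $S_N$ itself varies with $N$. Once conditional almost-sure convergence is established for $\PP_{\cG_t}$-almost every $\omega$, integrating the corresponding conditional null sets over $\cG_t$ produces an unconditional null set, giving the claimed $\PP$-almost-sure convergence of $L^N_t-\mathrm{L}^{\GG,N}_t$ to zero for each $t\in\NN$.
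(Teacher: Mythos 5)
Your first half runs exactly parallel to the paper's own proof: the paper likewise pulls $\EAD^n_t\LGD^n_t$ out of the conditional expectation, applies Corollary~\ref{cor:condLawV} to get $\mathrm{L}^{\GG,N}_t=\sum_{n\le N}\EAD^n_t\LGD^n_t\,\Phi\bigl((\log B^n-\mathfrak{m}^n(\dd,t,\cA^\circ_t))/(\sigma_{\bb^n}\sqrt{t})\bigr)$, and then, for the limit, simply invokes ``a version of the strong law of large numbers'' from the appendix of Gordy~\cite{gordy2003model}. So your conditional-independence observation, Bernoulli structure, variance bound $\mathrm{Var}(X_n\mid\cG_t)\le 1/4$, and the Chebyshev--subsequence--maximal-inequality program are an attempt to reconstruct precisely the result the paper cites rather than proves; that attempt is legitimate. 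For what it is worth, the cleanest way to finish that reconstruction is not the subsequence-plus-oscillation argument (whose oscillation control you leave as a sketch) but the Khintchine--Kolmogorov convergence theorem plus Kronecker's lemma: Assumption~\ref{ass:gordy2003}(2) gives $\sum_n(\EAD^n_t/S_n)^2=\Oo\bigl(\sum_n n^{-1-2\upsilon}\bigr)<\infty$, which yields the conditional almost-sure convergence of the normalized sums in one stroke; this is the route Gordy takes via Petrov's theorem.

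The genuine gap is your last sentence. Every estimate you establish concerns the exposure-weighted quantity $(L^N_t-\mathrm{L}^{\GG,N}_t)/S_N$ with $S_N=\sum_{n\le N}\EAD^n_t\uparrow\infty$, yet you conclude convergence of the unnormalized difference $L^N_t-\mathrm{L}^{\GG,N}_t$; the implication goes the wrong way, since dividing by $S_N\to\infty$ only makes convergence easier. Moreover the unnormalized claim cannot be repaired: take all firms identical with $\EAD^n_t=\LGD^n_t=1$ (admissible under Assumption~\ref{ass:gordy2003} whenever $\upsilon\le 1/2$). Conditionally on $\cG_t$, $L^N_t-\mathrm{L}^{\GG,N}_t$ is then a sum of $N$ i.i.d.\ centered Bernoulli variables with success probability $p=\Phi\bigl((\log B-\mathfrak{m}(\dd,t,\cA^\circ_t))/(\sigma_{\bb}\sqrt{t})\bigr)\in(0,1)$, so its conditional variance is $Np(1-p)\to\infty$ and the difference exhibits nondegenerate Gaussian fluctuations of order $\sqrt{N}$ rather than tending to zero. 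In other words, what you actually prove --- and what Gordy's Propositions~1--2 actually assert, since they are stated for the portfolio loss \emph{rate} --- is the normalized statement, which is the only true version; the paper's wording elides the normalization and its proof-by-citation inherits the same imprecision. Your write-up should state and prove the result for $(L^N_t-\mathrm{L}^{\GG,N}_t)/S_N$ (or renormalize exposures so that $S_N$ is the natural unit), and must not present normalized convergence as if it implied the unnormalized one.
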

This implies that, at each time~$t\in\NN$, in the limit, we only require the knowledge of~$\mathrm{L}^{\GG,N}_{t}$ to approximate the distribution of~$L^N_{t}$. In the following, we will use~$\mathrm{L}^{\GG,N}_{t}$ as a proxy for~$L^N_{t}$.
\begin{proof}Let $t\in\NN$. We have
\begin{equation*}
    \begin{split}
    \mathrm{L}^{\GG,N}_{t} &= \EE\left[L^N_{t}\middle|\cG_t\right]\\
    &= \EE\left[\sum_{n=1}^{N} \EAD_t^n \cdot \LGD_t^n \cdot \bOne_{\{ \cV^n_{t} \leq \mfB^n\}}\middle|\cG_t\right]\quad\text{from~\eqref{def:ptloss}}\\
    &= \sum_{n=1}^{N} \EAD_t^n \cdot \LGD_t^n \cdot \EE\left[\bOne_{\{ \cV^n_{t} \leq \mfB^n\}}\middle|\cG_t\right]\quad\text{from (1) and (3) in Assumption~\ref{ass:portfolio} }\\
    &= \sum_{n=1}^{N} \EAD_t^n \cdot \LGD_t^n \cdot \PP\left[\cV^n_{t} \leq \mfB^n\}\middle|\cG_t\right]\\
    &= \sum_{n=1}^{N} \EAD_t^n \cdot \LGD_t^n \cdot \Phi\left(\frac{\log(\mfB^n) - \log(F_0^n) - \mathfrak{m}^n(\dd,t,\cA^\circ_t) }{\sigma_{\bb^n}\sqrt{t}} \right)\quad\text{from Corollary~\ref{eq:condLawV}}.\\
    \end{split}
\end{equation*}
The rest of the proof requires a version of the strong law of large numbers (Appendix of~\cite[Propositions~1, 2]{gordy2003model}), where the systematic risk factor is~$\cA_t^\circ$.
\end{proof}

For stress testing, it is fundamental to estimate through some statistics of loss, bank's capital evolution. In particular, some key measures for the bank to understand the (dynamics of the) risk in its portfolios of loans are the loss and the probability of default conditionally to the information generated by the risk factors. We would like to understand how these key measures are distorted when \textcolor{black}{GHG emissions of firms and of households are charged}. To this aim, we rely on the results derived in Section~\ref{climeco} and Section~\ref{subse firm valuation}.  
Precisely, given a portfolio of $N\in\NN^*$ counterparts, each of which belonging to any sector, for a date $t\in\NN$ and a time horizon~$T\in\NN$, we would like to know these risk measures at~$t$ of the portfolio at time horizon~$T$.
    
\begin{definition}
Let $t  \ge 0$ be the time at which the risk measures are computed over a period $T\geq 1$. As classically done (and shown in Figure~\ref{fig:loss_distribution}), the potential loss is divided into three components~\cite{yeh2005basel}:
    \begin{itemize}
        \item The conditional Expected Loss (EL) is the amount that an institution expects to lose on a credit exposure seen at $t$ and over a given time horizon T. It has to be quantified/included into the products and charged to the clients, and reads
        \begin{equation}
        \EL^{N,T}_t := \EE\left[\mathrm{L}^{\GG,N}_{t+T}\middle|\cG_t\right]\label{eq:el}.
        \end{equation}
        In the normal course of business, a financial institution should set aside an amount equal to the EL as a provision or reserves, even if it should be covered from the portfolio's earnings.
        \item The conditional Unexpected Loss (UL) is the amount by which potential credit losses might exceed the EL. The UL should be covered by capital requirements.
         For $\alpha \in (0, 1)$,
        \begin{equation} \UL^{N,T}_{t,\alpha}\!:= \VaR^{\alpha, N,T}_t - \EL^{N,T}_{t}\label{eq:ul},
        \quad\text{where} \quad
            1-\alpha = \PP\left[\mathrm{L}_{t+T}^{\GG,N} \leq \VaR^{\alpha, N,T}_t\middle|\cG_t \right].
        \end{equation}
        \item The Stressed Loss (or Expected Shortfall or ES) is the amount by which potential credit losses might exceed the capital requirement $\VaR^{\alpha}_t(L^N_{s})$:
        \begin{equation} 
        \ES^{N,T}_{t,\alpha}:= \EE\left[\mathrm{L}_{t+T}^{\GG,N} \middle| \mathrm{L}_{t+T}^{\GG,N} \geq \VaR^{\alpha, N,T}_t, \cG_t\right],
        \qquad\text{for }\alpha \in (0,1). \label{eq:sl}
        \end{equation}
        This loss is mitigated through economic capital.
    \end{itemize}
    \end{definition}
    \begin{figure}[!ht]
        \centering
        \includegraphics[width=0.99\textwidth]{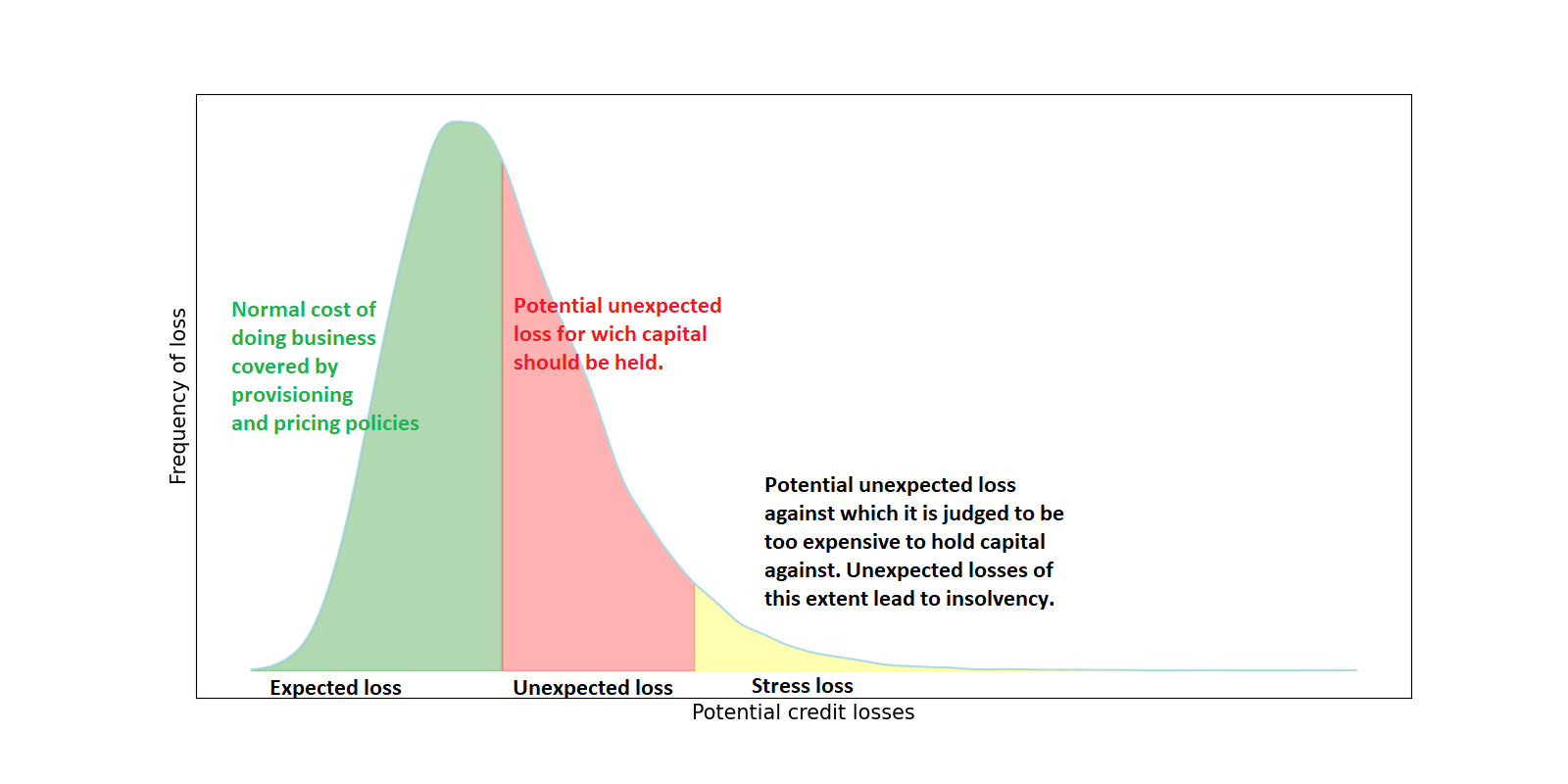}
        \caption{An example of loss distribution.
        Source: Page 8 in~\cite{yeh2005basel}.}
        \label{fig:loss_distribution}
    \end{figure}
In the following sections, we write the expression of the portfolio EL and UL as functions of the parameters and of the processes introduced above, and introduce the entity's probability of default.
\subsection{Expected loss}
The following proposition computes the default probability of each firm and the portfolio expected loss.
\begin{proposition}\label{pr cond loss and pd}
Let $(\Upsilon_{u})_{u\in\NN}$ 
and $(\mathfrak{R}^n_{u}(\dd))_{u\in\NN}$
(with $n\in\{1,\ldots,N\}$)
be as in~\ref{app:VAR}  and~\eqref{eq:FV_second_term} respectively. 
For $(a,\theta)\in \RR^I\times \RR^I$, $t\in\NN$, $T\in\NN^*$, and $n \in \OneN$,
define
\begin{align*}
    \mathfrak{L}^n(\dd,t,T, a,\theta) := \Phi\left(\frac{\log(B^n) -\mathcal{K}^n(\dd, t, T, a,\theta)}{\sqrt{\mathcal{L}^n(t,T)}}\right),
\end{align*}
where $\mathcal{K}^n(\dd, t, T, a,\theta)$ and $\mathcal{L}^n(t,T)$ are defined in Remark~\ref{rem:cond expectation fv}. 
Then, the (conditional) probability of default of the entity~$n$ at time~$t$ over the time horizon~$T$ is
    \begin{equation}
        \PD^n_{t,T,\dd} := \mathbb{P} \left(\cV^{n}_{t+T} \leq \mfB^n | \cG_t \right) = \mathfrak{L}^n(\dd,t,T, \cA_t^\circ,\Theta_t),\label{eq:stressedPD}
    \end{equation}
    and the (conditional) EL of the  portfolio at time~$t$ over the time horizon~$T$ reads
    \begin{equation}
        \EL^{N,T}_{t,\dd} := \EL^{N,T}_{t} =
        \sum_{n=1}^{N} \EAD_{t+T}^n \cdot \LGD_{t+T}^n \cdot \mathfrak{L}^n(\dd,t,T, \cA_t^\circ,\Theta_t).\label{eq:stressedEL}
    \end{equation}
    \end{proposition}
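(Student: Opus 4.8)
The plan is to derive both formulas as direct consequences of the conditional log-normality established in Remark~\ref{rem:cond expectation fv}, combined with the tower property of conditional expectations. First I would treat the probability of default. By Remark~\ref{rem:cond expectation fv}, conditionally on $\cG_t$ the random variable $\log(\cV^n_{t+T})$ is Gaussian with mean $\log(F^n_0) + \mathcal{K}^n(\dd,t,T,\cA^\circ_t,\Theta_t)$ and variance $\mathcal{L}^n(t,T)$. Since $\cV^n_{t+T} > 0$ almost surely, the default event $\{\cV^n_{t+T} \le \mfB^n\}$ coincides with $\{\log(\cV^n_{t+T}) \le \log(\mfB^n)\}$, and using the normalization $\log(\mfB^n) = \log(B^n) + \log(F^n_0)$ coming from $B^n = \mfB^n/F^n_0$ in Assumption~\ref{ass:portfolio}(4), a standard Gaussian CDF evaluation yields exactly $\mathfrak{L}^n(\dd,t,T,\cA^\circ_t,\Theta_t)$, which proves~\eqref{eq:stressedPD}.

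For the expected loss, I would start from the definition~\eqref{eq:el}, namely $\EL^{N,T}_t = \EE[\mathrm{L}^{\GG,N}_{t+T}|\cG_t]$. Recalling that $\mathrm{L}^{\GG,N}_{t+T} = \EE[L^N_{t+T}|\cG_{t+T}]$ is $\cG_{t+T}$-measurable and that $\cG_t \subseteq \cG_{t+T}$ by construction of the filtration $\GG$, the tower property collapses the nested conditional expectation to $\EL^{N,T}_t = \EE[L^N_{t+T}|\cG_t]$. Substituting the portfolio loss~\eqref{def:ptloss} and using that $(\EAD^n_{t+T})$ and $(\LGD^n_{t+T})$ are deterministic by Assumption~\ref{ass:portfolio}(2)--(3), linearity of conditional expectation gives $\EL^{N,T}_t = \sum_{n=1}^N \EAD^n_{t+T}\LGD^n_{t+T}\,\PP(\cV^n_{t+T} \le \mfB^n|\cG_t)$. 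Each conditional probability is precisely the quantity $\PD^n_{t,T,\dd}$ computed in the first step, which yields~\eqref{eq:stressedEL}.

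I expect no genuine obstacle here; the content is essentially bookkeeping once Remark~\ref{rem:cond expectation fv} is available. The only points requiring care are the correct use of the barrier normalization $B^n = \mfB^n/F^n_0$, so that the $\log(F^n_0)$ term cancels inside the Gaussian CDF and leaves $\log(B^n)$ against $\mathcal{K}^n$, and the clean application of the tower property. In particular, I would verify that the proxy $\mathrm{L}^{\GG,N}_{t+T}$ is sandwiched between the two $\sigma$-fields $\cG_t$ and $\cG_{t+T}$, so that conditioning on $\cG_t$ recovers the genuine conditional loss $\EE[L^N_{t+T}|\cG_t]$ without any residual $\cG_{t+T}$-dependence.
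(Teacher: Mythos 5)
Your proof is correct, and the PD part coincides with the paper's argument: both read off the conditional Gaussian law of $\log(\cV^n_{t+T})$ given $\cG_t$ from Remark~\ref{rem:cond expectation fv} and use $B^n = \mfB^n/F^n_0$ to cancel the $\log(F^n_0)$ term. For the expected loss, however, you take a genuinely different and shorter route. The paper substitutes the explicit formula $\mathrm{L}^{\GG,N}_{t+T} = \sum_n \EAD^n_{t+T}\LGD^n_{t+T}\,\Phi\bigl(\,\cdot\,\bigr)$, whose argument involves $\mathfrak{m}^n(\dd,t+T,\cA^\circ_{t+T})$ and only the idiosyncratic standard deviation $\sigma_{\bb^n}\sqrt{t+T}$; it then decomposes $\cA^\circ_{t+T}$ into a $\cG_t$-measurable part plus $\sum_{u=1}^T\Theta_{t+u}$, invokes the VAR conditional law~\eqref{VAR:condLaw}, and integrates $\Phi$ against the Gaussian law of the systematic increments via the identity $\int\Phi(a+bx)\phi(x)\,\mathrm{d}x = \Phi\bigl(a/\sqrt{1+b^2}\bigr)$, which recombines $\sigma_{\bb^n}^2(t+T)$ and $\mathcal{S}^n(T)^2$ into $\mathcal{L}^n(t,T)$. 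You bypass all of this: since $\mathrm{L}^{\GG,N}_{t+T} = \EE[L^N_{t+T}|\cG_{t+T}]$ and $\cG_t\subseteq\cG_{t+T}$, the tower property (legitimate here, as the loss is bounded by $\sum_n\EAD^n_{t+T}\LGD^n_{t+T}$) collapses $\EL^{N,T}_t$ to $\EE[L^N_{t+T}|\cG_t]$, and linearity plus the already-proved PD formula finish the job — the Gaussian convolution is silently absorbed into Remark~\ref{rem:cond expectation fv}, where it is proved once and for all in the appendix. Your argument is more conceptual and makes transparent that EL is the EAD-LGD-weighted sum of PDs. What the paper's longer computation buys is the intermediate representation~\eqref{eq:condLaw in PD} in terms of the standard Gaussian variables $\xX^n$ and the systematic volatility $\mathcal{S}^n(T)$: that representation is not a throwaway, since it is reused immediately afterwards to derive the unexpected loss~\eqref{eq:stressedUL} and Corollary~\ref{cor:UL}, where the quantile computation genuinely needs the split between systematic and idiosyncratic noise rather than just the total variance $\mathcal{L}^n(t,T)$.
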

    \begin{proof}
    Let~$t\in\NN$ and $T\in\NN^*$.
    For $1\leq n\leq N$, Remark~\ref{rem:cond expectation fv} gives the law of~$\log(\cV^n_{t+T})\vert\cG_t$, 
    and~\eqref{eq:stressedPD} follows.
    Moreover, 
    \begin{equation*}
    \begin{split}
        \EL^{N,T}_{t,\dd} &= \EE \left[ \mathrm{L}^{\GG,N}_{t+T} | \cG_t \right] \\
        &= \EE \left[ \sum_{n=1}^{N} \EAD_{t+T}^n \cdot \LGD_{t+T}^n \cdot \Phi\left(\frac{\log(\mfB^n)-\log(F_0^n) - \mathfrak{m}^n(\dd,t+T,\cA^\circ_{t+T}) }{\sigma_{\bb^n}\sqrt{t+T}} \right) \middle| \cG_t \right]\\
        &= \sum_{n=1}^{N} \EAD_{t+T}^n \cdot \LGD_{t+T}^n \cdot \EE \left[ \Phi\left(\frac{\log(B^n) - \mathfrak{m}^n(\dd,t+T,\cA^\circ_{t+T}) }{\sigma_{\bb^n}\sqrt{t+T}} \right) \middle| \cG_t \right],
    \end{split}
\end{equation*} 
where the last equality comes from Assumption~\ref{ass:portfolio}(1)-(3). However, 
\begin{align*}
    \mathfrak{m}^n(\dd,t+T,\cA^\circ_{t+T}) &= \mathfrak{a}^{n\cdot}\left(\cA^\circ_{t+T} -\mfv(\dd_{0})\right) + \log(\mathfrak{R}^n_{t+T}(\dd))\\
    &= \mathfrak{a}^{n\cdot}\left(\cA^\circ_{t} + \sum_{u=t+1}^{t+T} \Theta_u -\mfv(\dd_{0})\right) + \log(\mathfrak{R}^n_{t+T}(\dd))\\
    &= \mathfrak{m}^n(\dd,t,\cA^\circ_{t}) + \log(\mathfrak{R}^n_{t+T}(\dd)) - \log(\mathfrak{R}^n_{t}(\dd)) + \mathfrak{a}^{n\cdot}\sum_{u=1}^{T} \Theta_{t+u}.
\end{align*}
For all~$\theta\in\RR^d$, according to~\eqref{VAR:condLaw}, 
\begin{equation*}
    \left(\sum_{u=1}^{T}\Theta_{t+u}\middle|\Theta_t=\theta\right) \sim \cN\left( \Gamma \Upsilon_{T-1} \theta + \left(\sum_{u=1}^{T}\Upsilon_{u-1}\right)\mu ,\eps^2\sum_{u=1}^{T} \Upsilon_{T-u}\Sigma (\Upsilon_{T-u})^\top\right),
\end{equation*}
Let~$n\in\{1,\hdots,N\}$, therefore,
\begin{equation*}
\left(\mathfrak{a}^{n\cdot}\sum_{u=1}^{T}\Theta_{t+u}\middle|\cG_t\right) \sim \cN\left( \mathfrak{a}^{n\cdot}\Gamma \Upsilon_{T-1} \Theta_{t} +  \mathfrak{a}^{n\cdot}\left(\sum_{u=1}^{T}\Upsilon_{u-1}\right)\mu,
    \eps^2\sum_{u=1}^{T} (\mathfrak{a}^{n\cdot}\Upsilon_{T-u})\Sigma (\mathfrak{a}^{n\cdot}\Upsilon_{T-u})^\top\right).
\end{equation*}
Then
\begin{equation}
    \left( \frac{\log(B^n) - \mathfrak{m}^n(\dd,t+T,\cA^\circ_{t+T}) }{\sigma_{\bb^n}\sqrt{t+T}} \middle| \cG_t \right) \sim \frac{\mathcal{S}^n(T)}{\sigma_{\bb^n}\sqrt{t+T}} \xX^n + \frac{\log(B^n) -\mathcal{K}^n(\dd, t, T, \cA^\circ_{t},\Theta_t)}{\sigma_{\bb^n}\sqrt{t+T}},\label{eq:condLaw in PD}
\end{equation}
where $(\xX^n)_{1\leq n\leq N}\sim\cN(\mathbf{0},\Ir_N)$, and where $\mathcal{K}^n(\dd, t, T, \cA^\circ_{t},\Theta_t)$ is defined in~\eqref{eq mean FV cond} and where \begin{equation*}
    \mathcal{S}^n(T) := \eps\sqrt{\sum_{u=1}^{T} (\mathfrak{a}^{n\cdot}\Upsilon_{T-u})\Sigma (\mathfrak{a}^{n\cdot}\Upsilon_{T-u})^\top}.
\end{equation*}
We then have
\begin{equation*}
\begin{split}
    &\EE \left[ \Phi\left( \frac{\mathcal{S}^n(T)}{\sigma_{\bb^n}\sqrt{t+T}} \xX^n + \frac{\log(B^n) -\mathcal{K}^n(\dd, t, T, \cA^\circ_{t},\Theta_t)}{\sigma_{\bb^n}\sqrt{t+T}} \right) \middle| \cG_t \right]\\
    & \qquad \qquad= \EE_{\xX^n} \left[ \Phi\left( \frac{\mathcal{S}^n(T)}{\sigma_{\bb^n}\sqrt{t+T}} \xX^n + \frac{\log(B^n) -\mathcal{K}^n(\dd, t, T, \cA^\circ_{t},\Theta_t)}{\sigma_{\bb^n}\sqrt{t+T}}\right) \right]\\
    & \qquad \qquad= \int_{-\infty}^{+\infty} \Phi\left( \frac{\mathcal{S}^n(T)}{\sigma_{\bb^n}\sqrt{(t+T)}} x + \frac{\log(B^n) -\mathcal{K}^n(\dd, t, T, \cA^\circ_{t},\Theta_t)}{\sigma_{\bb^n}\sqrt{(t+T)}}\right) \phi(x) \mathrm{d}x\\
    & \qquad \qquad= \Phi\left(\frac{\log(B^n) -\mathcal{K}^n(\dd, t, T, \cA^\circ_{t},\Theta_t)}{\sqrt{\mathcal{L}^n(t,T)}}\right),
\end{split}
\end{equation*}
where $\mathcal{L}^n(t,T)$ is defined in~\eqref{eq std FV cond}, and the conclusion follows.

The last equality comes from the following result found in~\cite[Page 1063]{roncalli2020creditrisk}: 
if~$\Phi$ and~$\phi$ are the Gaussian cumulative distribution and density functions, then for $a, b\in\RR$,
\begin{equation}
    \int_{-\infty}^{+\infty} \Phi(a + b x) \phi(x) \mathrm{d} x = \Phi\left(\frac{a}{\sqrt{1+b^2}}\right)\label{INT:int2}.
\end{equation}
\end{proof}

\subsection{Unexpected loss}
At time~$t\in\NN$ and over a given time horizon~$T\in\NN^*$, it follows from the definition of~$\UL$ in~\eqref{eq:ul} that we need to compute the quantile of the (proxy of the) loss distribution $\mathrm{L}_{t+T}^{\GG,N}$.
For $\alpha\in(0,1)$, we obtain from Theorem~\ref{theo:gordy2003},
\begin{equation} 
\begin{split}
    1-\alpha &= \PP\left[\mathrm{L}_{t+T}^{\GG,N} \leq \VaR^{\alpha, N,T}_t\middle|\cG_t \right]\\
    &= \PP\left[ \sum_{n=1}^{N} \EAD_{t+T}^n \cdot \LGD_{t+T}^n \cdot \Phi\left(\frac{\log(B^n) - \mathfrak{m}^n(\dd,t+T,\cA^\circ_{t+T}) }{\sigma_{\bb^n}\sqrt{t+T}} \right)\leq \VaR^{\alpha, N,T}_t\middle|\cG_t \right].
\end{split}
\end{equation}
However, it follows from~\eqref{eq:condLaw in PD},
\begin{equation} \label{eq:stressedUL}
    \scriptstyle 1-\alpha = \scriptstyle\PP_{\xX^1,\hdots,\xX^N}\left[ \sum_{n=1}^{N} \EAD_{t+T}^n \cdot \LGD_{t+T}^n \cdot  \Phi\left( \frac{\mathcal{S}^n(T)}{\sigma_{\bb^n}\sqrt{t+T}} \xX^n + \frac{\log(B^n) -\mathcal{K}^n(\dd, t, T, \cA^\circ_{t},\Theta_t)}{\sigma_{\bb^n}\sqrt{t+T}}\right) \leq \VaR^{\alpha, N,T}_t\right].
\end{equation}
Since the quantile function is not linear, one cannot find an analytical solution. Therefore, a numerical solution is needed. Recall that we must simulate $(\xX^1,\hdots,\xX^N)$ to find $\VaR^{\alpha, N,T}_t$, which will also be a function of the random variables $(\cA^\circ_{t},\Theta_t)$, of dimension~$2I$. 
This can be solved for example by Monte Carlo~\cite{gordy2010nested} or by deep learning techniques~\cite{barrera2022learning}.
{
\color{black}
\subsection{\textcolor{black}{Projection of one-year risk measures of the sub-portfolios}}

At this stage, we use~\eqref{eq:stressedPD} to compute, for each~$n\in\{1,\hdots,N\}$, the probability of default of a given firm~$n$ at maturity~$T$, stressed by the (deterministic) \textcolor{black}{carbon price}~$\delta$. We can also calculate $\EL$ using~\eqref{eq:stressedEL} and $\UL$ using~\eqref{eq:stressedUL}. 
We first need the parameters, especially~$\mathbf{a}^n$, $\sigma_{\bb^n}^2$, $F_{0}^n$, and~$\mfB^n$. 
We can distinguish two ways to determine them:
\begin{enumerate}
    \item \textbf{Firm's view:} $\mathbf{a}^n$, $\sigma_{\mathfrak{b}^n}^2$ and $F_{0}^n$ are calibrated on the firm's historical free cash flows, while $\mfB^n$ relates to the principal of its loans.
    \item \textbf{Portfolio's view:} if we assume that there is just one risk class in the portfolio so that all the firms have the same~$\mathbf{a}^n$, $\sigma_{\mathfrak{b}^n}^2$, and 
 $B^n$ (and not $\mfB^n$), then knowing the historical default of the portfolio, we can use a log-likelihood maximization as in Gordy and Heitfield~\cite{gordy2002corr} to determine them.
\end{enumerate}

\noindent Let us introduce the following assumption related to the portfolio view.
\begin{assumption}\label{as:oneRiskClass}
For each~$1\leq i\leq I$, since there is only one risk class in the sub-portfolio~$g_i$, we have for any $n\in g_i$, 
$\mathbf{a}^n = \mathbf{a}^{n_i}$, $\sigma_{\mathfrak{b}^n}^2 = \sigma_{\mathfrak{b}^{n_i}}^2$, and $B^n = B^{n_i}$. 
\end{assumption}
In our setting, since each firm of the sub-portfolio~$i$ belongs to the sector~$i$, the risk factor of the sub-portfolio~$i$ is $(\Delta^\Yy)^i$ after calling~\eqref{eq:CF_vs_GDPGrowth with conso}.
In practice, banks need to compute the one-year probability of default. 
For clarity, we thus simplify the expressions for the risk measures by setting $T=1$ from now on. 
\begin{corollary}\label{cor:PD}Under Assumption~\ref{as:oneRiskClass}, for $t\in\NN$ and $1\leq i\leq I$ and for each $n\in g_i$, the one-year (conditional) probability of default of firm~$n$ at time $t$ is
    \begin{equation}
    \PD^n_{t,1,\dd} = \PD^{n_i}_{t,1,\dd} = \Phi\left(\frac{\log(B^{n_i}) - \mathcal{K}^{n_i}(\dd, t, 1, \cA_t^\circ,\Theta_t)}{ \sqrt{\mathcal{L}^{n_i}(t,1)}}\right)\label{eq:cor:PD}.
\end{equation}
\end{corollary}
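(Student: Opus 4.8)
The plan is to read the corollary directly off Proposition~\ref{pr cond loss and pd} by specializing the horizon to $T=1$ and then invoking the single-risk-class hypothesis. First I would set $T=1$ in~\eqref{eq:stressedPD}, which immediately yields, for every firm $n$,
\begin{equation*}
\PD^n_{t,1,\dd} = \mathfrak{L}^n(\dd,t,1,\cA_t^\circ,\Theta_t) = \Phi\left(\frac{\log(B^n) - \mathcal{K}^n(\dd,t,1,\cA_t^\circ,\Theta_t)}{\sqrt{\mathcal{L}^n(t,1)}}\right),
\end{equation*}
so the only remaining task is to show that, for an arbitrary $n\in g_i$, the right-hand side coincides with the one built from the representative firm $n_i$.

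The key point I would make explicit is that the map $n\mapsto\mathfrak{L}^n$ factors entirely through the triple $(\mathfrak{a}^{n\cdot},\sigma_{\bb^n}^2,B^n)$, with no residual dependence on $F^n_0$ or $\mfB^n$ individually. Tracing the definitions: $\varrho_n$, and hence $\mathfrak{R}^n_u(\dd)$ for every $u$ through~\eqref{eq:FV_second_term}, depends on $n$ only via $\mathfrak{a}^{n\cdot}$ and $\sigma_{\bb^n}^2$; the conditional mean $\mathfrak{m}^n(\dd,t,\cA_t^\circ)$ of Corollary~\ref{cor:condLawV} involves $n$ only through $\mathfrak{a}^{n\cdot}$ and $\log\mathfrak{R}^n_t(\dd)$; and both $\mathcal{K}^n$ and $\mathcal{L}^n$ from Remark~\ref{rem:cond expectation fv} are assembled from these same ingredients together with the fixed quantities $\Gamma$, $\Upsilon_\cdot$, $\Sigma$, $\mu$, $\overline\mu$, $\eps$. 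The initial cash flow $F^n_0$ enters only inside the ratio $B^n=\mfB^n/F^n_0$, because the default event $\{\cV^n_{t+1}\le\mfB^n\}$ rewrites as $\{\log(\cV^n_{t+1}/F^n_0)\le\log B^n\}$ and the $\log F^n_0$ term cancels against the $\log(F^n_0)$ appearing in the conditional mean of $\log\cV^n_{t+1}$ (Corollary~\ref{cor:condLawV} and Remark~\ref{rem:cond expectation fv}). This cancellation is exactly why Assumption~\ref{as:oneRiskClass} can equate $B^n$ rather than $\mfB^n$ and $F^n_0$ separately.

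With the dependence thus localized, the conclusion is immediate: Assumption~\ref{as:oneRiskClass} gives $\mathfrak{a}^{n\cdot}=\mathfrak{a}^{n_i\cdot}$, $\sigma_{\bb^n}^2=\sigma_{\bb^{n_i}}^2$ and $B^n=B^{n_i}$ for every $n\in g_i$, so each of $\mathfrak{R}^n_\cdot(\dd)$, $\mathcal{K}^n$, $\mathcal{L}^n$ and the threshold $\log B^n$ equals its $n_i$-counterpart, whence $\mathfrak{L}^n(\dd,t,1,\cA_t^\circ,\Theta_t)=\mathfrak{L}^{n_i}(\dd,t,1,\cA_t^\circ,\Theta_t)$ and~\eqref{eq:cor:PD} follows. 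I do not expect a genuine obstacle here, since the statement is a bookkeeping specialization of Proposition~\ref{pr cond loss and pd}; the only care required is the verification that $F^n_0$ truly drops out, i.e. that no term depends on $n$ outside the three quantities equated in Assumption~\ref{as:oneRiskClass}.
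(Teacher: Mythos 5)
Your proposal is correct and follows the same route the paper takes implicitly: the paper states Corollary~\ref{cor:PD} without proof, as the immediate specialization of Proposition~\ref{pr cond loss and pd} to $T=1$ combined with Assumption~\ref{as:oneRiskClass}, which is exactly your argument. Your explicit verification that $\mathfrak{L}^n$ depends on $n$ only through $(\mathfrak{a}^{n\cdot},\sigma_{\bb^n}^2,B^n)$ — in particular that $\log F^n_0$ cancels so only the ratio $B^n=\mfB^n/F^n_0$ matters — is precisely the bookkeeping the paper leaves to the reader, and it is carried out correctly.
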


\subsubsection{Expected loss}
The following corollary, whose proof follows from Corollary~\ref{cor:PD}, gives a simplified formula for $\EL$. 
\begin{proposition}\label{cor:EL}
    Under Assumption~\ref{as:oneRiskClass}, the one-year (conditional) EL of the sub-portfolio~$g_i$ with $i\in\{1,\hdots,I\}$ at time~$t$ is (with $\PD^{n_i}_{t,1,\dd}$ defined in Corollary~\ref{cor:PD})
    \begin{equation}
        \EL^{g_i,1}_{t,\dd} = \left(\sum_{n\in g_i} \EAD_{t+1}^n \cdot \LGD_{t+1}^n \right)\cdot \PD^{n_i}_{t,1,\dd}.\label{eq:cor:EL}
    \end{equation}
\end{proposition}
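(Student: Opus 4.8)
The plan is to specialize Proposition~\ref{pr cond loss and pd} to the sub-portfolio $g_i$ at horizon $T=1$ and then invoke the one-risk-class structure to factor out a common probability of default. First I would write down the conditional expected loss of $g_i$ as the restriction of~\eqref{eq:stressedEL} to the firms of that group, namely
\begin{equation*}
\EL^{g_i,1}_{t,\dd} = \sum_{n\in g_i} \EAD_{t+1}^n \cdot \LGD_{t+1}^n \cdot \mathfrak{L}^n(\dd,t,1, \cA_t^\circ,\Theta_t).
\end{equation*}
Since Proposition~\ref{pr cond loss and pd} identifies $\mathfrak{L}^n(\dd,t,1,\cA_t^\circ,\Theta_t)$ with the conditional default probability $\PD^n_{t,1,\dd}$ through~\eqref{eq:stressedPD}, this is exactly $\sum_{n\in g_i} \EAD_{t+1}^n \cdot \LGD_{t+1}^n \cdot \PD^n_{t,1,\dd}$.

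The second step is to apply Corollary~\ref{cor:PD}. Under Assumption~\ref{as:oneRiskClass}, every firm $n\in g_i$ shares the factor loading $\mathbf{a}^{n_i}$, the idiosyncratic variance $\sigma_{\mathfrak{b}^{n_i}}^2$, and the debt-on-cash-flow ratio $B^{n_i}$ of the representative firm~$n_i$. Inspecting the arguments of $\PD^n_{t,1,\dd}$, i.e. the quantities $\mathcal{K}^n$ and $\mathcal{L}^n$ from Remark~\ref{rem:cond expectation fv}, these depend on $n$ only through the three parameters just listed, so that $\PD^n_{t,1,\dd} = \PD^{n_i}_{t,1,\dd}$ for all $n\in g_i$, which is precisely the content of~\eqref{eq:cor:PD}.

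Finally I would factor this common value out of the sum, obtaining
\begin{equation*}
\EL^{g_i,1}_{t,\dd} = \left(\sum_{n\in g_i} \EAD_{t+1}^n \cdot \LGD_{t+1}^n\right) \PD^{n_i}_{t,1,\dd},
\end{equation*}
which is~\eqref{eq:cor:EL}. I expect no genuine analytic obstacle here: the result is a direct restriction of the already-established portfolio-level formula combined with Corollary~\ref{cor:PD}. The only point requiring a moment of care is to confirm that the default probability depends on the firm index solely through the parameters equalized by Assumption~\ref{as:oneRiskClass}, whereas $\EAD$ and $\LGD$ are left firm-specific and are exactly what remains inside the sum; this is what makes the factorization legitimate rather than merely formal.
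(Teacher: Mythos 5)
Your proposal is correct and follows essentially the same route as the paper, which simply notes that the result "follows from Corollary~\ref{cor:PD}": you restrict the portfolio-level formula~\eqref{eq:stressedEL} of Proposition~\ref{pr cond loss and pd} to $g_i$ with $T=1$, identify $\mathfrak{L}^n$ with $\PD^n_{t,1,\dd}$ via~\eqref{eq:stressedPD}, use Assumption~\ref{as:oneRiskClass} (through Corollary~\ref{cor:PD}) to equalize the default probabilities across the group, and factor the common $\PD^{n_i}_{t,1,\dd}$ out of the sum. Your extra check that $\mathcal{K}^n$ and $\mathcal{L}^n$ depend on $n$ only through $(\mathbf{a}^n,\sigma_{\mathfrak{b}^n},B^n)$ is exactly the justification the paper leaves implicit.
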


\subsubsection{Unexpected loss}
We saw in~\eqref{eq:stressedUL} that determining the UL is not possible analytically and is numerically intensive (since quantiles depend on rare events and because of the dimension of the macroeconomic factors). 
However, Assumption~\ref{ass:gordy2003} allows to further simplify the formula for the UL.

\begin{corollary}\label{cor:UL}
Under Assumption~\ref{as:oneRiskClass}, the one-year (conditional) UL of the sub-portfolio~$g_i$ with $i\in\{1,\hdots,I\}$ at time~$t$ is
    \begin{equation}
    \UL^{g_i,1}_{t,\dd, \alpha} = \left(\sum_{n\in g_i} \EAD_{t+1}^n \cdot \LGD_{t+1}^n\right) \left[\Phi\left(\frac{\mathcal{S}^{n_i}(1) \Phi^{-1}(1-\alpha) + \log(B^{n_i}) - \mathcal{K}^{n_i}(\dd, t, 1, \cA^\circ_{t},\Theta_t)}{\sigma_{\bb^{n_i}}\sqrt{t+1}}\right)-\PD^{n_i}_{t,1,\dd}\right].\label{eq:cor:UL}
\end{equation}
\end{corollary}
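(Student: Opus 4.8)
The plan is to specialize the implicit characterization of the Value-at-Risk in~\eqref{eq:stressedUL} to the sub-portfolio~$g_i$ with horizon~$T=1$, use Assumption~\ref{as:oneRiskClass} to reduce the conditional loss to a monotone function of a single standard Gaussian, invert that function explicitly to obtain a closed form for~$\VaR^{\alpha,g_i,1}_t$, and finally subtract the expected loss from Proposition~\ref{cor:EL}.

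First I would restrict the loss in~\eqref{eq:stressedUL} to the firms~$n\in g_i$ and fix~$T=1$. Under Assumption~\ref{as:oneRiskClass} every firm~$n\in g_i$ shares the common loading~$\mathfrak{a}^{n_i\cdot}$, idiosyncratic variance~$\sigma_{\bb^{n_i}}^2$, and debt ratio~$B^{n_i}$, so the quantities~$\mathcal{K}^n$,~$\mathcal{L}^n$ and~$\mathcal{S}^n$ all coincide with their counterparts indexed by~$n_i$. The key observation is that the conditional randomness in~\eqref{eq:condLaw in PD} enters only through~$\mathfrak{a}^{n\cdot}\Theta_{t+1}$ given~$\cG_t$; since this loading is common across~$g_i$, the standardised Gaussians~$\xX^n$ for~$n\in g_i$ all reduce to a \emph{single} standard normal variable~$\xX^{n_i}$. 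Writing~$\Xi_i := \sum_{n\in g_i}\EAD_{t+1}^n\LGD_{t+1}^n$ for the (deterministic, positive) exposure factor, the conditional sub-portfolio loss then reads
\begin{equation*}
\Xi_i\,\Phi\!\left(\frac{\mathcal{S}^{n_i}(1)}{\sigma_{\bb^{n_i}}\sqrt{t+1}}\,\xX^{n_i} + \frac{\log(B^{n_i})-\mathcal{K}^{n_i}(\dd,t,1,\cA^\circ_t,\Theta_t)}{\sigma_{\bb^{n_i}}\sqrt{t+1}}\right).
\end{equation*}

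The heart of the argument is the quantile inversion. Because~$\mathcal{S}^{n_i}(1)>0$ and~$\Phi$ is continuous and strictly increasing, the displayed expression is a strictly increasing function of~$\xX^{n_i}$. Hence the event that the sub-portfolio loss does not exceed~$\VaR^{\alpha,g_i,1}_t$ is equivalent to~$\{\xX^{n_i}\le x_\alpha\}$ for a suitable threshold~$x_\alpha$, and the defining relation~$\PP[\xX^{n_i}\le x_\alpha]=1-\alpha$ forces~$x_\alpha=\Phi^{-1}(1-\alpha)$. Substituting~$\xX^{n_i}=\Phi^{-1}(1-\alpha)$ into the expression above yields
\begin{equation*}
\VaR^{\alpha,g_i,1}_t = \Xi_i\,\Phi\!\left(\frac{\mathcal{S}^{n_i}(1)\,\Phi^{-1}(1-\alpha)+\log(B^{n_i})-\mathcal{K}^{n_i}(\dd,t,1,\cA^\circ_t,\Theta_t)}{\sigma_{\bb^{n_i}}\sqrt{t+1}}\right).
\end{equation*}

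Finally, Proposition~\ref{cor:EL} gives~$\EL^{g_i,1}_{t,\dd}=\Xi_i\,\PD^{n_i}_{t,1,\dd}$, so inserting both expressions into~$\UL^{g_i,1}_{t,\dd,\alpha}=\VaR^{\alpha,g_i,1}_t-\EL^{g_i,1}_{t,\dd}$ and factoring out~$\Xi_i$ gives exactly~\eqref{eq:cor:UL}. The only genuinely delicate point is the reduction to a one-dimensional driver: it is the single-risk-class hypothesis that collapses the family~$(\xX^n)_{n\in g_i}$ to one Gaussian and makes the loss monotone, which is precisely what renders the otherwise numerically intractable quantile of~\eqref{eq:stressedUL} explicitly invertible.
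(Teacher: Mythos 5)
Your proof is correct and takes essentially the same route as the paper: under Assumption~\ref{as:oneRiskClass} the conditional sub-portfolio loss collapses to a strictly increasing function of the single standard Gaussian~$\xX^{n_i}$, its $(1-\alpha)$-quantile is obtained by substituting $\Phi^{-1}(1-\alpha)$, and subtracting the expected loss from Proposition~\ref{cor:EL} yields~\eqref{eq:cor:UL}. The only difference is presentational — the paper carries out the inversion by explicit algebra (dividing by the exposure, applying $\Phi^{-1}$, isolating $\xX^{n_i}$, then solving $1-\alpha=\Phi(\cdot)$), whereas you invoke quantile equivariance under strictly monotone maps, which is the same mathematical content.
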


\begin{proof}
From~\eqref{eq:stressedUL}, we have
\begin{equation*} 
\begin{split}
    1-\alpha &= \PP_{\xX^1,\hdots,\xX^N}\left[ \sum_{n=1}^{N} \EAD_{t+1}^n \cdot \LGD_{t+1}^n \cdot \Phi\left(\frac{\mathcal{S}^n(1) \xX^n + \log(B^n) -\mathcal{K}^n(\dd, t, 1, \cA^\circ_{t},\Theta_t)}{\sigma_{\bb^n}\sqrt{t+1}}\right)\leq \VaR^{\alpha, g_i,1}_t\right],
\end{split}
\end{equation*}
but with Assumption~\ref{as:oneRiskClass},
\begin{equation*} 
\scriptstyle\begin{split}
    1-\alpha &= \PP_{\xX^{n_i}}\left[ \Phi\left(\frac{\mathcal{S}^{n_i}(1) \xX^{n_i} + \log(B^{n_i})-\mathcal{K}^{n_i}(\dd, t, 1, \cA^\circ_{t},\Theta_t)}{\sigma_{\bb^{n_i}}\sqrt{t+1}}\right)\leq \frac{\VaR^{\alpha, g_i,1}_t}{\sum_{n\in g_i} \EAD_{t+T}^n \cdot \LGD_{t+T}^n}\right]\\
    &= \PP_{\xX^{n_i}}\left[ \frac{\mathcal{S}^{n_i}(1) \xX^{n_i} + \log(B^{n_i})-\mathcal{K}^{n_i}(\dd, t, 1, \cA^\circ_{t},\Theta_t)}{\sigma_{\bb^{n_i}}\sqrt{t+1}}\leq \Phi^{-1}\left(\frac{\VaR^{\alpha, g_i,T}_t}{\sum_{n=1}^{N} \EAD_{t+T}^n \cdot \LGD_{t+T}^n}\right)\right]\\
    &= \scriptstyle\PP_{\xX^{n_i}}\left[ \xX^{n_i} \leq \frac{1}{\mathcal{S}^{n_i}(1) }\left(\sigma_{\bb^{n_i}}\sqrt{t+1}\Phi^{-1}\left(\frac{\VaR^{\alpha, g_i,1}_t}{\sum_{n\in g_i} \EAD_{t+1}^n \cdot \LGD_{t+1}^n}\right)-\log(B^{n_i}) +\mathcal{K}^{n_i}(\dd, t, 1, \cA^\circ_{t},\Theta_t)\right)\right].
\end{split}
\end{equation*}
By recalling that $\xX^{n_i}\sim\NN(0,1)$, we have
\begin{small}
\begin{equation*}
    1-\alpha =\Phi\left(\frac{1}{\mathcal{S}^{n_i}(1) }\left(\sigma_{\bb^{n_i}}\sqrt{t+1}\Phi^{-1}\left(\frac{\VaR^{\alpha, g_i,1}_t}{\sum_{n\in g_i} \EAD_{t+1}^n \cdot \LGD_{t+1}^n}\right)-\log(B^{n_i}) +\mathcal{K}^{n_i}(\dd, t, 1, \cA^\circ_{t},\Theta_t)\right)\right).
\end{equation*}
\end{small}
\small Therefore,
\begin{equation*}
    \VaR^{\alpha, g_i,1}_t = \left(\sum_{n\in g_i} \EAD_{t+1}^n \cdot \LGD_{t+T}^n\right) \cdot \Phi\left(\frac{\mathcal{S}^{n_i}(1) \Phi^{-1}(1-\alpha) + \log(B^{n_i}) - \mathcal{K}^{n_i}(\dd, t, 1, \cA^\circ_{t},\Theta_t)}{\sigma_{\bb^{n_i}}\sqrt{t+1}}\right),
\end{equation*}
the conclusion follows.
\end{proof}
}
\subsection{Sensitivity of losses to a carbon price}\label{subsection:capital}

We would like to quantify the variation of losses for a given variation in \textcolor{black}{the carbon price}.

\begin{definition}\label{def:sensi}
For our portfolio of $N$ firms and for $\alpha\in(0, 1)$, we introduce the sensitivity of expected and unexpected losses to \textcolor{black}{a carbon price}, at time~$t\in\NN$ over the time horizon~$T\in\NN^*$, and for a given sequence of \textcolor{black}{carbon prices~$\delta$}, respectively denoted $\Gamma^{N,T,\EL}_{t,\delta}(\mathfrak{U})  $ and $\Gamma^{N,T,\UL}_{t,\delta, \alpha}(\mathfrak{U})$, as being, 
\begin{equation}
     \Gamma^{N,T,\EL}_{t,\delta}(\mathfrak{U})  := \lim_{\vartheta\to 0} \frac{\EL^{N,T}_{t,\dd'} - \EL^{N,T}_{t,\dd}}{\vartheta}
     \qquad\textrm{and}\qquad \Gamma^{N,T,\UL}_{t,\delta, \alpha}(\mathfrak{U}) := \lim_{\vartheta\to 0} \frac{\UL^{N,T}_{t,\dd + \vartheta \mathfrak{U}, \alpha} - \UL^{N,T}_{t,\dd, \alpha}}{\vartheta},
\end{equation}
where for all $t\in\NN$ $\dd'_t := (\tau_t(\delta_t+\vartheta\mathfrak{U}_t) ,\zeta_t(\delta_t+\vartheta\mathfrak{U}_t),\kappa_t(\delta_t+\vartheta\mathfrak{U}_t))$, and where $\mathfrak{U} \in (\RR_+)^\NN$ is chosen so that there exists a neighbourhood $v$ of the origin so that for all $\vartheta\in v$ and $t\in\NN$, $\tau_t(\delta_t+\vartheta\mathfrak{U}_t) < 1$.
\end{definition}

These sensitivities can be computed and understood in two different ways depending of the direction~$\mathfrak{U}$: 
either in relation to the entire \textcolor{black}{carbon price trajectory} or at a given date. In the same way, we could introduce sensitivities to other variables (productivity or carbon intensities) or parameters (elasticities, discount rate, standard deviation of cash flows, etc.). 
We could also (and will so in a future note) give the results for \textcolor{black}{a stochastic carbon price} in the transition period. In this case, if the productivity~$\Theta$ and \textcolor{black}{the carbon price~$\delta$} are independent, it is enough to add in the previous results, the expectation conditionally to~$\delta$.

\section{Estimation and calibration}\label{sec3}
Assume that the time unit is year. We will calibrate the model parameters on a set of data ranging from year $\mathfrak{t}_0$ to $\mathfrak{t}_1$. In practice, $\mathfrak{t}_0=1978$ and $\mathfrak{t}_1=t_\circ = 2021$. For each sector~$i\in\Jj$ and $\mathfrak{t}_0\leq t< \mathfrak{t}_1 = t_\circ$, 
we observe the output~$Y_{t}^i$, the labor~$N_{t}^i$, the aggregate price~$P_{t}^i$, the intermediary inputs~$(Z_{t}^{ji})_{j\in\Jj}$, and the consumption~$C_{t}^i$ (recall that the transition starts at year~$t_\circ$). For the sake of clarity, we will omit the dependence of each estimated parameter on~$\mathfrak{t}_0$ and~$\mathfrak{t}_1$.

\subsection{Definition of carbon price} \label{subsection:calibtaxes} We assume here that \textcolor{black}{the carbon price} is deterministic. 
The regulator fixes the transition time horizon $t_\star \in\NN^*$, \textcolor{black}{the carbon price} at the beginning of the transition~$\delta_{t_\circ} > 0$, at the end of the transition~$\delta_{t_\star} > \delta_{t_\circ}$, and the annual evolution rate~$\eta_\delta > 0$. 
Then, for all~$t\in\NN$,
 \begin{equation}
    \delta_{t}= \left\{
    \begin{array}{ll}
    \displaystyle \delta_{t_\circ}, & \mbox{if } t < t_\circ,\\
    \displaystyle \delta_{t_\circ} (1 + \eta_\delta)^{t-t_\circ}, & \mbox{if } t\in\{t_\circ,\hdots,t_\star\},\\
    \displaystyle \delta_{t_\star} = \delta_{t_\circ} (1 + \eta_\delta)^{t_\star-t_\circ}, &  \mbox{otherwise}.
    \end{array}
\right.
\end{equation} 
Time $t=t_\circ$ is the first year of the transition. Moreover, we assume that \textcolor{black}{the carbon price} increases continuously between $t_\circ$ to $t_\star$. However, there are several scenarios that could be considered, including a price that would increase until a certain year before leveling off or even decreasing. The framework can be adapted to various sectors as well as scenarios.

\subsection{Calibration of carbon intensities} 
\noindent Note that GHG emissions is in tonnes of CO2-equivalent while output and consumption are in euros. For each time~$\mathfrak{t}_0 \leq t\leq \mathfrak{t}_1$ and for all sector~$i\in\Jj$, we compute the carbon intensities.
\begin{itemize}
    \item If $E_{t}^{i,F}$ is the GHG emissions (similar to Scope 1 emissions) by all the firms of the sector~$i$ at $t$, then the carbon intensity on firm's production is set such that
    \begin{equation}
        \tau_{t}^i = \frac{E_{t}^{i,F}}{Y_{t}^i P_{t}^i}. \label{eq:calib_tau}
    \end{equation}
    \item If $E_{t}^{i, \Jj}$ is the GHG emitted by households through their consumption in sector~$i$ at~$t$, then the carbon intensity on households final consumption is set such that
    \begin{equation}
        \kappa_{t}^i = \frac{E_{i, t}^{H}}{P_{t}^i C_{t}^i}.\label{eq:calib_kappa}
    \end{equation}
    \item If $E_{t}^{ji, \Jj}$ is the GHG emitted by firm in sector~$i$ through their consumption in sector~$j$ at~$t$, then the carbon intensity on firms' intermediary consumption, for each sector~$i$ and~$j$, is set such that
    \begin{equation}
        \zeta_{t}^{ji} = \frac{E_{t}^{ji,F}}{ P_{t}^j Z_{t}^{ji}}. \label{eq:calib_zeta}
    \end{equation}
    \textcolor{black}{To obtain "intermediary emissions", we first estimate the indirect emissions of each sector by using the input-output analysis as~\cite{desnos2023climate} and by assuming that the indirect emissions are proportional to the contribution of $j$ to $i$.}
\end{itemize}
\textcolor{black}{
For each~$\mathfrak{y} \in\{\tau^1, \hdots, \tau^I, \zeta^{11}, \zeta^{12},\hdots, \zeta^{I I-1}, \zeta^{II}, \kappa^1, \hdots, \kappa^I\}$, we have the realized carbon intensity from~\eqref{eq:calib_tau}, \eqref{eq:calib_kappa}, or ~\eqref{eq:calib_zeta}. Therefore, the calibration of~$\mathfrak{y}_0, g_{\mathfrak{y}, 0}$, and $\theta_\mathfrak{y}$ will appeal to~\eqref{eq:GHG_intensity} in Standing Assumption~\ref{sass:taxes}. More precisely, by applying the log function to~\eqref{eq:GHG_intensity}, we get
\begin{equation*}
    \log{\mathfrak{y}_t} =\frac{g_{\mathfrak{y}, 0}}{\theta_\mathfrak{y}} +\log{\mathfrak{y}_0} - \frac{g_{\mathfrak{y}, 0}}{\theta_\mathfrak{y}}\exp{(-\theta_\mathfrak{y} t)}.
\end{equation*}
If we then set $g_t := \log{\mathfrak{y}_t} - \log{\mathfrak{y}_{t-1}} = \frac{g_{\mathfrak{y}, 0}}{\theta_\mathfrak{y}}(\exp{\theta_\mathfrak{y}}-1)\exp{(-\theta_\mathfrak{y} t)}$ and recall that~$g_{\mathfrak{y}, 0},\theta_\mathfrak{y}>0$, we compute, after applying the log function,
\begin{equation}
    \log{g_t} = \log{\frac{g_{\mathfrak{y}, 0}}{\theta_\mathfrak{y}}(\exp{\theta_\mathfrak{y}}-1)} -\theta_\mathfrak{y} t.\label{eq:log2} 
\end{equation}
We can therefore obtain $\theta_\mathfrak{y}$ and $g_{\mathfrak{y}, 0}$ thanks to the ordinary least squares regression of $\log{g_t}$ on $t$, as well as  $\widehat{\mathfrak{y}_0} = \frac{\sum_{t=\mathfrak{t}_0}^{\mathfrak{t}_1} \mathfrak{y}_t \exp{\left[\widehat{g_{\mathfrak{y}, 0}} \frac{1-\exp{(-\widehat{\theta_\mathfrak{y}} t)}}{\widehat{\theta_\mathfrak{y}}}\right]} }{\sum_{t=\mathfrak{t}_0}^{\mathfrak{t}_1} \exp{\left[2 \widehat{g_{\mathfrak{y}, 0}} \frac{1-\exp{(-\widehat{\theta_\mathfrak{y}} t)}}{\widehat{\theta_\mathfrak{y}}}\right]}}$ thanks to the least squares optimization on~\eqref{eq:GHG_intensity}.
}
\subsection{Calibration of economic parameters} \label{subsec:calibva}
As in~\cite{gali2015monetary}, we assume a unitary Frisch elasticity of labor supply so~$\varphi = 1$ and the utility
of consumption is logarithmic so~$\sigma = 1$. Similarly, for any~$i,j\in\Jj$, the shares of inputs, $\llambda^{ij}$ , are estimated as euro payments from sector $j$ to sector $i$ expressed as a fraction of the value of production in sector $j$.
The parameter~$\psi^i$ is estimated as euro compensation in sector $i$ expressed as a fraction of the value of production in sector~$i$. This gives us~$(\widehat{\llambda}^{ij})_{i,j\in\Jj}$ and~$(\widehat{\psi}^i)_{i\in\Jj}$.
We can then compute the functions~$\Psi$  in~\eqref{eq:Psi} and~$\Lambda$ in~\eqref{eq:Lambda}. 
We can also compute the sectoral output growth~$\left(\Delta^{\Yy}_t = (\log(Y_{t}^i) -\log(Y_{t-1}^i))_{j\in\Jj}\right)_{t\in\mathfrak{t}_0,\dots, \mathfrak{t}_1-1}$ directly from data.\\

\textcolor{black}{When \textcolor{black}{the carbon price is zero}, the carbon emissions rate~$\mfv_t$ is zero for $\mathfrak{t}_0\leq t\leq \mathfrak{t}_1$. It then follows from~\eqref{eq:mgrowthwotax} in Corollary~\ref{cor:growthwotax} that, 
for each $t \in \{\mathfrak{t}_0,\dots, \mathfrak{t}_1-1\}$, the computed output growth $\Delta^{\Yy}_t$ is equal to $\Delta^{\Yy}_t = (\Ir_I-\widehat{\llambda})^{-1}\widehat{\Theta}_t$ when $\Ir_I-\widehat{\llambda}$ is not singular.
Hence, $\widehat{\Theta}_t = (\Ir_I-\widehat{\llambda}) \Delta^{\Yy}_t$
and we can easily compute the estimations~$\widehat{\mu}$, $\widehat{\Gamma}$, and $\widehat{\Sigma}$, and then $\widehat{\overline\mu}$ and~$\widehat{\overline\Sigma}$ of the VAR(1) parameters $\mu$, $\Gamma$, $\Sigma$, $\overline\mu$, and~$\overline\Sigma$ (all defined in Standing Assumption~\ref{sassump:VAR}). \textcolor{black}{We check by the same token that $\widehat\Theta$ follows a VAR stationary process.}}

\subsection{Calibration of firm and of the credit model parameters} \label{subsec:calibfirmp}

Recall that we have a portfolio with $N\in\NN^{*}$ firms (or credit) at time~$t_\circ$. For each firm~$n\in\{1,\hdots,N\}$, we have its historical cash flows~$(F_{t}^n)_{t\in\mathfrak{t}_0,\dots, \mathfrak{t}_1-1}$, 
hence its log-cash flow growths. For any~$t\in\{\mathfrak{t}_0,\dots, \mathfrak{t}_1-1\}$ and $1\leq m\leq I$, we denote by $r_{t}^m$ (resp.~$d_{t}^m$) the number of firms in $g_i$ rated at the beginning of the year~$t$ (resp. defaulted during the year~$t$). In particular, $r_{\mathfrak{t}_0} = \#g_i$. 
Within each group~$g_i$, all the firms behave in the same way as there is only one risk class. Since each sub-portfolio constitutes a single risk class, recall Assumption~\ref{as:oneRiskClass}, we have for each $n\in g_i$, $\mathbf{a}^n = \mathbf{a}^{n_i}$, $\sigma_{\bb^n} = \sigma_{\bb^{n_i}}$, and $B^n = B^{n_i}$. We then proceed as follows:

\begin{enumerate}
    \item Knowing the consumption growth~$\left(\Delta^{\Cc}_t\right)_{t\in\{\mathfrak{t}_0,\dots, \mathfrak{t}_1-1\}}$, we calibrate the factor loading~$\widehat{\mathbf{a}}_{n_i}$ and the standard deviation~$\widehat{\sigma}_{n_i}$, according to Assumptions~\ref{ass:Fgrowth} and~\eqref{eq:CF_vs_GDPGrowth with conso}, appealing to the regression
    \begin{equation}
        \sum_{n\in g_i} \omega_t^n = (\#g_i)\mathbf{a}^{n_i}\Delta^{\Cc}_t + \sqrt{\#g_i} \sigma_{\bb^{n_i}} \mathfrak{u}_t\quad\text{where}\quad \mathfrak{u}_t\sim\cN(0,1),\quad\text{for all}\quad t\in\{\mathfrak{t}_0,\dots, \mathfrak{t}_1-1\}.
    \end{equation}
    \item We then estimate the barrier $B^{n_i}$ by MLE as detailed in Gordy and Heitfield in~\cite[Section~3]{gordy2002corr}:

we compute
\begin{equation*}
\widehat{B}^{n_i} := \argmax_{B^{n_i} \in \RR^+} \mathcal{L}(B^{n_i}),
\end{equation*}
where $\mathcal{L}(B^{n_i})$ is the log-likelihood function defined by
\begin{equation*}
    \mathcal{L}(B^{n_i}) := \sum_{t=\mathfrak{t}_0}^{\mathfrak{t}_1-1} \log\left(\int_{\mathbb{R}^{2I}} \mathbb{P}\left[D^{n_i} = d_{t}^{m}|(a,\theta)\right] \mathrm{d}\PP[(\cA_t^\circ,\Theta_t) \leq (a,\theta)]\right),
\end{equation*}
and where
\begin{equation*}
    \mathbb{P}[D^{n_i} = d_{t}^{m} | (\cA_t^\circ,\Theta_t)] = {\binom{r_{t}^{m}}{d_{t}^{m}}} (\PD^{n_i}_{t,1,0})^{d_{t}^{m}} \Big(1 - \PD^{n_i}_{t,1,0}\Big)^{r_{t}^{m}-d_{t}^{m}},
\end{equation*}
with $D^{n_i}$ the Binomial random variable standing for the conditional number of defaults, and~$\PD^{n_i}_{t,1,0}$ in Corollary~\ref{cor:PD},
depending on $\sigma_{\bb^{n_i}} = \widehat{\sigma}_{\bb^{n_i}}$, $\mathbf{a}^{n_i} = \widehat{\mathbf{a}}^{n_i}$, for $t\in\{\mathfrak{t}_0,\hdots,\mathfrak{t}_1-1\}$, $\delta_t = 0$ and on~$B^{n_i}$.

\end{enumerate}

\subsection{Expected and unexpected losses} Suppose that we have chosen or estimated all the economic parameters ($\varphi, \sigma, \psi, \llambda, \mu, \Gamma, \Sigma$) and firm specific parameters ($(B^n, \mathbf{a}^n, F_{0}^n, \sigma_{\bb^n})_{1\leq n\leq N}$), thanks to the previous equations. Starting from a trajectory of \textcolor{black}{the carbon price~$\delta$}, then, for all~$t\in\{t_\circ,\hdots,t_\star\}$, $\PD$, $\EL$ and $\UL$ are computed by Monte Carlo simulations following the formulas below. 
We simulate $M\in\NN^*$ paths of $(\Theta_{t_\circ}^{m}, \hdots,\Theta_{t_\star}^m)$ indexed by $m\in\{1,\hdots,M\}$, as a VAR(1) process, and we derive $((\cA_{t_\circ}^\circ)^{m}, \hdots,(\cA_{t_\star}^\circ)^m)$. For any~$t\in\{t_\circ,\hdots,t_\star\}$:
\begin{itemize}
    \item for each~$1\leq i\leq I$ and for each~$n\in g_i$, from~\eqref{eq:cor:PD}, the estimated one-year probability of default of firm~$n$ is
    \begin{equation}
        \widehat{\PD}^{n,M}_{t,1,\dd} = \widehat{\PD}^{n_i,M}_{t,1,\dd} := \frac{1}{M} \sum_{m=1}^{M}\Phi\left(\frac{\log(B^{n_i}) - \mathcal{K}^{n_i}(\dd, t, 1, (\cA_t^\circ)^m,\Theta^m_t)}{ \sqrt{\mathcal{L}^{n_i}(t,1)}}\right),
            \label{eq:estPD}
        \end{equation}
    \item the one-year expected loss is, from~\eqref{eq:cor:EL},
    \textcolor{black}{\begin{equation}
        \widehat{\EL}^{N,T}_{t,\dd}  :=  \sum_{n=1}^{N} \EAD_{t+1}^n \cdot \LGD_{t+1}^n\cdot \widehat{\PD}^{n,M}_{t,1,\dd}, \label{eq:estEL}
    \end{equation}}
        \item the one-year unexpected loss is, from~\eqref{eq:cor:UL},
        \begin{equation}
            \scriptstyle\widehat{\UL}^{N,T}_{t,\delta, \alpha} := q_{\alpha, M}\left(\left\{ \sum_{n=1}^{N} \EAD_{t+1}^n \cdot \LGD_{t+1}^n\cdot \Phi\left(\frac{\log(B^n) - \mathcal{K}^n(\dd, t, 1, (\cA_t^\circ)^m,\Theta^m_t)}{ \sqrt{\mathcal{L}^n(t,1)}}\right)\right\}_{1\leq m\leq M} \right) - \widehat{\EL}^{N,T}_{t,\dd}
            \label{eq:estUL},
        \end{equation}
        where $q_{\alpha, M}(\{Y^1,\hdots,Y^M\})$ denotes the empirical $\alpha$-quantile of the distribution of~$Y$.
    \end{itemize}
\textcolor{black}{If we want to compute the EL and UL of each sub-portfolio~$g_i$ with $1\leq i\leq I$, we must sum on $g_i$ instead of $\{1,\hdots,N\}$.}
\subsection{Summary of the process}
More concretely, the goal is to project, for a given portfolio, the $T=1$ year probability of default, as well as the expected and unexpected losses between year~$t_\circ$ and year~$t_\star$. To achieve that, we use (1) the number of firms rated $r_{t}$ and defaulted $d_{t}$ between~$\mathfrak{t}_0$ and~$\mathfrak{t}_1-1$, 
(2) all the firms' cash flows~$(F_{t}^n)_{1\leq n\leq N}$ between~$\mathfrak{t}_0$ and~$\mathfrak{t}_1-1$, (3) the macroeconomic variables as well as the carbon intensities by sector observed between $\mathfrak{t}_0$ and~$\mathfrak{t}_1-1$, and (4) the \textcolor{black}{carbon price} dynamics~$(\delta_t)_{t\in\{t_\circ,\hdots, t_\star\}}$ given by the regulator. We proceed as follows:
\begin{enumerate}
    \item From the macroeconomic historical data, we estimate the productivity parameters~$\widehat{\Gamma}$, $\widehat{\mu}$ and~$\widehat{\Sigma}$, as well as the elasticities~$\widehat{\psi}$ and~$\widehat{\llambda}$ as described in Subsection~\ref{subsec:calibva}.

    \item  For each~$i\in\{1, \ldots, I\}$, we estimate the parameters~$B^{n_i}$, $\sigma_{\bb^{n_i}}$, $\mathbf{a}^{n_i}$ using Subsections~\ref{subsec:calibfirmp}, 
    yielding $\widehat{B}^{n_i}$, $\widehat{\sigma}_{\bb^{n_i}}$, $\widehat{\mathbf{a}}^{n_i}$.

    \item We compute the \textcolor{black}{carbon price} dynamics~$(\delta_t)_{t_\circ\leq t\leq t_\star}$ and the carbon intensities~$(\tau_t)_{t_\circ\leq t\leq t_\star}$, $(\zeta_t)_{t_\circ\leq t\leq t_\star}$, and $(\kappa_t)_{t_\circ\leq t\leq t_\star}$ as defined in Subsection~\ref{subsection:calibtaxes} as well as  the emissions cost rate~$(\dd_t)_{t_\circ\leq t\leq t_\star}$ defined in~\eqref{eq:emiss cost rate} and the output carbon cost function~$\mfv$ defined in~\eqref{eq:taxfuncY}.
    
    \item We fix a large enough integer~$M$, and simulate $M$ paths of the productivity process~$(\Theta_t^p)_{t_\circ\leq t\leq t_\star,1\leq p\leq M}$, then we derive~$((\cA_t^\circ)^p)_{t_\circ\leq t\leq t_\star,1\leq p\leq M}$ as defined in Assumption~\ref{sassump:VAR}. 
    For each~$n\in\{1, \ldots, N\}$, we compute the one-year probability of default $\widehat{\PD}^{n,M}_{t,1,\dd} $, for each~$t_\circ\leq t\leq t_\star$, using~\eqref{eq:estPD}.

    \item We compute the expected (resp. unexpected) losses $\widehat{\EL}^{N,T}_{t,\dd}$ (resp. $\widehat{\UL}^{N,T}_{t,\delta, \alpha}$), for each~$t_\circ\leq t\leq t_\star$, using~\eqref{eq:estEL} (resp.~\eqref{eq:estUL}).
    
    \item We fix the direction~$\mathfrak{U}$ and a small step~$\vartheta$, and repeat 3.-4.-5. replacing $\dd$ by $\dd +\vartheta\mathfrak{U}$. 
    Finally, we approach the sensitivity of the losses with respect to the \textcolor{black}{carbon price}~$\delta$ by finite differences, i.e. for each~$t_\circ\leq t\leq t_\star$,
    \begin{equation}
        \widehat{\Gamma}^{N,T,\EL}_{t,\delta}(\mathfrak{U}) := \frac{1}{\vartheta} \left(\widehat{\EL}^{N,T}_{t,\dd'} - \widehat{\EL}^{N,T}_{t,\dd}\right)
        \quad\textrm{and}\quad
        \widehat{\Gamma}^{N,T,\UL}_{t,\delta, \alpha}(\mathfrak{U}) := \frac{1}{\vartheta}\left(\widehat{\UL}^{N,T}_{t,\delta', \alpha} - \widehat{\UL}^{N,T}_{t,\dd, \alpha}\right), \label{eq:approxSensi}
    \end{equation}
    with $\dd'$ defined in Definition~\ref{def:sensi}.
    In the sequel, we choose the direction~$\mathfrak{U}\in(\RR_+)^{t_\star+1}$ which is equal to $1$ at $t$ and $0$ everywhere else, for each time $t$, and a step~$\vartheta = 1\%$. 
\end{enumerate}



\section{Results}\label{sec4}

\subsection{Data}\label{result:data} We work on data related to the French economy:
\begin{enumerate}
    \item\label{histo-macro-data} Annual consumption, labor, output (displayed on Figure~\ref{Output_and_Consumption} and Figure~\ref{Output_and_consumption_growth}), and intermediary inputs come from INSEE\footnote{The French National Institute of Statistics and Economic Studies} from 1978 to 2021 (see~\cite{insee2023sut} for details) and are expressed in billion euros. We consider a time horizon of ten years with $t_\circ = 2021$ as starting point, a time step of one year and $t_\star = 2030$ as ending point. \red{In addition, we will be extending the curves to 2034 to see what happens after the transition, even though the results will be calculated and analyzed during the transition.}
    \item \textcolor{black}{The 38 INSEE sectors are grouped into four categories: \textit{Very High Emitting}, 
    \textit{Very Low Emitting}, 
    \textit{Low Emitting}, and
    \textit{High Emitting}, based on their carbon intensities. We indeed compute the average carbon intensity of output for each sector from 2008 to 2021 in kilograms of CO2-equivalent per euro (as shown in Figure~\ref{fig:Air_emissions_intensities_by_NACE}). Four groups then emerge with intensities ranging over [0, 0.05], ]0.05, 0.3], ]0.3, 0.5] and ]0.5, 1], leading to~$I = 4$. Each group's composition is detailed in~\ref{appendix:sectors}.}
    \begin{figure}[!ht]
        \centering
        \includegraphics[width=1.\textwidth]{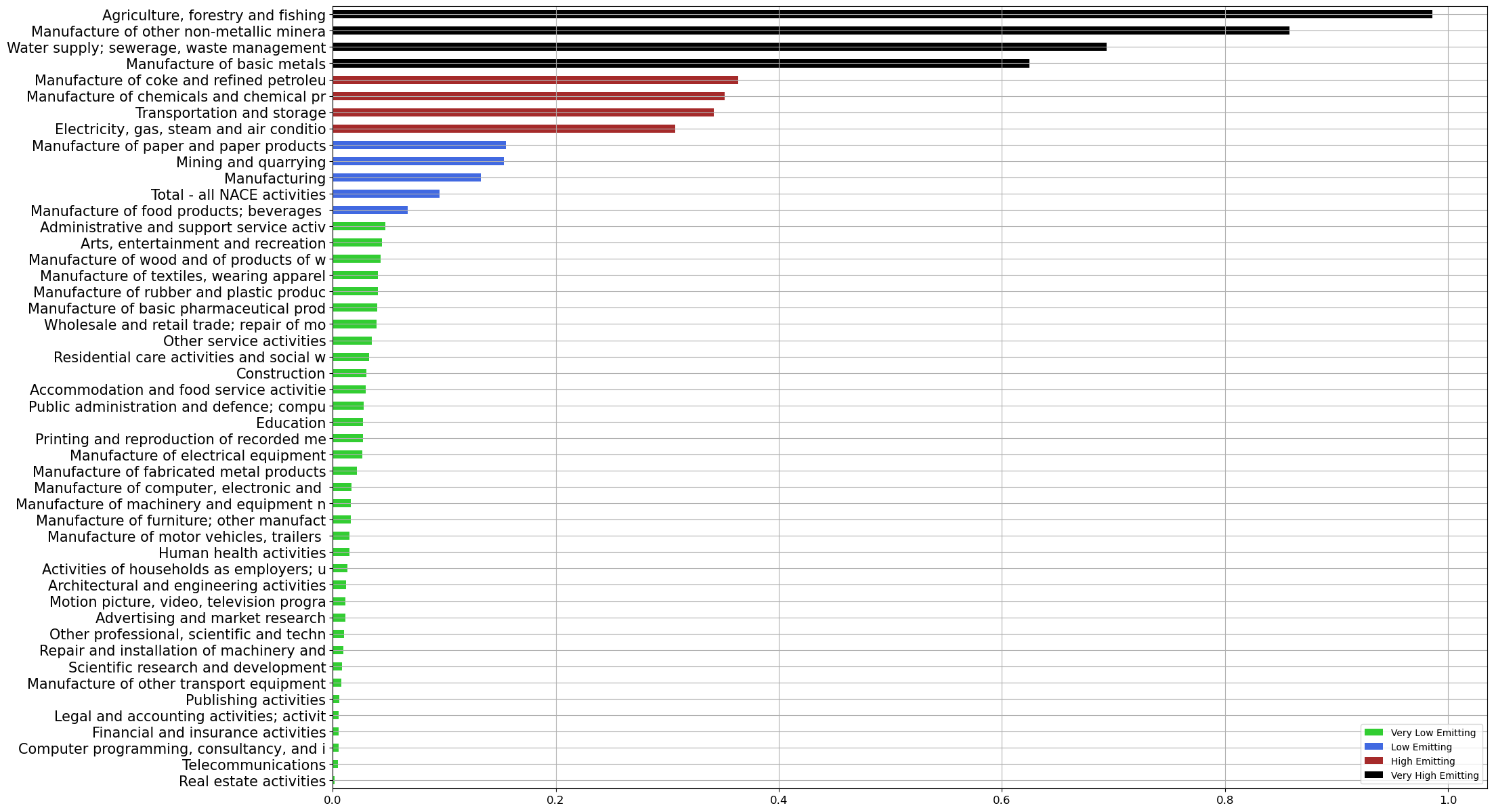}
        \caption{Average air emissions intensities (in kilograms of CO2-equivalent per euro) by NACE from 2008 to 2021}
        \label{fig:Air_emissions_intensities_by_NACE}
    \end{figure}
    \item The carbon intensities are calibrated on the realized emissions~\cite{emissions2020GHG}  (expressed in tonnes of CO2-equivalent) between 2008 and 2021.

    \textcolor{black}{Regarding the households GHG emissions, Eurostat only provides data for transport, heating and cooking, as well as emissions that fall under the category "other". Following our sectors classification, we put transport, heating and cooking in the High Emitting sector. Then, we divide the Eurostat sectors falling under the category "other" between \textit{Very High Emitting}, 
    \textit{Very Low Emitting}, and 
    \textit{Low Emitting}, proportionally to their contribution to the households consumption.}
\end{enumerate}

\subsection{Calibration}

\subsubsection{Calibration of economics parameters} \label{sec:clim}
For the parameters~$\sigma$ and~$\varphi$, we use the same values as in Gali~\cite{gali2015monetary}: a unitary log-utility~$\sigma = 1$ and a unitary Frisch elasticity of labor supply~$\varphi = 1$. We have the parameters of the multisectoral model $(\widehat{\psi}_i)_{i\in\Jj}$ and $(\widehat{\llambda}_{ji})_{i,j\in\Jj}$ in Table~\ref{tab:psi_i} and in Table~\ref{tab:psi_ij}.

\begin{table}[ht!]
\small \centering
\begin{tabular}{|r|r|r|r|r|}
\hline
\textit{\textbf{Output}} & \multicolumn{1}{l|}{\textbf{\begin{tabular}[c]{@{}l@{}}Very High\end{tabular}}} & \multicolumn{1}{l|}{\textbf{\begin{tabular}[c]{@{}l@{}}High \end{tabular}}} & \multicolumn{1}{l|}{\textbf{\begin{tabular}[c]{@{}l@{}}Low\end{tabular}}} & \multicolumn{1}{l|}{\textbf{\begin{tabular}[c]{@{}l@{}}Very Low\end{tabular}}}  \\ \hline
\textit{\textbf{Elasticity of labor supply}} &0.183&	0.215&	0.161&	0.331\\ \hline
\end{tabular}
\caption{\textcolor{black}{Elasticity of labor supply $\widehat{\psi}$}}
\label{tab:psi_i}
\end{table}

\begin{table}[!ht]
\small \centering
\begin{tabular}{|r|r|r|r|r|}
\hline
\multicolumn{1}{|r|}{\textit{\textbf{Input / Output}}} & \multicolumn{1}{l|}{\textbf{\begin{tabular}[c]{@{}l@{}}Very High\end{tabular}}} & \multicolumn{1}{l|}{\textbf{\begin{tabular}[c]{@{}l@{}}High \end{tabular}}} & \multicolumn{1}{l|}{\textbf{\begin{tabular}[c]{@{}l@{}}Low\end{tabular}}} & \multicolumn{1}{l|}{\textbf{\begin{tabular}[c]{@{}l@{}}Very Low\end{tabular}}} \\  \hline
\textbf{Very High} &0.273&	0.028&	0.266&	0.052 \\ \hline
\textbf{High} & 0.130&	0.304&	0.061&	0.043 \\ \hline
\textbf{Low} & 0.064&	0.129&	0.242&	0.033 \\ \hline
 \textbf{Very Low} & 0.157&	0.159&	0.143&	0.312 \\ \hline
\end{tabular}
\caption{\textcolor{black}{Elasticity of intermediary inputs $\widehat{\llambda}$}}
\label{tab:psi_ij}
\end{table}

\begin{table}[ht!]
\small \centering
\begin{tabular}{|r|r|r|r|r|}
\hline
\textit{\textbf{Emissions Level}} & \multicolumn{1}{l|}{\textbf{\begin{tabular}[c]{@{}l@{}}$\varphi_0$\end{tabular}}} & \multicolumn{1}{l|}{\textbf{\begin{tabular}[c]{@{}l@{}}$g_{\varphi,0}$ \end{tabular}}} &  \multicolumn{1}{l|}{\textbf{\begin{tabular}[c]{@{}l@{}}$\theta_\varphi$(\%)\end{tabular}}}  \\ \hline
\textit{\textbf{$\tau_{\text{Very High}}$}} & 0.473& -0.013&  0.001  \\ \hline
\textit{\textbf{$\tau_{\text{High}}$}} & 0.377& -0.049&  0. 001 \\ \hline
\textit{\textbf{$\tau_{\text{Low}}$}} & 0.07 & -0.039  & 3.7 \\ \hline
\textit{\textbf{$\tau_{\text{Very Low}}$}} & 0.024& -0.028&  0.001    \\ \hline
\end{tabular}
\caption{\textcolor{black}{Carbon intensities parameters}}
\label{tab:carbon_intensities_parameters1}
\end{table}
\textcolor{black}{
 According to Tables~\ref{tab:psi_i} and~\ref{tab:psi_ij}, the assumed identity $\psi^i +\sum_{j\in\Jj} \llambda^{ji} = 1$ would be expected to hold in the case where other production factors such as capital stock are included in our setup.
However, our setup avoids the inclusion of capital accumulation (as in the white paper~\cite{devulder2020carbon} authored by the Banque de France), as well as imports and exports in order to simplify the already involved analysis. Still, our numerical application shows that our setup allows to capture in average 82\% of the sum.
We then obtain the productivity parameters in Table~\ref{tab:mu},~\ref{tab:gamma},~\ref{tab:sigma}, while the carbon intensities parameters are in Table~\ref{tab:carbon_intensities_parameters1} and~\ref{tab:carbon_intensities_parameters}. It is worth noting that for each intensity~$\varphi$, $g_{\varphi, 0}$ is negative and $\theta_\varphi$ is positive, which means that carbon intensities are decreasing in France.}
\begin{table}[ht!]
\small \centering
\begin{tabular}{|r|r|r|r|r|}
\hline
\textit{\textbf{Emissions Level}} & \multicolumn{1}{l|}{\textbf{\begin{tabular}[c]{@{}l@{}}Very High\end{tabular}}} & \multicolumn{1}{l|}{\textbf{\begin{tabular}[c]{@{}l@{}}High \end{tabular}}} & \multicolumn{1}{l|}{\textbf{\begin{tabular}[c]{@{}l@{}}Low\end{tabular}}} & \multicolumn{1}{l|}{\textbf{\begin{tabular}[c]{@{}l@{}}Very Low\end{tabular}}}  \\ \hline
\textit{\textbf{$\times 10^{-3}$}} & 2.649&	3.826&	-4.691&	4.288 \\ \hline
\end{tabular}
\caption{\textcolor{black}{$\widehat{\mu}$}}
\label{tab:mu}
\end{table}

\begin{table}[!ht]
\small \centering
\begin{tabular}{|r|r|r|r|r|}
\hline
\multicolumn{1}{|r|}{\textit{\textbf{Emissions Level}}} & \multicolumn{1}{l|}{\textbf{\begin{tabular}[c]{@{}l@{}}Very High\end{tabular}}} & \multicolumn{1}{l|}{\textbf{\begin{tabular}[c]{@{}l@{}}High \end{tabular}}} & \multicolumn{1}{l|}{\textbf{\begin{tabular}[c]{@{}l@{}}Low\end{tabular}}} & \multicolumn{1}{l|}{\textbf{\begin{tabular}[c]{@{}l@{}}Very Low\end{tabular}}} \\  \hline
\textbf{Very High} & -0.191&	-0.061&	0.108&	-0.005 \\ \hline
\textbf{High} & 0.017&	0.404&	0.282&	-0.067 \\ \hline
\textbf{Low} & 0.302&	0.190&	-0.552	&0.290 \\ \hline
 \textbf{Very Low} & 0.177&	0.021&	0.623&	0.539 \\ \hline
\end{tabular}
\caption{\textcolor{black}{$\widehat{\Gamma}$}}
\label{tab:gamma}
\end{table}
\textcolor{black}{\noindent The eigenvalues of $\widehat{\Gamma}$ are $\{-0.790, -0.145, 0.692,  0.443\}$ which are all strictly less than~$1$ in absolute value, therefore $\widehat\Theta$ is weak-stationary as assumed.}
\begin{table}[!ht]
\small \centering
\begin{tabular}{|r|r|r|r|r|}
\hline
\multicolumn{1}{|r|}{\textit{\textbf{Emissions Level}}} & \multicolumn{1}{l|}{\textbf{\begin{tabular}[c]{@{}l@{}}Very High\end{tabular}}} & \multicolumn{1}{l|}{\textbf{\begin{tabular}[c]{@{}l@{}}High \end{tabular}}} & \multicolumn{1}{l|}{\textbf{\begin{tabular}[c]{@{}l@{}}Low\end{tabular}}} & \multicolumn{1}{l|}{\textbf{\begin{tabular}[c]{@{}l@{}}Very Low\end{tabular}}} \\  \hline
\textbf{Very High} & 0.329&  0.020 &  0.011&  0.082 \\ \hline
\textbf{High} & 0.020 &  0.134&  0.013&  0.030\\ \hline
\textbf{Low} & 0.011&  0.013&  0.071& -0.012 \\ \hline
 \textbf{Very Low} & 0.082&  0.030 & -0.012&  0.066 \\ \hline
\end{tabular}
\caption{\textcolor{black}{$\widehat{\Sigma}\times 10^{3}$}}
\label{tab:sigma}
\end{table}

In our simulation, we consider four deterministic transition scenarios giving four deterministic \textcolor{black}{carbon price trajectories}. The scenarios used come from the NGFS simulations, whose descriptions are given on the NGFS website~\cite{ngfs2020scenario} as follows:
\textit{\begin{itemize}
    \item \textbf{Net Zero 2050} is an ambitious scenario that limits global warming to $1.5^\circ C$ through stringent climate policies and innovation, reaching net zero $\mathrm{CO}_2$ emissions around 2050. Some jurisdictions such as the US, EU and Japan reach net zero for all GHG by this point.
   \item \textbf{Divergent Net Zero} reaches net-zero by 2050 but with higher costs due to divergent policies introduced across sectors and a quicker phase out of fossil fuels.
   \item \textbf{Nationally Determined Contributions (NDCs)} includes all pledged policies even if not yet implemented.
   \item \textbf{Current Policies} assumes that only currently implemented policies are preserved, leading to high physical risks.
\end{itemize}}
For each scenario,  we compute the average annual growth of the \textcolor{black}{carbon price} as displayed in the fourth column of Table~\ref{tab:carbonevolution}.
\begin{table}[!ht]
\small\centering
\begin{tabular}{|r|r|r|r|}
\hline
\textbf{Scenario} & \multicolumn{1}{l|}{\textbf{\begin{tabular}[c]{@{}l@{}}2020 Carbon\\ Price (\euro/ton)\end{tabular}}} & \multicolumn{1}{l|}{\textbf{\begin{tabular}[c]{@{}l@{}}2030 Carbon\\ Price (\euro/ton)\end{tabular}}} & \multicolumn{1}{l|}{\textbf{\begin{tabular}[c]{@{}l@{}}Average Annual \\ Growth Rate (\%)\end{tabular}}} \\ \hline
\textit{\textbf{Current Policies}} & 39.05 & 39.05 & 0. \\ \hline
\textit{\textbf{NDCs}} & 39.05 & 76.46 & 6.42 \\ \hline
\textit{\textbf{Net Zero 2050}} & 39.05  & 162.67  & 13.24  \\ \hline
\textit{\textbf{Divergent Net Zero}} & 96.43 & 395.21 & 10.63 \\ \hline
\end{tabular}
\caption{Carbon price in 2020 and 2030, and average annual growth over ten years}
\label{tab:carbonevolution}
\end{table}

\subsubsection{Calibration of carbon intensities}  The evolution of \textcolor{black}{carbon prices} between 2020 and 2030 are shown on  Figure~\ref{fig:carbon_price_per_sce}. Moreover, we compute the evolution of the \textit{\textcolor{black}{carbon intensities} on production}, $\tau$, the \textit{\textcolor{black}{carbon intensities} on final consumption}, $\kappa$, and the \textit{\textcolor{black}{carbon intensities} on the firms' intermediary consumption}, $\zeta$, for each sector based on the realized emissions. \textcolor{black}{Recall that \textcolor{black}{the carbon price} is expressed in euro per ton and the carbon intensity in tons per euro so that their product, that we called \textit{emissions cost rate}, is dimensionless.} We report the annual average per scenario in Table~\ref{tab:tau}, Table~\ref{tab:kappa}, and Table~\ref{tab:zeta}. 

Given that carbon intensities are slightly decreasing, \textcolor{black}{emissions cost rate} will not follow exactly the same trends as carbon prices. 
\begin{figure}[!ht]
    \centering
    \includegraphics[width=0.75\textwidth]{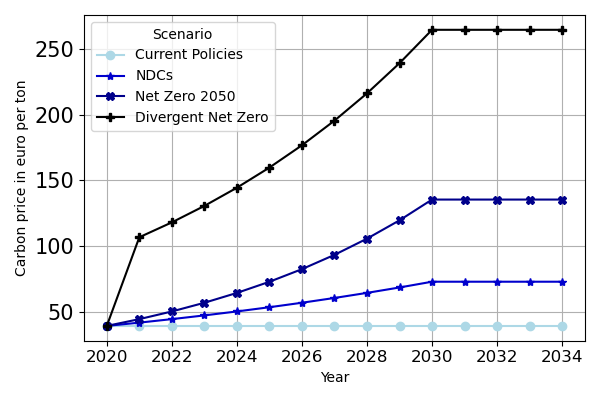}
    \caption{Annual carbon price per scenario}
    \label{fig:carbon_price_per_sce}
\end{figure}

{
\color{black}
In order to ensure that the condition~\eqref{eq:prod_vs_emiss} is satisfied, it is sufficient to compute the product of the maximum of the carbon price~$\delta_{t_\star}$ (as $t\mapsto\delta_t$ is non-decreasing) and the maximum of the output carbon intensity~$\tau^i_0$ (as $t\mapsto\tau^i_t$ is decreasing), for each sector~$i$ and for each scenario. 

\begin{table}[ht!]
\small\centering
\begin{tabular}{|r|r|r|r|r|r|}
\hline
\multicolumn{1}{|r|}{\textit{\textbf{$\tau^i_0 \delta_{t_\star}$}}} & \multicolumn{1}{l|}{\textbf{\begin{tabular}[c]{@{}l@{}}Very High\end{tabular}}} & \multicolumn{1}{l|}{\textbf{\begin{tabular}[c]{@{}l@{}}High \end{tabular}}} & \multicolumn{1}{l|}{\textbf{\begin{tabular}[c]{@{}l@{}}Low\end{tabular}}} & \multicolumn{1}{l|}{\textbf{\begin{tabular}[c]{@{}l@{}}Very Low\end{tabular}}} \\ \hline
\textit{\textbf{Current Policies}}       & 0.017&	0.014&	0.003&	0.000  \\ \hline
\textit{\textbf{NDCs}}                   & 0.031&	0.026&	0.004&	0.001 \\ \hline
\textit{\textbf{Net Zero 2050}}         & 0.054&	0.045&	0.009&	0.002 \\ \hline
\textit{\textbf{Divergent Net Zero}}     & 0.108&	0.089&	0.017&	0.005 \\ \hline
\end{tabular}
\caption{\textcolor{black}{Maximum firms' carbon intensities multiplied by carbon price in 2030 per scenario}}
\label{tab:prod_vs_emiss}
\end{table}
}

The highest level of emissions cost rate for households' consumption comes from the \textit{High Emitting} group (involved for transport, cooking and heating).

\begin{table}[ht!]
\small\centering
\begin{tabular}{|r|r|r|r|r|r|}
\hline
\multicolumn{1}{|r|}{\textit{\textbf{Emissions level}}} & \multicolumn{1}{l|}{\textbf{\begin{tabular}[c]{@{}l@{}}Very High\end{tabular}}} & \multicolumn{1}{l|}{\textbf{\begin{tabular}[c]{@{}l@{}}High \end{tabular}}} & \multicolumn{1}{l|}{\textbf{\begin{tabular}[c]{@{}l@{}}Low\end{tabular}}} & \multicolumn{1}{l|}{\textbf{\begin{tabular}[c]{@{}l@{}}Very Low\end{tabular}}} \\ \hline
\textit{\textbf{Current Policies}}       & 0.007& 2.233& 0.007& 0.007  \\ \hline
\textit{\textbf{NDCs}}                   & 0.010&  3.031& 0.010&  0.010 \\ \hline
\textit{\textbf{Net Zero 2050}}         & 0.014 &4.273& 0.014& 0.014 \\ \hline
\textit{\textbf{Divergent Net Zero}}     & 0.031& 9.235& 0.031& 0.031 \\ \hline
\end{tabular}
\caption{\textcolor{black}{Average annual \textcolor{black}{\textit{emissions cost rate}}~$\delta \kappa$ on households' consumption from each sector between 2020 and 2030 (in~\%)}}
\label{tab:tau}
\end{table}

On firms' production side, the \textit{Very High Emitting} group is the highest charged (because agriculture and farming emit large amounts of GHG like methane), and is naturally followed by the \textit{High Emitting} one which emits significant amounts of $\mathrm{CO}_2$.
\begin{table}[ht!]
\small \centering
\begin{tabular}{|r|r|r|r|r|r|}
\hline
\multicolumn{1}{|r|}{\textit{\textbf{Emissions level}}} & \multicolumn{1}{l|}{\textbf{\begin{tabular}[c]{@{}l@{}}Very High\end{tabular}}} & \multicolumn{1}{l|}{\textbf{\begin{tabular}[c]{@{}l@{}}High \end{tabular}}} & \multicolumn{1}{l|}{\textbf{\begin{tabular}[c]{@{}l@{}}Low\end{tabular}}} & \multicolumn{1}{l|}{\textbf{\begin{tabular}[c]{@{}l@{}}Very Low\end{tabular}}} \\ \hline
\textit{\textbf{Current Policies}}  & 1.483& 0.644& 0.169& 0.058\\ \hline
\textit{\textbf{NDCs}}                   & 2.047& 0.870&  0.232& 0.080 \\ \hline
\textit{\textbf{Net Zero 2050}}          & 2.933& 1.219& 0.331& 0.113 \\ \hline
\textit{\textbf{Divergent Net Zero}}     & 6.301& 2.641& 0.713& 0.244 \\ \hline
\end{tabular}
\caption{\textcolor{black}{Average annual \textcolor{black}{\textit{emissions cost rate}}~$\delta \tau$ on firms' production in each sector between 2020 and 2030 (in \%)}}
\label{tab:kappa}
\end{table}

On the \textit{emissions cost rate} of firms' intermediary consumption, we observe expected patterns. For example, the \textcolor{black}{\textit{emissions cost rate}} applied on goods/services produced by the \textit{Very High Emitting} sector and consumed by the \textit{Low Emitting} one is very high. This is explained by the fact that many inputs used by sectors belonging to the \textit{Low Emitting} group (such as \textit{Manufacture of food products, beverages and tobacco products}) consumes the output from  \textit{Electricity, gas, steam and air conditioning supply} which belongs to the \textit{Very High Emitting} group. Similar comments can be done for the other sectors. These results thus show that sectors are not only affected by their own emissions, but also by the emissions from the sectors from which they consume products. \textcolor{black}{Moreover, we observe a relation between the level of \textit{emissions cost rate} applied to intra-sectoral consumption and the corresponding level of elasticity displayed in Table~\ref{tab:psi_ij}.}
\begin{table}[ht!]
    \begin{subtable}[h]{\textwidth}
        \small\centering
\begin{tabular}{|r|r|r|r|r|r|}
\hline
\multicolumn{1}{|r|}{\textit{\textbf{Emissions level / Output}}} & \multicolumn{1}{l|}{\textbf{\begin{tabular}[c]{@{}l@{}}Very High\end{tabular}}} & \multicolumn{1}{l|}{\textbf{\begin{tabular}[c]{@{}l@{}}High \end{tabular}}} & \multicolumn{1}{l|}{\textbf{\begin{tabular}[c]{@{}l@{}}Low\end{tabular}}} & \multicolumn{1}{l|}{\textbf{\begin{tabular}[c]{@{}l@{}}Very Low\end{tabular}}} \\ \hline
\textit{\textbf{Current Policies}}  & 0.255& 0.088& 0.095& 0.032 \\ \hline
\textit{\textbf{NDCs}}                   & 0.347& 0.119& 0.131& 0.044 \\ \hline
\textit{\textbf{Net Zero 2050}}          & 0.491& 0.166& 0.188& 0.061 \\ \hline
\textit{\textbf{Divergent Net Zero}}     & 1.061& 0.360&  0.404& 0.132 \\ \hline
\end{tabular}
\caption{\textcolor{black}{Input: \textit{Very High}}}
\label{tab:zeta1}
    \end{subtable}
    \hfill
    \begin{subtable}[h]{\textwidth}
    \small\centering
        \begin{tabular}{|r|r|r|r|r|r|}
\hline
\multicolumn{1}{|r|}{\textit{\textbf{Emissions level / Output}}} & \multicolumn{1}{l|}{\textbf{\begin{tabular}[c]{@{}l@{}}Very High\end{tabular}}} & \multicolumn{1}{l|}{\textbf{\begin{tabular}[c]{@{}l@{}}High \end{tabular}}} & \multicolumn{1}{l|}{\textbf{\begin{tabular}[c]{@{}l@{}}Low\end{tabular}}} & \multicolumn{1}{l|}{\textbf{\begin{tabular}[c]{@{}l@{}}Very Low\end{tabular}}} \\ \hline
\textit{\textbf{Current Policies}}  & 0.031& 0.347& 0.122& 0.047 \\ \hline
\textit{\textbf{NDCs}}                   & 0.042& 0.471& 0.162& 0.064 \\ \hline
\textit{\textbf{Net Zero 2050}}          & 0.059& 0.666& 0.223& 0.091 \\ \hline
\textit{\textbf{Divergent Net Zero}}     & 0.128& 1.439& 0.487& 0.197 \\ \hline
\end{tabular}
\caption{\textcolor{black}{Input: \textit{High}}}
\label{tab:zeta2}
     \end{subtable}
    \hfill
    \begin{subtable}[h]{\textwidth}
    \small\centering
    \begin{tabular}{|r|r|r|r|r|r|}
\hline
\multicolumn{1}{|r|}{\textit{\textbf{Emissions level / Output}}} & \multicolumn{1}{l|}{\textbf{\begin{tabular}[c]{@{}l@{}}Very High\end{tabular}}} & \multicolumn{1}{l|}{\textbf{\begin{tabular}[c]{@{}l@{}}High \end{tabular}}} & \multicolumn{1}{l|}{\textbf{\begin{tabular}[c]{@{}l@{}}Low\end{tabular}}} & \multicolumn{1}{l|}{\textbf{\begin{tabular}[c]{@{}l@{}}Very Low\end{tabular}}} \\ \hline
\textit{\textbf{Current Policies}}  & 0.117& 0.022& 0.151& 0.014 \\ \hline
\textit{\textbf{NDCs}}                   & 0.156& 0.03 & 0.203& 0.019 \\ \hline
\textit{\textbf{Net Zero 2050}}          & 0.216& 0.041& 0.282& 0.026 \\ \hline
\textit{\textbf{Divergent Net Zero}}     & 0.471 &0.089 &0.613& 0.057 \\ \hline
\end{tabular}
\caption{\textcolor{black}{Input: \textit{Low}}}
\label{tab:zeta3}
\end{subtable}
     \hfill
     \begin{subtable}[h]{\textwidth}
     \small\centering
     \begin{tabular}{|r|r|r|r|r|r|}
\hline
\multicolumn{1}{|r|}{\textit{\textbf{Emissions level / Output}}} & \multicolumn{1}{l|}{\textbf{\begin{tabular}[c]{@{}l@{}}Very High\end{tabular}}} & \multicolumn{1}{l|}{\textbf{\begin{tabular}[c]{@{}l@{}}High \end{tabular}}} & \multicolumn{1}{l|}{\textbf{\begin{tabular}[c]{@{}l@{}}Low\end{tabular}}} & \multicolumn{1}{l|}{\textbf{\begin{tabular}[c]{@{}l@{}}Very Low\end{tabular}}} \\ \hline
\textit{\textbf{Current Policies}}  &0.078& 0.061& 0.077& 0.130 \\ \hline
\textit{\textbf{NDCs}}                   & 0.107& 0.084& 0.106& 0.178 \\ \hline
\textit{\textbf{Net Zero 2050}}          & 0.152& 0.119& 0.152& 0.251 \\ \hline
\textit{\textbf{Divergent Net Zero}}     & 0.328& 0.257& 0.326& 0.543 \\ \hline
\end{tabular}
\caption{\textcolor{black}{Input: \textit{Very Low}}}
\label{tab:zeta4}
\end{subtable}
     \caption{\textcolor{black}{Average annual \textcolor{black}{\textit{emissions cost rate}}~$\delta\zeta$ on firms' intermediary inputs from each sector between 2020 and 2030 (in~\%)}}
     \label{tab:zeta}
\end{table}

We now calibrate our model on the historical data assuming no \textcolor{black}{carbon price} as detailed in Section~\ref{subsec:calibva} and perform simulations.

\subsection{Simulations}

\subsubsection{\textcolor{black}{Output growth}}
After $M=5000$ simulations, we compute the mean of the annual output growth and related 95\% confidence interval for each sector and each scenario. Results are displayed on Figure~\ref{growth_slowdown}. Additionally, we compute the average annual output growth over the ten-year period, as illustrated in Table~\ref{tab:iva}.

\begin{table}[ht!]
\small \centering
\begin{tabular}{|r|r|r|r|r|r|}
\hline
\multicolumn{1}{|r|}{\textit{\textbf{Emissions level}}} & \multicolumn{1}{l|}{\textbf{\begin{tabular}[c]{@{}l@{}}Very High\end{tabular}}} & \multicolumn{1}{l|}{\textbf{\begin{tabular}[c]{@{}l@{}}High \end{tabular}}} & \multicolumn{1}{l|}{\textbf{\begin{tabular}[c]{@{}l@{}}Low\end{tabular}}} & \multicolumn{1}{l|}{\textbf{\begin{tabular}[c]{@{}l@{}}Very Low\end{tabular}}} & \multicolumn{1}{l|}{\textbf{Total}}\\ \hline
\textit{\textbf{NDCs}}                   & -0.248& -0.245& -0.062& -0.018 & -0.128\\ \hline
\textit{\textbf{Net Zero 2050}}          & -0.712& -0.692& -0.181& -0.051 & -0.362 \\ \hline
\textit{\textbf{Divergent Net Zero}}     & -1.187& -0.978& -0.310& -0.099 & -0.554\\ \hline
\end{tabular}
\caption{\textcolor{black}{Average annual output growth evolution with respect to the \textit{Current Policies} scenario between 2020 and 2030 (in \%)}}
\label{tab:iva}
\end{table}

\textcolor{black}{It follows from the \textit{Total} column in Table~\ref{tab:iva} that the average annual growth between 2020 and 2030 is decreasing. The \textit{Divergent Net Zero} is the economic worst case (the best one for the climate) where the carbon ton would cost 395.21\euro\, in 2030.  The \textit{Current Policies} is the economic best case (the worst one for the climate) where the carbon ton would cost 39.05\euro\, in 2030. The difference of the annual output growth between the worst and the best scenarios is of about $-0.554\%$. }

\begin{figure}[!ht]
    \centering
    \includegraphics[width=0.99\textwidth]{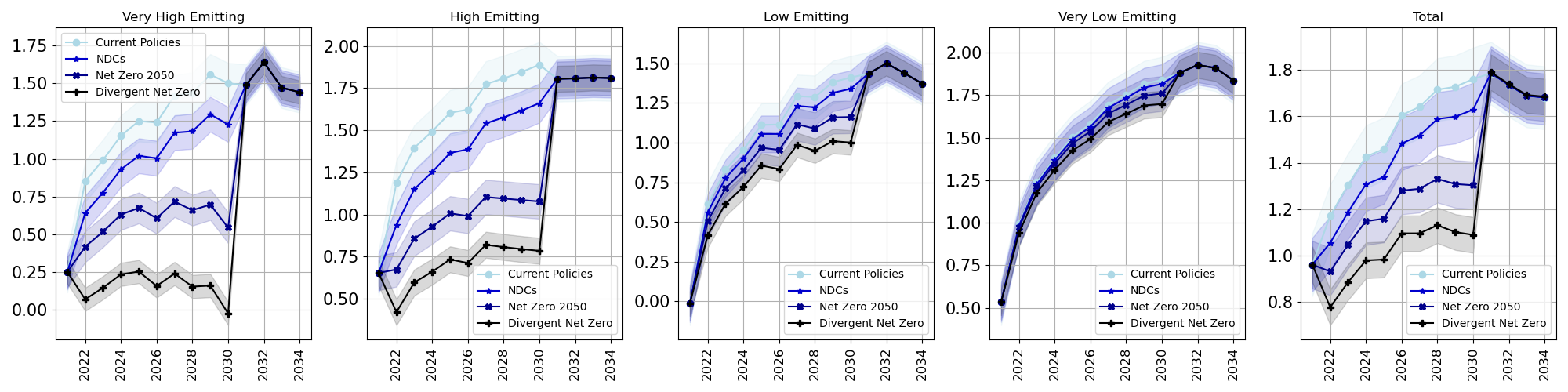}
    \caption{\textcolor{black}{Mean and 95\% confidence interval of the annual output growth}}
    \label{growth_slowdown}
\end{figure}

\textcolor{black}{
The four scenarios are clearly discriminating. In the \textit{Divergent Net Zero} scenario, our model shows, on the last subplot in Figure~\ref{growth_slowdown}, a drop in output growth, with respect to the \textit{Current Policies} scenario, that starts at 0.405\% in 2020 and increases every year until a 0.746\% drop is reached in 2030. Cumulatively, from 2020 to 2030, a drop of 5.539\% is witnessed.}

We can compare this value to 2.270\% which is the GDP drop between the \textit{Net Zero 2050} and \textit{Current Policies} scenarios
obtained with the REMIND model in~\cite{luderer2015description}.
The difference observed with REMIND can be explained by the fact that our model does not specify how the \textcolor{black}{revenues generated by charging GHG emissions and collected by the regulator} are reinvested or redistributed. We could, for example, head the investment towards low-carbon energies, which would have the effect of reducing the GHG emissions costs on these sectors. Moreover, in our model, \textcolor{black}{the carbon price} is assumed to increase uniformly (which implies that emissions would increase indefinitely - which is not desirable) from 2021 to 2030, while in REMIND an adjustment of \textcolor{black}{the carbon price} growth rate is being made in 2025.
Furthermore, productivity is totally exogenous in our model while there are exogenous labor productivity and endogenous technological change for green energies in REMIND, which is expected to have a downward effect on \textcolor{black}{the evolution of the carbon price}. However, we recall that our model has the benefit to be stochastic and multisectoral.
 
Now, it follows from both Figure~\ref{growth_slowdown} and Table~\ref{tab:iva} that the introduction of \textcolor{black}{the carbon price} is less adverse for the \textit{Very Low Emitting} and \textit{Low Emitting} groups than for the \textit{High Emitting} and \textit{Very High Emitting} ones. The slowdown is highest for the \textit{Very High Emitting} group, which was anticipated given that the emissions cost on firms was the highest. Moreover, the slowdown could be accelerated by the climate transition, not only because this sector emits GHG, but also because its intermediary inputs are from the \textit{High Emitting} and \textit{Very High Emitting} sectors. On the other hand, the \textit{Very Low Emitting} sector continues its strong growth because it emits less and because France is driven by the service industry. 
In addition, the consumption in the two most polluting sectors suffers from a slowdown higher than the whole consumption slowdown and lower than in the two least polluting ones. 

\red{
\begin{figure}[!ht]
    \centering
    \includegraphics[width=0.99\textwidth]{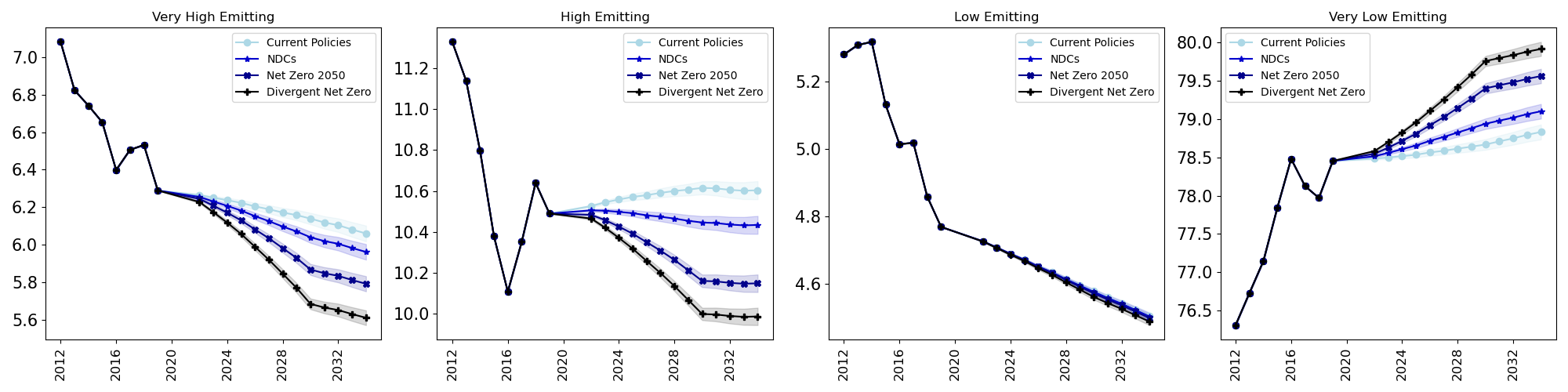}
    \caption{\textcolor{black}{Mean and 95\% confidence interval of the contributions (in \%) of each sector in the total output}}
    \label{fig:part_of_sector}
\end{figure}
Finally, from figure~\ref{fig:part_of_sector}~\footnote{We ignore years 2020 and 2021 due to COVID-19 jumps.}, due to deindustrialization and the reduction in agricultural production, the share of production from sectors \textit{Very High Emitting} and \textit{Low Emitting} is tending to decline in the French economy. This decline accelerates with the severity of the transition. However, it is evident that the French economy is shifting toward a service-based economy. This phenomenon becomes more pronounced as the carbon price increases. It could imply that companies are increasingly leaving the most polluting (and therefore most taxed) sectors. To mitigate these effects, we could intelligently reinvest the collected carbon taxes.
}

{
\color{black}
\subsubsection{Greenhouse gases}
We observe in Figure~\ref{fig:emission_per_scenario}~\footnote{We ignore years 2020 and 2021 due to COVID-19 jumps.} a slow reduction in GHG emissions concurrent with deindustrialization, the reduction of agricultural production and probably thanks to the efforts that are beginning to be made. With the introduction of a carbon price, this downward trend accelerates. However, for the sector \textit{Very High Emitting}, GHG emissions continue to increase for soft transition scenarios.
\begin{figure}[!ht]
    \centering
    \includegraphics[width=0.99\textwidth]{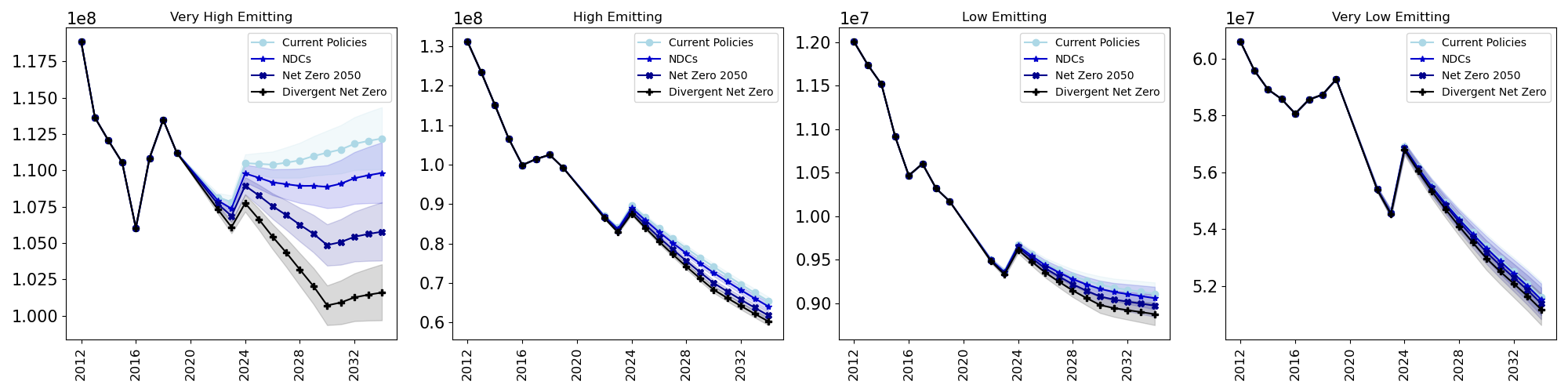}
    \caption{\textcolor{black}{Mean and 95\% confidence interval of the direct GHG emissions (in ton of CO2-equivalent) of firms in each sector}}
    \label{fig:emission_per_scenario}
\end{figure}
Finally, it should be noted that these trends in GHG emissions can also be explained by the structure chosen for our carbon intensities.
}

\subsubsection{Firm valuation}

Here, we consider a representative firm characterized by its cashflow $F_{t_\circ-1}$ at $t_\circ-1$, with standard deviation~$\sigma_{\bb}$ and by the contribution~$\mathfrak{a}$ of sectoral consumption growth to its cash flows growth. 
We would like to know how the value of this company evolves during the transition period and with \textcolor{black}{the carbon price} introduced in the economy. 
Consider $F_{t_\circ-1} = \text{\euro{1,000,000}}$, $\sigma_{\bb} = 5.0\%$, 
$\mathfrak{a} = [0.25, 0.25, 0.25, 0.25]$ (each sector has the same contribution to the growth of the cash flows of the firm), the interest rate $r = 5\%$. \textcolor{black}{For $M = 5000$ simulations} of the productivity processes $(\Theta_t, \cA_t)_{t_\circ\leq t\leq t_\star}$, we compute the firm value using~\eqref{eq expression of cV}. 
We can analyze both the average evolution of the firm value per year and per scenario (Figure~\ref{fig:avg_fv})
and the empirical distribution of the firm value per scenario (Figure~\ref{fig:distrib_fv}).

\begin{figure}[!ht]
     \centering
     \begin{subfigure}[b]{0.495\textwidth}
         \centering
         \includegraphics[width=\textwidth]{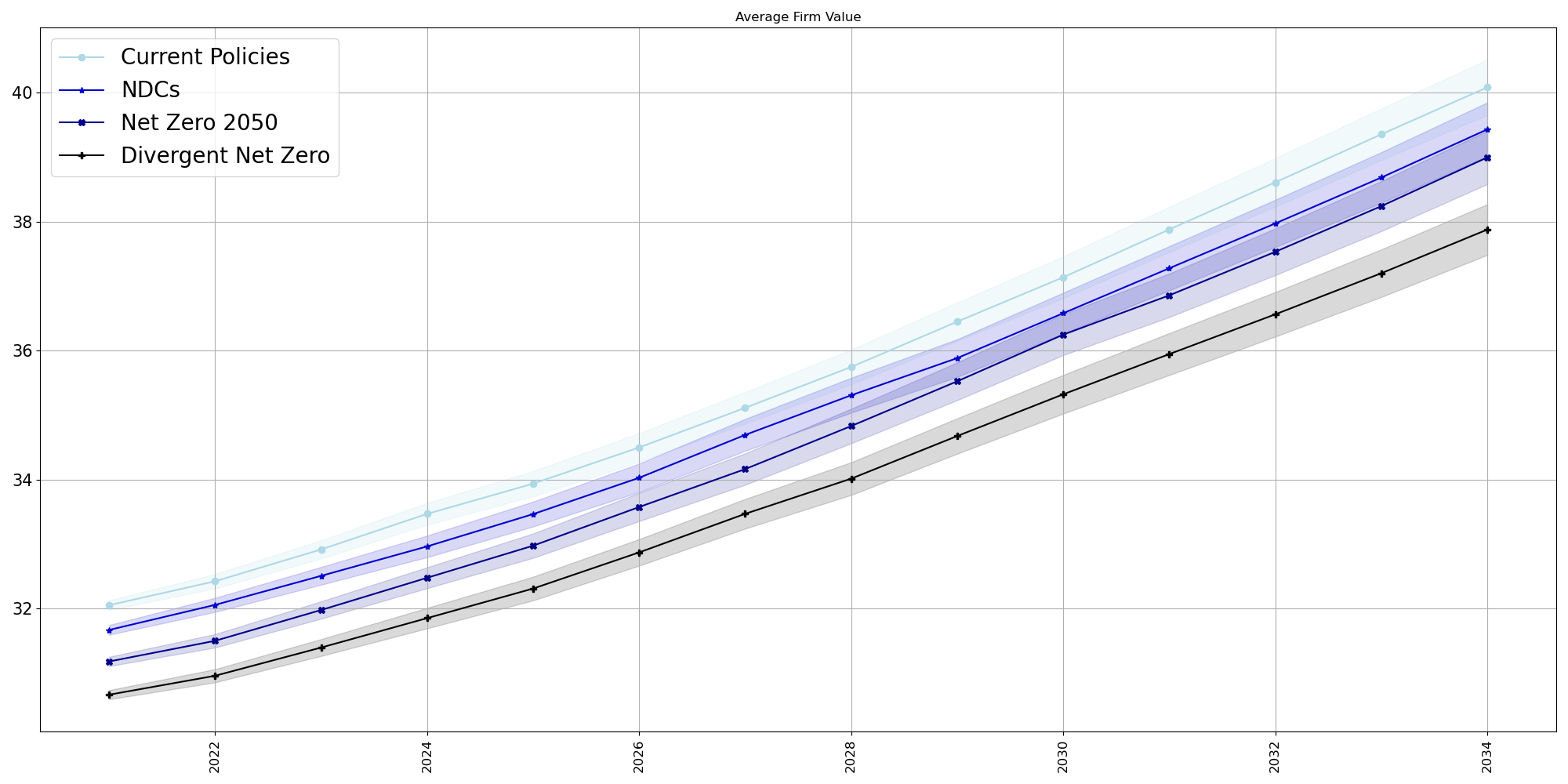}
         \caption{\textcolor{black}{Annual firm value per scenario in million euros per year}}
         \label{fig:avg_fvM}
     \end{subfigure}
     \hfill
     \begin{subfigure}[b]{0.495\textwidth}
         \centering
         \includegraphics[width=\textwidth]{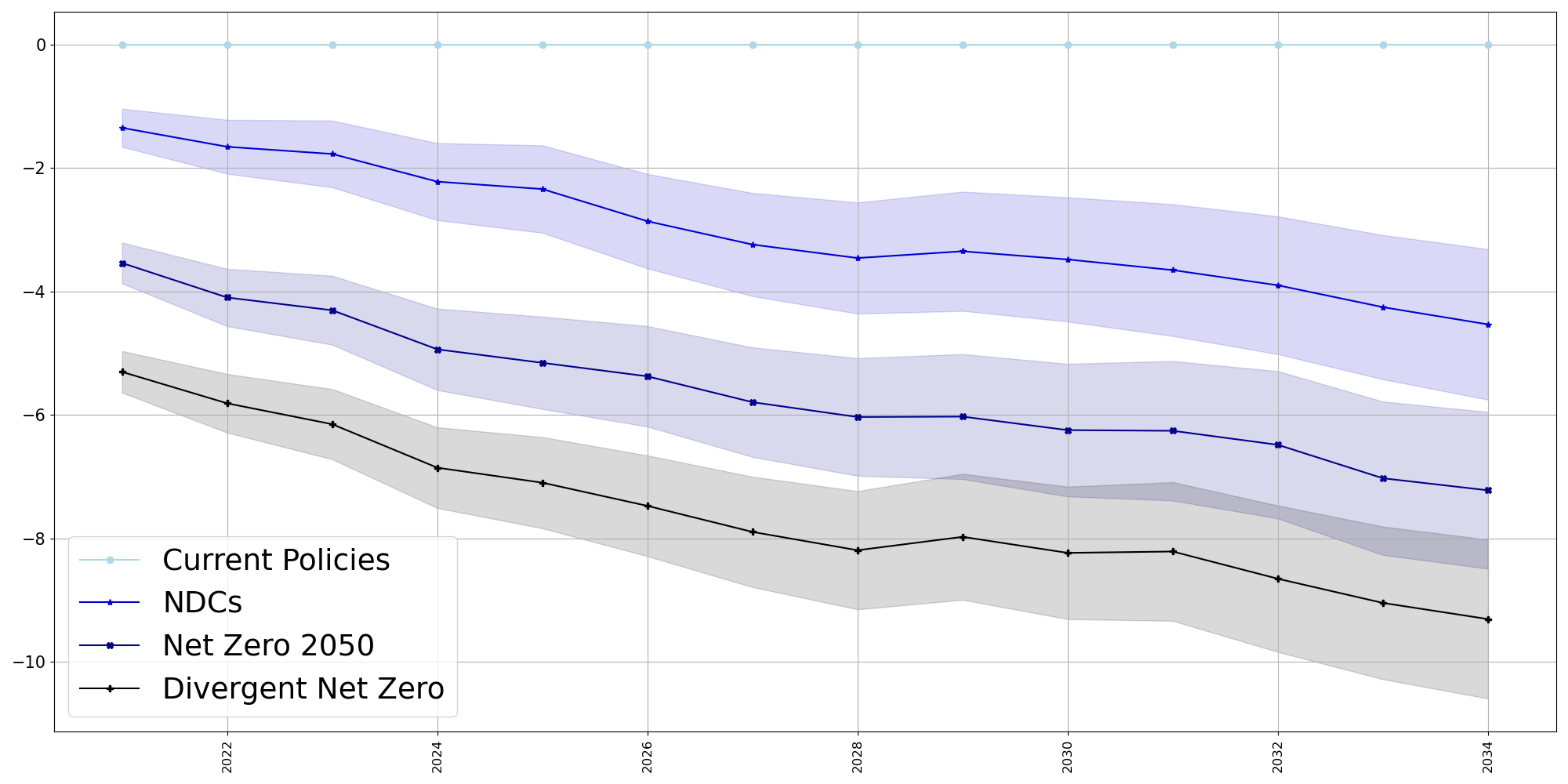}
         \caption{\textcolor{black}{Annual firm value growth per scenario in \% per year}}
         \label{fig:avg_fvP}
     \end{subfigure}
        \caption{\textcolor{black}{Firm value}}
        \label{fig:avg_fv}
\end{figure}

\textcolor{black}{We see that even if the value of the firm grows each year, this growth is affected by the severity of the transition scenario (Figure~\ref{fig:avg_fvM}). The presence of a \textcolor{black}{carbon price} in the economy clearly reduces the firm value yearly by -2.440\% for \textit{NDCs}, -5.009\% for \textit{Net Zero 2050}, and -7.412\% for \textit{Divergent Net Zero}. The faster the transition, the steeper the drop over 10 years. We start thus from a decrease of 1.351\% in 2021 to 3.483\% in 2030 for \textit{NDCs}, from 3.541\% to 6.248\% for \textit{Net Zero 2050}, and 
  from 5.307\% to 8.238\% for \textit{Divergent Net Zero} (Figure~\ref{fig:avg_fvP}).}

\begin{figure}[!ht]
    \centering
    \includegraphics[width=0.95\textwidth]{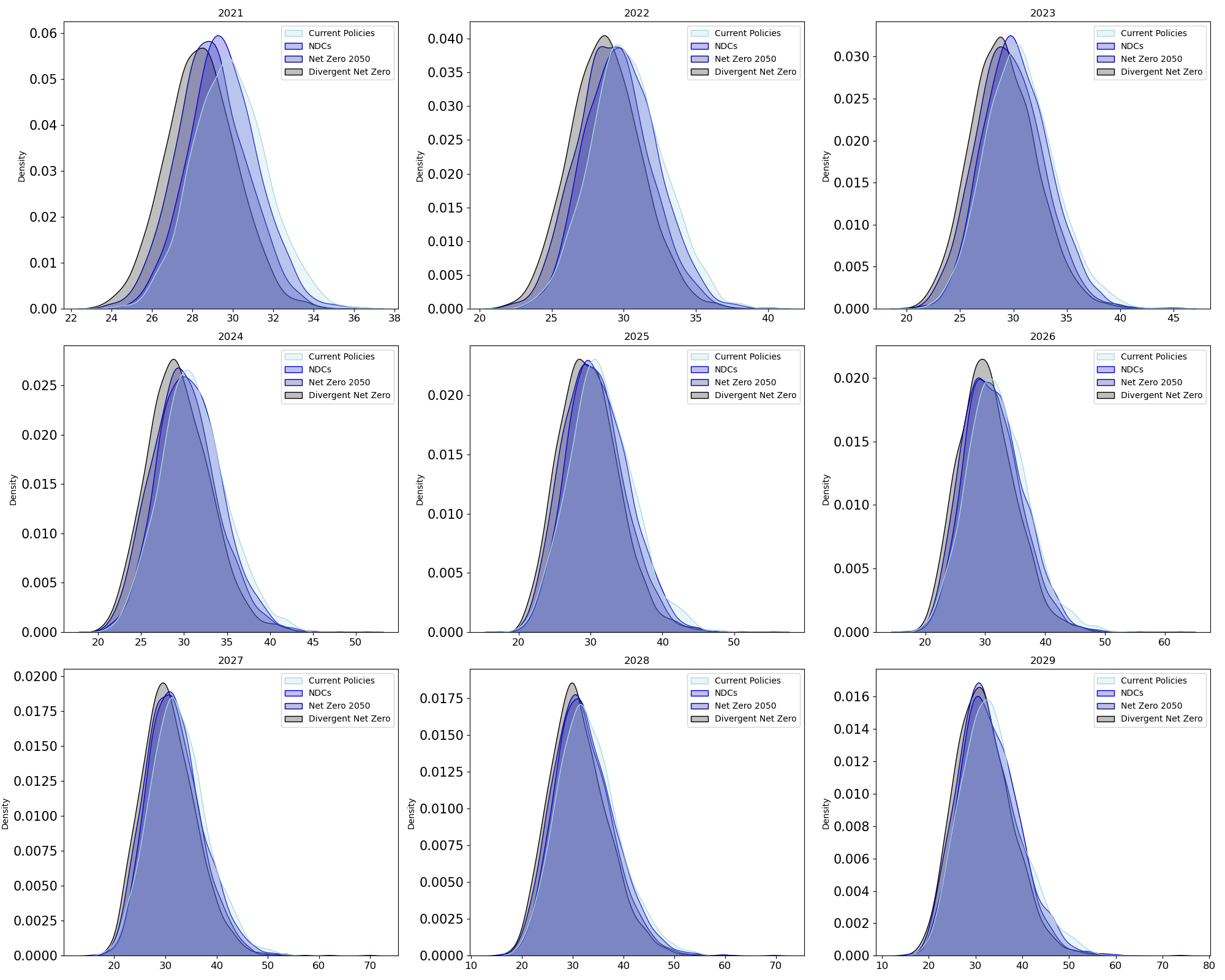}
    \caption{\textcolor{black}{Firm value distribution per scenario and per year}}
    \label{fig:distrib_fv}
\end{figure}

The introduction of the transition scenario distorts the density function of the firm value, and in particular, moves it to the left.

\paragraph{Credit risk}
Consider a fictitious portfolio of \textcolor{black}{$N=16$} firms described in Table~\ref{tab:portfolio} below. This choice is made to ease the reproducibility of the result since the default data are proprietary data of BPCE. \textcolor{black}{Note that the firms $1$ to $4$, $5$ to $8$, $9$ to $12$, and $13$ to $16$ respectively belong to the \textit{Very High Emitting}, \textit{High Emitting}, \textit{Low Emitting}, and \textit{Very Low Emitting} groups.} 

\begin{table}[ht!]
\tiny \centering
\begin{tabular}{|l|r|r|r|r|r|r|r|r|r|r|r|r|r|r|r|r|}
\hline
{ \textbf{n°}}          & \multicolumn{1}{l|}{{ \textbf{1}}} & \multicolumn{1}{l|}{{ \textbf{2}}} & \multicolumn{1}{l|}{{ \textbf{3}}} & \multicolumn{1}{l|}{{ \textbf{4}}} & \multicolumn{1}{l|}{{ \textbf{5}}} & \multicolumn{1}{l|}{{ \textbf{6}}} & \multicolumn{1}{l|}{{ \textbf{7}}} & \multicolumn{1}{l|}{{ \textbf{8}}} & \multicolumn{1}{l|}{{ \textbf{9}}} & \multicolumn{1}{l|}{{ \textbf{10}}} & \multicolumn{1}{l|}{{ \textbf{11}}} & \multicolumn{1}{l|}{{ \textbf{12}}}& \multicolumn{1}{l|}{{ \textbf{13}}} & \multicolumn{1}{l|}{{ \textbf{14}}} & \multicolumn{1}{l|}{{ \textbf{15}}} & \multicolumn{1}{l|}{{ \textbf{16}}} \\ \hline
{ \textbf{$\sigma_{\bb^n}$}} & 0.05& 0.05& 0.06& 0.06& 0.06& 0.07& 0.07&0.07&0.08& 0.08 &0.08&0.09& 0.09&0.09&0.10&0.10\\ \hline\hline
{ \textbf{$F_0^n$}}          & 1.0  & 1.0  & 1.0  & 1.0  & 1.0  & 1.0  & 1.0  & 1.0  & 1.0  & 1.0   & 1.0   & 1.0& 1.0  & 1.0  & 1.0  & 1.0 \\ \hline\hline
{ \textbf{$B^n$}}            & 2.95&	2.94&	2.93&	2.92&	3.06&	3.02&	2.98&	2.94&	2.94&	2.93&	2.92&	2.90&	2.99&	2.96&	2.94&	2.92 \\ \hline\hline
{ \textbf{$\mathfrak{a}^n$(\textbf{Very High})}} & 1.0 & 0.75 & 0.50 & 0.25 & 0.0 & 0.0 & 0.0 & 0.0 & 0.0 & 0.0  & 0.0  & 0.0& 0.0 & 0.0  & 0.0  & 0.0 \\ \hline
{ \textbf{$\mathfrak{a}^n$(\textbf{High})}}                 & 0.0 & 0.0    & 0.0  & 0.0  & 1.0  & 0.75    & 0.50    & 0.25    & 0.0  & 0.0  & 0.0 & 0.0 & 0.0  & 0.0  & 0.0 & 0.0   \\ \hline
{ \textbf{$\mathfrak{a}^n$(\textbf{Low})}}                 & 0.0 & 0.0    & 0.0    & 0.0    & 0.0  & 0.0  & 0.0  & 0.0    & 1.0 & 0.75  &  0.50    & 0.25& 0.0  & 0.0  & 0.0 & 0.0 \\ \hline
{ \textbf{$\mathfrak{a}^n$(\textbf{Very Low})}}                 & 0.0 & 0.0    & 0.0    & 0.0    & 0.0    & 0.0    & 0.0  & 0.0  & 0.0 & 0.0 & 0.0  & 0.0& 1.0 & 0.75  &  0.50    & 0.25 \\ \hline
\end{tabular}
\caption{\textcolor{black}{Characteristics of the portfolio}}
\label{tab:portfolio}
\end{table}

\subsubsection{Probabilities of default}
We use the parameters of the portfolio and firms as detailed in Table~\ref{tab:portfolio} to compute the annual PDs over ten years using the closed-form formulae~\eqref{eq:estPD}. We then report, in Figure~\ref{fig:avg_D+PD}, the average annual PD and its annual evolution. 

The remarks raised for the output growth remain valid, only the monotony changes: we can clearly distinguish the fourth various climate transition scenario.
\textcolor{black}{The probability of default slightly grows each year, this is due to the fact that $\PD$ (see~\eqref{eq:stressedPD}) is driven in particular by the productivity growth which, in France, tends to decrease slightly toward a stationary position (see Figure~\ref{fig:Productivity_growth}).} Even in the \textit{Current Policies} scenario, the PD goes from 0.132\% in 2021 to 0.352\% in 2030. 
\begin{figure}[!ht]
    \centering
    \includegraphics[width=\textwidth]{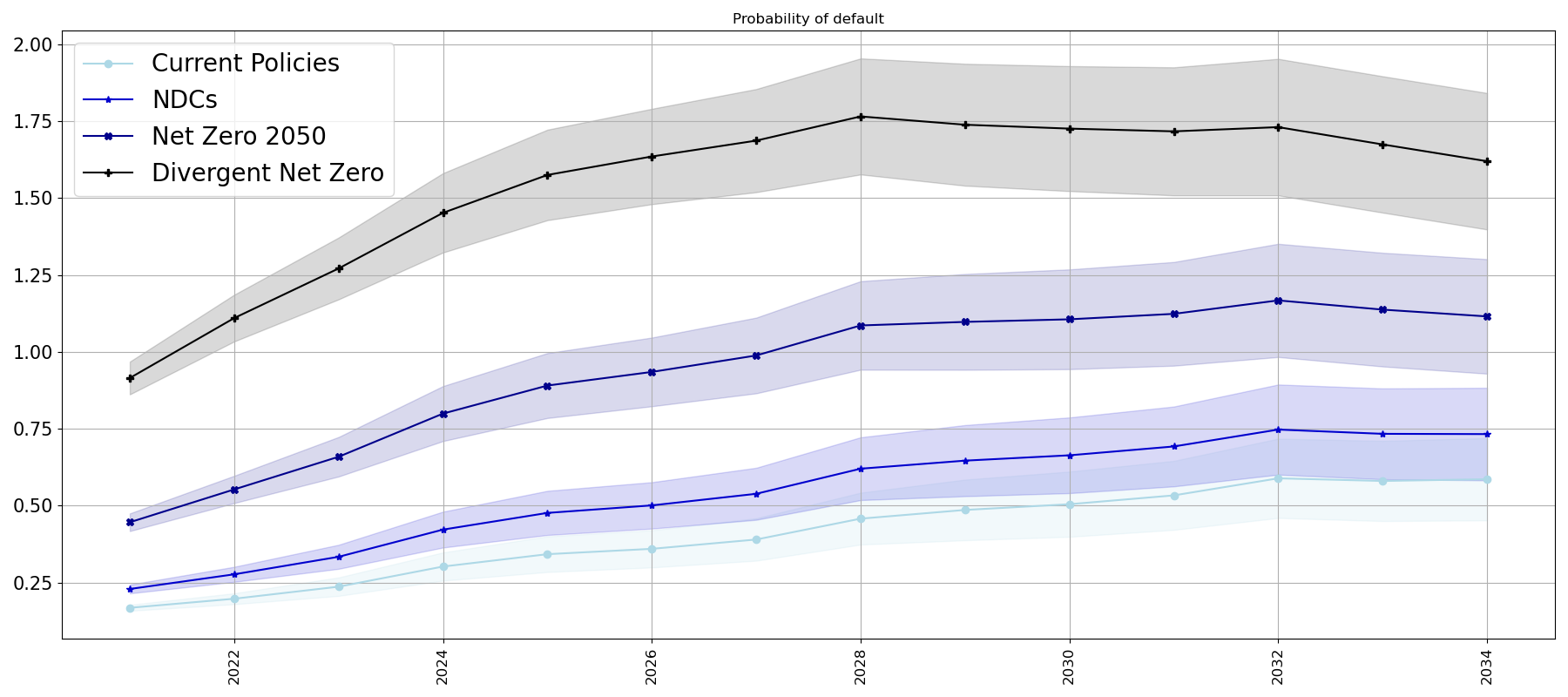}
    \caption{\textcolor{black}{Average annual probability of default and 95\% confident interval of the portfolio, per scenario and year in \%}}
    \label{fig:avg_D+PD}
\end{figure}
Moreover, the increase is emphasized when the transition scenario gets tougher from an economic point of view. Between the worst-case (\textit{Divergent Net Zero}) scenario and the best-case (\textit{Current Policies}) one, \textcolor{black}{the difference in the average PD reaches 1.307\% in 2030 and gradually decreases after the transition. Over the transition period of 10 years, the annual average PD is 0.344\% for the \textit{Current Policies} scenario,
0.471\% for the \textit{NDCs} scenario, 0.856\% for the
\textit{Net Zero 2050} scenario, and 1.487\% for the
\textit{Divergent Net Zero} scenario.} It is no surprise that the introduction of a \textcolor{black}{carbon price} increases the portfolio's average probability of default. 
\begin{figure}[!ht]
    \centering
    \includegraphics[width=\textwidth]{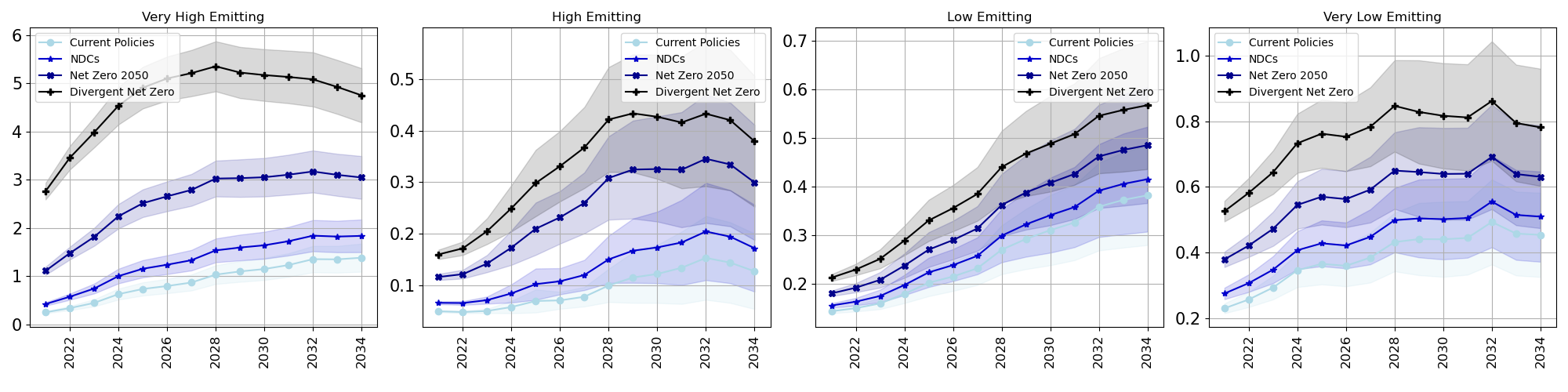}
    \caption{\textcolor{black}{Average annual probability of default and 95\% confident interval, per scenario and per sub-portfolio}}
    \label{fig:avg_D_PD_firms}
\end{figure}

In Figure~\ref{fig:avg_D_PD_firms} above, we can also observe that, for each sub-portfolio, the evolution of the PD depends on the sector that is at the origin of the growth of its cash flows. As expected, the PD grows throughout the years, and the growth is even more abrupt when the sub-portfolio is polluting. Diversification also has a positive effect on the portfolio: the average PD of the overall portfolio is higher than the average PD of the least polluting sub-portfolios, and lower than the average PD of the most polluting portfolios.

\red{
Bouchet-Le Guenedal~\cite{bouchet2020credit} and Bourgey-Gobet-Jiao~\cite{bourgey2021bridging} also worked on the impact of a carbon price on credit portfolios. They computed the PDs from 2020 up to 2050/2060. More precisely, \cite{bouchet2020credit} presents the percentage of companies by sector and by scenario whose probabilities
of default are above 99\%, and \cite{bourgey2021bridging} focuses on default intensities and probabilities of default. However, they fixed the time at which the PDs are computed and varied the time horizon (maturity) while in our case, we are doing the other way around. Moreover, they do not comment much about the uncertainties on the dynamics of the balance sheet, productivity, carbon intensities, and the carbon price, while such uncertainties are expected to significantly increase with the transition time horizon, and therefore to substantially impact any credit risk metrics. As the average length of small and medium-sized enterprises loans is of about seven years, we prefer to focus on short-term risk measures.}

\subsubsection{Expected and unexpected losses}
We compute the EL and UL using~\eqref{eq:estEL} and~\eqref{eq:estUL}, 
assuming that LGD and EAD are constant over the years and $\LGD^n = 45\%$ and $\EAD^n = \text{\euro{10 million}}$ for each firm $n$ described in Table~\ref{tab:portfolio}. \textcolor{black}{The annual exposure of the notional portfolio of the $N=16$ firms thus remains fixed and is equal to \text{\euro{160 millions}}, while each sub-portfolio exposure is of \text{\euro{40 millions}}}. We then express losses as a percentage of the firm's or portfolio's exposure. Table~\ref{tab:el} and Table~\ref{tab:ul} show the average annual EL and UL.

\begin{table}[ht!]
\small \centering
\begin{tabular}{|r|r|r|r|r|r|}
\hline
\multicolumn{1}{|r|}{\textit{\textbf{Emissions level}}} & \multicolumn{1}{l|}{\textbf{\begin{tabular}[c]{@{}l@{}}Very High\end{tabular}}} & \multicolumn{1}{l|}{\textbf{\begin{tabular}[c]{@{}l@{}}High \end{tabular}}} & \multicolumn{1}{l|}{\textbf{\begin{tabular}[c]{@{}l@{}}Low\end{tabular}}} & \multicolumn{1}{l|}{\textbf{\begin{tabular}[c]{@{}l@{}}Very Low\end{tabular}}} & \multicolumn{1}{l|}{\textbf{Portfolio}}\\ \hline
\textit{\textbf{Current Policies}}       & 0.329&   0.034	&0.097	&0.160 & 0.119\\ \hline
\textit{\textbf{NDCs}}                   & 0.504&	0.050	&0.107	&0.186 & 0.161 \\ \hline
\textit{\textbf{Net Zero 2050}}          & 1.066&	0.100	&0.128  &0.246 & 0.292 \\ \hline
\textit{\textbf{Divergent Net Zero}}     & 2.057&	0.138   &0.155 &0.327	 & 0.512 \\ \hline
\end{tabular}
\caption{\textcolor{black}{Average annual EL as a percentage of exposure}}
\label{tab:el}
\end{table}

\begin{figure}[!ht]
    \centering
    \includegraphics[width=\textwidth]{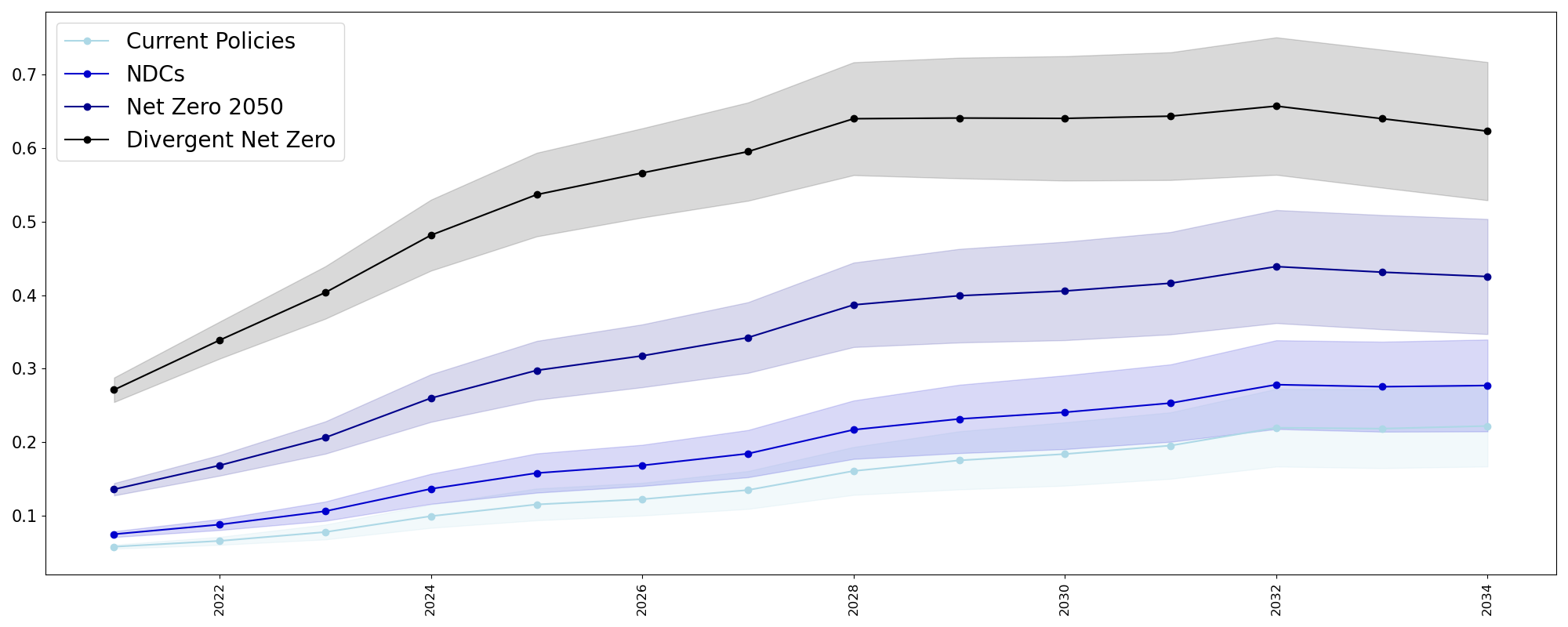}
    \caption{\textcolor{black}{EL of the portfolio in \% of the exposure per scenario and per year}}
    \label{EL}
\end{figure}

\textcolor{black}{We observe in Table~\ref{tab:el} and Figure~\ref{EL} that, as expected (notably because the LGD is assumed to be deterministic and constant), the different scenarios remain clearly differentiated for the EL. The latter as a percentage of the portfolio's exposure increases with the year and the carbon price. For the portfolio as a whole, we see that the average annual EL increases from 330\% between the two extreme scenarios. Moreover, still focusing on the two extreme scenarios, the average annual EL increases by 525\% for the \textit{Very High Emitting} portfolio while it increases by 143\% for the \textit{Very Low Emitting} portfolio.}
The EL being covered by the provisions coming from the fees charged to the client, an increase in the EL implies an increase in credit cost. 
Therefore companies from the most polluting sectors should be charged more than those from the least polluting sectors.

\begin{table}[ht!]
\small \centering
\begin{tabular}{|r|r|r|r|r|r|}
\hline
\multicolumn{1}{|r|}{\textit{\textbf{Emissions level}}} & \multicolumn{1}{l|}{\textbf{\begin{tabular}[c]{@{}l@{}}Very High\end{tabular}}} & \multicolumn{1}{l|}{\textbf{\begin{tabular}[c]{@{}l@{}}High \end{tabular}}} & \multicolumn{1}{l|}{\textbf{\begin{tabular}[c]{@{}l@{}}Low\end{tabular}}} & \multicolumn{1}{l|}{\textbf{\begin{tabular}[c]{@{}l@{}}Very Low\end{tabular}}} & \multicolumn{1}{l|}{\textbf{Portfolio}}\\ \hline
\textit{\textbf{Current Policies}}       & 1.191	&0.066	&0.147	&0.277 & 0.109 \\ \hline
\textit{\textbf{NDCs}}                   & 1.691&	0.098&	0.163&	0.316 &  0.161 \\ \hline
\textit{\textbf{Net Zero 2050}}          & 2.964&	0.193&	0.197&	0.400 & 0.307 \\ \hline
\textit{\textbf{Divergent Net Zero}}     & 4.585&	0.264&	0.239&	0.507 & 0.520 \\ \hline
\end{tabular}
\caption{\textcolor{black}{Average annual UL as a percentage of exposure}}
\label{tab:ul}
\end{table}

\begin{figure}[!ht]
    \centering
    \includegraphics[width=0.99\textwidth]{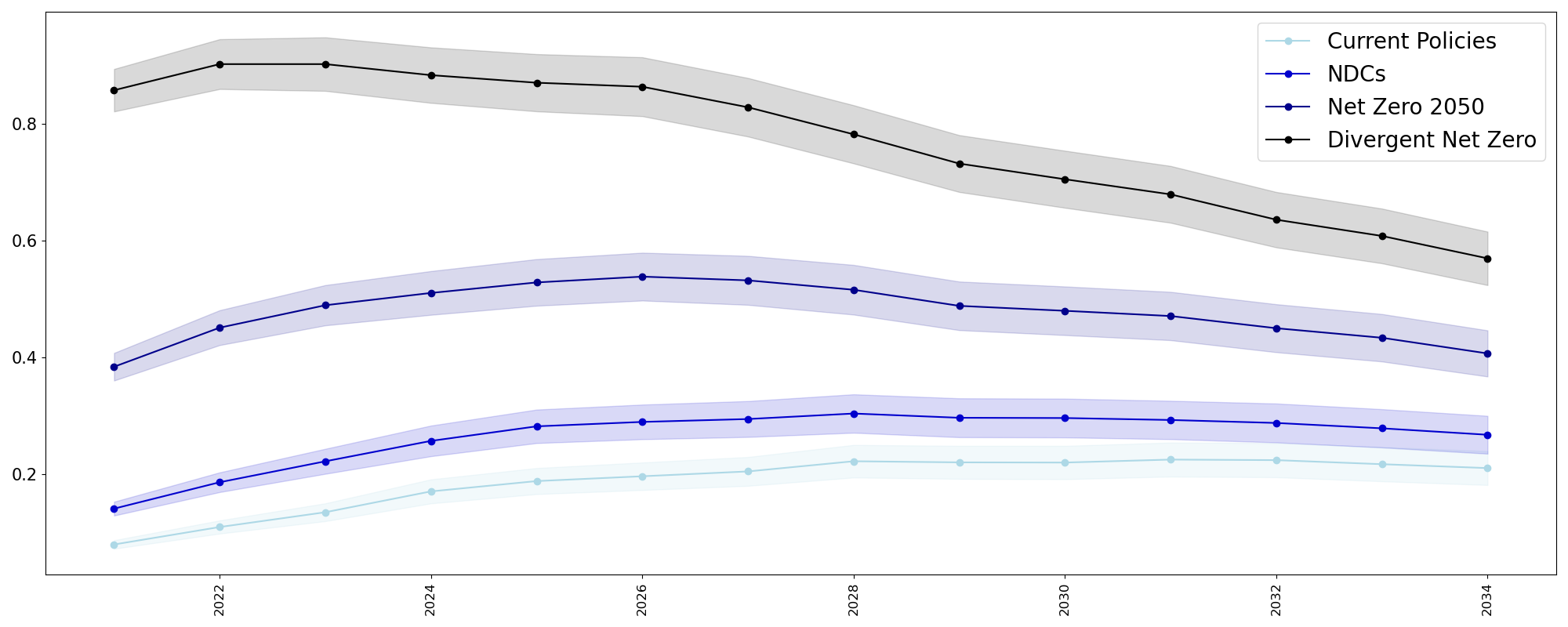}
    \caption{\textcolor{black}{UL of the portfolio in \% of the exposure per scenario and per year}}
    \label{UL}
\end{figure}

\textcolor{black}{Similarly for the UL, we observe the difference between the scenarios from Table~\ref{tab:ul} and Figure~\ref{UL}. For the portfolio as a whole, we see that the average annual UL increases by 377\% between the two extreme scenarios. Moreover, still focusing on the two extreme scenarios, the average annual UL increases by 284\% for the \textit{Very High Emitting} portfolio while it increases by 83\% for the \textit{Very Low Emitting} portfolio.}

The UL being covered by the economic capital coming from the capital gathered by the shareholders, an increase in the UL implies a decrease in the bank's profitability. Therefore, in some way, granting loans to companies from the most polluting sectors will affect banks more negatively than doing so to companies from the least polluting sectors.
We therefore observe that the introduction of \textcolor{black}{a carbon price} will not only increase the banking fees charged to the client (materialized by the provisions via the expected loss) but will also reduce the bank's profitability (via the economic capital that is calculated from the unexpected loss).
Finally, for more in-depth analysis, Figure~\ref{fig:EL_distribution} (respectively Figure~\ref{fig:UL_distribution}) shows the distortions of the distribution of the EL (respectively the UL) per scenario and per year.

\subsubsection{Losses' sensitivities to \textcolor{black}{carbon price}} Finally, we compute the sensitivity of our portfolio losses to \textcolor{black}{carbon price} using~\eqref{eq:approxSensi}. Since the scenarios are deterministic, this quantity allows us to measure some form of model uncertainty. Indeed, for a given scenario, it allows to capture the level by which the computed loss would vary should that assumed deterministic scenario deviate by a certain percentage. For each time $t$, we choose the direction~$\mathfrak{U} = \begin{bmatrix} 1 & \hdots & 1 \end{bmatrix}^\top \in\RR_+^{t_\star+1}$, and a step~$\vartheta = 1\%$. A carbon price change of 1\% will cause a change in the EL of $\widehat{\Gamma}^{N,T}_{t,\delta}(\EL)$ and a change in the UL of $\widehat{\Gamma}^{N,T}_{t,\delta, \alpha}(\UL)$. We report the results in Table~\ref{tab:secdeltaEL} and Table~\ref{tab:secdeltaUL}.

For example, over the next ten years, if the price of carbon varies by 1\% around the scenario \textit{NDCs}, the portfolio's EL will vary by $1.402$\% while the portfolio's UL will change by $1.148$\% around this scenario. 

\begin{table}[ht!]
\small \centering
\begin{tabular}{|r|r|r|r|r|r|}
\hline
\multicolumn{1}{|r|}{\textit{\textbf{Emissions level}}} & \multicolumn{1}{l|}{\textbf{\begin{tabular}[c]{@{}l@{}}Very High\end{tabular}}} & \multicolumn{1}{l|}{\textbf{\begin{tabular}[c]{@{}l@{}}High \end{tabular}}} & \multicolumn{1}{l|}{\textbf{\begin{tabular}[c]{@{}l@{}}Low\end{tabular}}} & \multicolumn{1}{l|}{\textbf{\begin{tabular}[c]{@{}l@{}}Very Low\end{tabular}}} & \multicolumn{1}{l|}{\textbf{Portfolio}}\\ \hline
\textit{\textbf{Current Policies}}       & 1.561&	1.581&	1.191&	0.827&	1.280 \\ \hline
\textit{\textbf{NDCs}}                   & 1.777&	1.687&	1.261&	0.904&	1.402\\ \hline
\textit{\textbf{Net Zero 2050}}          & 2.142&	1.864&	1.386&	1.035&	 1.631  \\ \hline
\textit{\textbf{Divergent Net Zero}}     & 2.668&	2.096&	1.562&	1.215&	1.973\\ \hline
\end{tabular}
\caption{\textcolor{black}{Average annual EL sensitivity to carbon price in \%}}
\label{tab:secdeltaEL}
\end{table} 

\begin{table}[ht!]
\small \centering
\begin{tabular}{|r|r|r|r|r|r|}
\hline
\multicolumn{1}{|r|}{\textit{\textbf{Emissions level}}} & \multicolumn{1}{l|}{\textbf{\begin{tabular}[c]{@{}l@{}}Very High\end{tabular}}} & \multicolumn{1}{l|}{\textbf{\begin{tabular}[c]{@{}l@{}}High \end{tabular}}} & \multicolumn{1}{l|}{\textbf{\begin{tabular}[c]{@{}l@{}}Low\end{tabular}}} & \multicolumn{1}{l|}{\textbf{\begin{tabular}[c]{@{}l@{}}Very Low\end{tabular}}} & \multicolumn{1}{l|}{\textbf{Portfolio}}\\ \hline
\textit{\textbf{Current Policies}}       & 1.299	&1.290	&1.042&	0.547	&1.135 \\ \hline
\textit{\textbf{NDCs}}                   & 1.463&	1.365	&1.102&	0.583	&1.148 \\ \hline
\textit{\textbf{Net Zero 2050}}   & 1.726&	1.485	&1.206&	0.634	&1.197       \\ \hline
\textit{\textbf{Divergent Net Zero}}   & 2.070&	1.632&	1.352&	0.681&	1.472    \\ \hline
\end{tabular}
\caption{{\textcolor{black}{Average annual UL sensitivity to carbon price in \%}}}
\label{tab:secdeltaUL}
\end{table}

The greater the sensitivity, the more polluting the sector is. This is to be expected as \textcolor{black}{carbon prices} are higher in these sectors. 
In addition,  the sensitivity of the portfolio is smaller than that in the most polluting sectors, and greater than that in the least polluting ones. 
Finally, we notice that the variation of the EL is slightly more sensitive than the variation of the UL. This means that the bank's provisions will increase a bit more than the bank's capital, or that the growth of the \textcolor{black}{carbon price} will impact customers more than shareholders.

\section*{Conclusion}
In this work, we study how the introduction of \textcolor{black}{a carbon price} would propagate in a credit portfolio. To this aim, we first build a dynamic stochastic multisectoral model in which \textcolor{black}{firms (resp. households) are charged for the GHG they emit when they consume intermediary inputs from other
sectors and when they produce goods/services (resp. for the GHG they emit when they consume goods/services).} We later use the Discounted Cash Flows methodology to compute the firm value and introduce the latter in a structural credit risk model to project  $\PD$, $\EL$ and $\UL$. We finally introduce losses' sensitivities to \textcolor{black}{carbon price} to measure the uncertainty of the losses to the transition scenarios. This work opens the way to numerous extensions.
In the climate-economic model, exogenous and deterministic scenarios as well as homogeneous agents are assumed while one could consider agent-based or mean-field games models where a central planner decides on the \textcolor{black}{carbon price} and agents (companies or households) optimize production, prices, and consumption according to the carbon price/tax level. In the credit risk part, the LGD is assumed to be deterministic, constant, and independent of the \textcolor{black}{carbon price}. In our forthcoming research, we will analyze how the LGD is affected by the stranding of assets. We furthermore assume that $\EAD$ and thus bank balance sheets remain static over the years while the transition will require huge investments. One could thus introduce capital in the model. Finally, we have adopted a sectoral view, while one could alternatively assess the credit risk at the counterpart level and thus penalize or reward companies according to their individual and not sectoral emissions.

\newpage
\bibliographystyle{siam}
\bibliography{main}
\newpage

\appendix

\section{Vector Autoregressive Model (VAR):} \label{app:VAR}
Detailed proofs can be found in Hamilton~\cite{hamilton2020time}, and Kilian and Lütkepohl~\cite{kilian2017structural}.
\noindent Assume that $(\Theta_t)_{t\in\NN}$ follows a VAR, i.e. for all~$t\in\NN^*$, 
\begin{equation}
    \Theta_t = \mu + \Gamma \Theta_{t-1} +  \cE_{t},\quad\text{where for}\quad t\in\mathbb{Z}, \quad\cE_{t}\sim \cN(\mathbf{0}, \Sigma)
\end{equation}
with $\mu \in \RR^I$ and where the matrix~$\Gamma \in\RR^{I\times I}$ has eigenvalues all strictly less than~$1$ in absolute value. We have the following result that is shown in the VAR's literature.
\begin{itemize}
    \item $(\Theta_t)_{t\in\NN}$ is weak-stationary.
    \item If $\Theta_0 \sim \cN(\overline \mu, \overline{\Sigma})$ with~$\overline \mu:= (\Ir_{I
     } - \Gamma)^{-1} \mu$, and~$\vecc(\overline{\Sigma}) = (\Ir_{I\times I}- \Gamma\bigotimes\Gamma)^{-1} \vecc(\Sigma)$, then for~$t\in\mathbb{Z}$, $\cE_{t}\sim \cN(\mathbf{0}, \Sigma)$ with~$ \Sigma \in\RR^{I\times I}$.
     \item For~$t,T\in\NN$, we note $\Upsilon_t := \sum_{v=0}^{t}\Gamma^v$, then
     \begin{equation*}
        \sum_{u=1}^T\sum_{v=1}^u \Gamma^{u-v}\cE_{t+v}
        = \sum_{u=1}^{T} \Upsilon_{T-u} \cE_{t+u},
    \end{equation*}
     \item For~$t,u\in\NN$, 
     \begin{equation}
        \Theta_{t} = \overline{\mu} + \sum_{v=1}^{\infty} \Gamma^{v}\cE_{t-v}\qquad\text{and}\qquad\Theta_{t+T} = \Gamma^T \Theta_{t} + \Upsilon_{T-1} \mu + \sum_{v=1}^{T} \Gamma^{T-v} \cE_{t+v}. \label{eq:MA_infinite}
    \end{equation}
    \item For~$t,T\in\NN$,
    \begin{equation}
        \left(\sum_{u=1}^{T}\Theta_{t+u}\middle|\Theta_t\right) \sim \cN\left( \Gamma \Upsilon_{T-1} \Theta_{t} + \left(\sum_{u=1}^{T}\Upsilon_{u-1}\right)\mu ,\sum_{u=1}^{T} \Upsilon_{T-u}\Sigma (\Upsilon_{T-u})^\top\right),\label{VAR:condLaw}
    \end{equation}
    and in particular
$\left(\Theta_{t+1}\middle|\Theta_t\right) \sim \cN\left(\mu + \Gamma \Theta_{t},\Sigma\right)$.
\end{itemize}

\section{Proofs}

\subsection{Existence condition of the firm value}
\label{proof:prop:existenceCondFV}
\begin{proof}[Proof of Proposition~\ref{prop:existenceCondFV}]
    Let $t \in \NN$, $n \in \OneN$. For $s> 0$, from Assumption~\ref{ass:Fgrowth}, we observe that
    \begin{align}\label{eq remark on F}
        F^n_{t+s} = F^n_t \exp \left( \sum_{u=1}^{s} w^n_{t+u} \right).
    \end{align}
    Let $K \in \NN^*$ and define 
    \begin{align} \label{eq VnKt}
        V^{n,K}_t &:= \EE_{t}\left[\sum_{s=0}^{K} e^{-r s} F^n_{t+s} \right].
    \end{align}  
     We now show that $\lim_{K \rightarrow +\infty }V^{n,K}$ exists, in particular that $\EE_{t}\left[e^{-r s} F^n_{t+s} \right]$ is summable.
     To this end, we first observe that
    \begin{equation*}
        V^{n,K}_t = F^n_t\left(1+ \sum_{s=1}^K e^{-rs}\EE_{t}\left[ \exp \left( \sum_{u=1}^{s} w^n_{t+u} \right)\right] \right).
    \end{equation*}
    We now give an upper bound for $\Vert \exp \left( \sum_{u=1}^{s} w^n_{t+u} \right)\Vert_p$ for some $p>1$.
We observe that, using~\eqref{eq:CF_vs_GDPGrowth}, 
    \begin{align}\label{eq sum w}
        \sum_{u=1}^{s} w^n_{t+u} =   \mathfrak{a}^{n\cdot}\left( \sum_{u=1}^{s}\Theta_{t+u} + \mfv(\dd_{t+s})-\mfv(\dd_t)\right)+ \sum_{u=1}^{s}\bb^n_{t+u}.
    \end{align}
    From Assumption~\ref{sassump:VAR} and~\eqref{eq expression Theta}, it follows
    \begin{equation*}
        \Theta_{t+u} = \overline{\mu} + \varepsilon\left(\Gamma^u \cZ_t + \sum_{v=1}^u \Gamma^{u-v}\cE_{t+v} \right).
    \end{equation*}
   We define $\Upsilon_k := \sum_{v=0}^k\Gamma^v$ and observe that
    \begin{align}\label{eq control Upsilon}
        |\Upsilon_k| \le (1-|\Gamma|)^{-1}\;.
    \end{align}
    Since 
    \begin{align}\label{eq compute intermediary}
        \sum_{u=1}^s\sum_{v=1}^u \Gamma^{u-v}\cE_{t+v}
        = \sum_{v=1}^{s} \Upsilon_{s-v} \cE_{t+v},
    \end{align}
    we compute 
    \begin{equation*}
        \sum_{u=1}^s \Theta_{t+u} = \overline{\mu} s + \varepsilon \Gamma \Upsilon_{s-1} \cZ_t + \varepsilon \sum_{v=1}^{s}\Upsilon_{s-v} \cE_{t+v}.
    \end{equation*} 
    Then~\eqref{eq sum w} reads
    \begin{equation*}
    \sum_{u=1}^{s} w^n_{t+u} =   \varepsilon \mathfrak{a}^{n\cdot} \Gamma \Upsilon_{s-1} \cZ_t
        +  s\mathfrak{a}^{n\cdot}\overline \mu
        + \varepsilon \sum_{v=1}^{s}\mathfrak{a}^{n\cdot}\Upsilon_{s-v} \cE_{t+v}
        + \mathfrak{a}^{n\cdot}\left(\mfv(\dd_{t+s})-\mfv(\dd_t)\right)
        + \sum_{u=1}^{s}\bb^n_{t+u}.
    \end{equation*}
    Observe that under Assumption~\ref{sass:taxes}, there exists a constant $\mathfrak{C}>0$ such that
\begin{align}\label{eq borne tax and co}
    \sup_{n,s,t}\exp \left( \mathfrak{a}^{n\cdot}\left(\mfv(\dd_{t+s})-\mfv(\dd_t) \right)\right) \le \mathfrak{C}\,.
\end{align}
Thus, using the independence of $\cZ_t, (\cE_{t+v})_{v \ge 1}, (\bb^n_{t+v})_{v \ge 1}$, we obtain
\begin{align}\label{eq interm for w}
    \scriptstyle \EE_{t}\left[ \exp \left( p \sum_{u=1}^{s} w^n_{t+u} \right)\right]
    \le& \mathfrak{C}^p \exp\Big( p \varepsilon \mathfrak{a}^{n\cdot} \Gamma \Upsilon_{s-1} \cZ_t +p s\mathfrak{a}^{n\cdot}\overline{\mu}\Big) 
    \EE\left[ \exp \left( 
    p\varepsilon \sum_{v=1}^{s}\mathfrak{a}^{n\cdot}\Upsilon_{s-v} \cE_{t+v} + p\sum_{u=1}^{s}\bb^n_{t+u} \right)  \right].
\end{align}
Since
\begin{align}\label{eq the case of b}
    \EE\left[ \exp \left( p\sum_{u=1}^{s}\bb^n_{t+u} \right)  \right]
    = \exp \left( \frac{p^2}2  s \sigma_{\bb^n}^2\right),
\end{align}
we compute 
\begin{equation}
    \EE\left[ \exp \left( 
        p \varepsilon \mathfrak{a}^{n\cdot}\Upsilon_{s-v} \cE_{t+v} \right)  \right] 
        = \exp \left( \frac{\varepsilon^2p^2}2
            | \mathfrak{a}^{n\cdot}\Upsilon_{s-v} \sqrt{\Sigma}|^2\right)
            \le \exp \left( \frac{\varepsilon^2p^2}2
        | \mathfrak{a}^{n\cdot} |^2 | \sqrt{\Sigma}|^2(1-|\Gamma|)^{-2} \right).     \label{eq interm useful}
\end{equation}
One could also have found above a finer upper bound. Combining~\eqref{eq the case of b}-\eqref{eq interm useful} with~\eqref{eq interm for w}, we obtain
\begin{equation*}
    \EE_{t}\left[ \exp \left( p \sum_{u=1}^{s} w^n_{t+u} \right)\right]
    \le  \mathfrak{C}^{p} \exp \left( p \varepsilon\mathfrak{a}^{n\cdot} \Gamma \Upsilon_{s-1} \cZ_t 
    +p^2 \rho s\right).
\end{equation*}
Using similar computations as above, 
we also get (because $\Upsilon_{s-1}$ is bounded and $\cZ_t$ is stationary and Gaussian)
\begin{equation}
    \EE \left[\exp \left( p \varepsilon \mathfrak{a}^{n\cdot} \Gamma \Upsilon_{s-1} \cZ_t 
    \right) \right]\le C_p, \label{eq en dire un peu plus ?}
\end{equation}
and hence
\begin{equation*}
    \left\Vert \exp \left( \sum_{u=1}^{s} w^n_{t+u} \right)\right\Vert_p 
    \le C_p e^{p\rho s}.
\end{equation*}
Under~\eqref{eq main technical ass}, we then obtain
\begin{align}
    \sum_{s \ge 0} e^{-r s} 
    \left\Vert \exp \left( \sum_{u=1}^{s} w^n_{t+u} \right)\right\Vert_p < +\infty,
\end{align}
for some $p>1$. 
Set $1<\tilde p :=\frac{p}{1+\epsilon}$, for $\epsilon>0$ small enough. 
Then, using Hölder's inequality (with $\frac{1}{\tilde p} = \frac{1}{p} + \frac{1}{p/\epsilon}$), 
\begin{equation*}
    \EE \left[ |V^{n,K}_t|^{\tilde p}\right] \le C_p  \EE \left[ |F_t^n|^\frac{p}{\epsilon}\right] <+\infty,
\end{equation*}
since $\Vert F^n_t \Vert_q < \infty$ for any $q \ge 1$.  
\end{proof}

\subsection{Conditional distribution of the firm value}\label{proof:rem:cond expectation fv}

\begin{proof}[Proof of Remark~\ref{rem:cond expectation fv}]
    Let $t, T \geq 1$, we have from~\eqref{eq expression of cV},
    \begin{equation*}
    \begin{split}
        \cV^n_{t+T} &= F^n_0 \mathfrak{R}^n_{t+T}(\mathfrak{d}) \exp{\left(\mathfrak{a}^{n\cdot}(\cA^\circ_{t+T}-\mfv(\dd_{0}))\right)}\exp\left(\sum_{u=1}^{t+T}\bb^n_{u}\right)\\
        &= F^n_0 \mathfrak{R}^n_{t+T}(\mathfrak{d}) \exp{\left(-\mathfrak{a}^{n\cdot}\mfv(\dd_{0})\right)}\exp{\left(\mathfrak{a}^{n\cdot}\cA^\circ_{t+T}\right)}\exp\left(\cW_{t+T}^n\right).
    \end{split}
    \end{equation*}
    But $\cA^\circ_{t+T} = \cA^\circ_{t} + \sum_{u=t+1}^{t+T} \Theta_u$, then
    \begin{equation*}
    \begin{split}
        \cV^n_{t+T} &= F^n_0 \mathfrak{R}^n_{t+T}(\mathfrak{d}) \exp{\left(\mathfrak{a}^{n\cdot}(\cA^\circ_t-\mfv(\dd_{0}))\right)} \exp{\left(\mathfrak{a}^{n\cdot}\sum_{u=t+1}^{t+T} \Theta_u\right)}\exp\left(\cW^n_{t+T}\right) \\
        &= F^n_0 \mathfrak{R}^n_{t+T}(\mathfrak{d}) \exp{\left(\mathfrak{a}^{n\cdot}(\cA^\circ_t-\mfv(\dd_{0}))\right)} \exp{\left(\mathfrak{a}^{n\cdot}\sum_{u=1}^{T} \Theta_{t+u}+\cW^n_{t+T}\right)}.
    \end{split}
    \end{equation*}
    But recall from Remark~\ref{rem:VAR1} and~\ref{app:VAR}, for all $u\in\{1,\hdots,T\}$,
    \begin{equation*}
        \Theta_{t+u} = \Gamma^u \Theta_{t} + \Upsilon_{u-1} \mu + \varepsilon\sum_{v=1}^{u} \Gamma^{u-v} \cE_{t+v},
    \end{equation*}
    then
    \begin{equation*}
    \scriptstyle \sum_{u=1}^{T}\Theta_{t+u} = \sum_{u=1}^{T}\Gamma^u \Theta_{t} + \sum_{u=1}^{T}\Upsilon_{u-1} \mu + \varepsilon\sum_{u=1}^{T}\sum_{v=1}^{u} \Gamma^{u-v} \cE_{t+v}= \Gamma \Upsilon_{T-1} \Theta_{t} + \left(\sum_{u=1}^{T}\Upsilon_{u-1}\right)\mu + \varepsilon\sum_{v=1}^{T}\Upsilon_{T-v} \cE_{t+v}.
    \end{equation*}
    From Assumptions~\ref{sassump:VAR} and~\ref{ass:Fgrowth}, we have \begin{equation*}
    \scriptstyle \left(\sum_{u=1}^{T}\mathfrak{a}^{n\cdot}\Theta_{t+u}+\cW^n_{t+T}\middle| \cG_t\right) \sim \cN\left(\mathfrak{a}^{n\cdot}\Gamma \Upsilon_{T-1} \Theta_{t} + \mathfrak{a}^{n\cdot}\left(\sum_{u=1}^{T}\Upsilon_{u-1}\right)\mu, \varepsilon^2\sum_{u=1}^{T}(\mathfrak{a}^{n\cdot}\Upsilon_{T-u}) \Sigma (\Upsilon_{T-u}\mathfrak{a}^{n\cdot})^\top + \sigma_{\bb^n}^2 (t+T)\right),
    \end{equation*}
    and the conclusion follows.
\end{proof}

\subsection{Convergence of \texorpdfstring{$(\cV^n_t-V^n_t)/F^n_t \text{ to zero }$ } .}\label{proof:prop:boundFV}

\begin{proof}[Proof of Proposition~\ref{prop:boundFV}]
        For $K \in \NN^*$, 
        recall the expressions of $V^{n,K}_t$ in~\eqref{eq VnKt} and  $\cV^{n,K}_t$ in~\eqref{eq intro cVnKt} and note that 
        \begin{align} \label{eq triangle ineq}
            \EE \left[\left|\frac{V^n_t}{F^n_t} - \frac{\cV^n_t}{F^n_t}\right|\right]  \le 
            \EE \left[\left|\frac{V^n_t}{F^n_t} - \frac{V^{n,K}_t}{F^n_t}\right|\right]
            +
            \EE \left[\left|\frac{V^{n,K}_t}{F^n_t} - \frac{\cV^{n,K}_t}{F^n_t}\right|\right]
            +
            \EE \left[\left|\frac{\cV^{n,K}_t}{F^n_t} - \frac{\cV^n_t}{F^n_t}\right|\right] \,.
        \end{align}
        Using Hölder's inequality and Proposition \ref{prop:existenceCondFV}, one gets that the first term in the right hand side of the above inequality goes to zero as~$K$ goes to $+\infty$. Similarly, using Hölder's inequality and (the beginning of the proof of) Lemma \ref{lem:approx firm value}, one shows that the last term in the right hand side of the above inequality goes to zero as $K$ goes to infinity. It remains thus to study the middle term to obtain the desired result.
        Observe that
\begin{align*}
\frac{V^{n,K}_t - \cV^{n,K}_t}{F^n_t}
= \left( 
                 \sum_{s=1}^K e^{-rs}\EE_t \left[ 
                    \exp \left( s\mathfrak{a}^{n\cdot}\overline \mu
                        + \mathfrak{a}^{n\cdot}(\mfv(\dd_{t+s})-
                        \mfv(\dd_t))
                        +\sum_{u=1}^s \bb^n_{t+u}
                    \right) \Delta_s
                \right]
            \right),
\end{align*}
with
\begin{equation}\label{eq de Delta}
\Delta_s := \exp\left\{\varepsilon \sum_{u=1}^s\mathfrak{a}^{n\cdot}\cZ_{t+u}\right\}-1,
\end{equation}
        using~\eqref{eq sum w} and~\eqref{eq expression Theta}.
        We first compute, by independence,
        \begin{equation*}
        \begin{split}
            & \left|\EE_t \left[ 
                    \exp \left( s\mathfrak{a}^{n\cdot}\overline \mu
                        + \mathfrak{a}^{n\cdot}(\mfv(\dd_{t+s})-
                        \mfv(\dd_t))
                        +\sum_{u=1}^s \bb^n_{t+u}
                    \right) \Delta_s
                \right] \right|
               \\
                &= \EE \left[ 
                    \exp \left(s\mathfrak{a}^{n\cdot}\overline \mu
                        + \mathfrak{a}^{n\cdot}(\mfv(\dd_{t+s})-
                        \mfv(\dd_t))
                        +\sum_{u=1}^s \bb^n_{t+u}
                    \right) 
                \right]\left|
                \EE_t \left[  
                \Delta_s 
                \right]\right|
                \\
                &\le \mathfrak{C} \exp\left(s\mathfrak{a}^{n\cdot}\overline \mu + \frac12 s\sigma_{\bb^n}^2\right)
                    \EE_t \left[  
                    | \Delta_s | 
                    \right],
        \end{split}
        \end{equation*}
        using~\eqref{eq borne tax and co}.
        We then obtain
        \begin{equation*}
            \|(V^{n,K}_t - \cV^{n,K}_t)/F^n_t\|_1
            \le  \left(\sum_{s=1}^K \mathfrak{C} e^{\varrho_n s}
            \EE \left[  
                | \Delta_s | 
                \right] 
            \right),
        \end{equation*}
        where $\varrho_n$ is defined in Lemma~\ref{lem:approx firm value}. We can rewrite~\eqref{eq de Delta} as
        \begin{equation*}
            \Delta_s = \varepsilon \int_0^1\exp{\left(\varepsilon \lambda \sum_{u=1}^s\mathfrak{a}^{n\cdot}\cZ_{t+u}\right)}\sum_{u=1}^s\mathfrak{a}^{n\cdot}\cZ_{t+u} d \lambda.
        \end{equation*}
        
\noindent For $p > 1$, using Hölder's inequality, we deduce from the previous expression
        \begin{align}\label{eq step interm}
             \EE \left[| \Delta_s|\right]   \leq \varepsilon
             \EE \left[
             \left| \int_0^1\exp{\left(\varepsilon \lambda \sum_{u=1}^s\mathfrak{a}^{n\cdot}\cZ_{t+u}\right)} \mathrm{d} \lambda \right|^p
             \right]^\frac1p
             \EE \left[\left| \sum_{u=1}^s\mathfrak{a}^{n\cdot}\cZ_{t+u}\right|^q\right]^\frac1{q},
        \end{align}
        with $q$ the conjugate exponent to $p$.

We first compute by convexity
\begin{equation*}
\EE\left[\left|\sum_{u=1}^s\mathfrak{a}^{n\cdot}\cZ_{t+u}\right|^{q}\right]
        \le s^{q - 1} \sum_{u=1}^s\EE\left[\left|\mathfrak{a}^{n\cdot}\cZ_{t+u}\right|^{q}\right] \le C_{q} s^{q}, 
\end{equation*}
        where the last inequality follows since $\cZ_{t+u} \sim \cN(0,\overline{\Sigma})$.

       We now turn to the first term in the right hand side of \eqref{eq step interm}, 
 
        Using Jensen's inequality, we have
        \begin{align*}
            \EE \left[\left| \int_0^1\exp{\left(\varepsilon \lambda \sum_{u=1}^s\mathfrak{a}^{n\cdot}\cZ_{t+u}\right)} \mathrm{d} \lambda\right|^{p} 
            \right]
            \leq \int_0^1\EE \left[ \exp{\left(\varepsilon \lambda p \sum_{u=1}^s\mathfrak{a}^{n\cdot}\cZ_{t+u}\right)} \right] \mathrm{d} \lambda.
        \end{align*}
     
       Since
$\cZ_{t+u} = \Gamma^u \cZ_t + \sum_{v=1}^u \Gamma^{u-v}\cE_{t+v}$,
we write
        \begin{align*}
         \EE_t\left[\exp\left(p\varepsilon \lambda \sum_{u=1}^s\mathfrak{a}^{n\cdot}\cZ_{t+u}\right)\right]
            &= \exp\left(p\varepsilon \lambda \sum_{u=1}^s\mathfrak{a}^{n\cdot}\Gamma^u \cZ_t\right)\times \EE_t \left[\exp\left(p\varepsilon \lambda \sum_{u=1}^s\mathfrak{a}^{n\cdot}\sum_{v=1}^u \Gamma^{u-v}\cE_{t+v}\right) \right].
        \end{align*}
    By~\eqref{eq control Upsilon}, $|\Upsilon_k| \le (1-|\Gamma|)^{-1}$ where $\Upsilon_k := \sum_{v=0}^k\Gamma^v$.
    We compute 
    \begin{align*}
\sum_{u=1}^s\mathfrak{a}^{n\cdot}\sum_{v=1}^u \Gamma^{u-v}\cE_{t+v}
        = \sum_{v=1}^{s}\mathfrak{a}^{n\cdot}\Upsilon_{s-v} \cE_{t+v}.
    \end{align*}
    Using~\eqref{eq interm useful} and recalling that~$\lambda \in [0,1]$, we get
    \begin{align}
        \EE _t\left[\exp\left(p\varepsilon \lambda \sum_{u=1}^s\mathfrak{a}^{n\cdot}\sum_{v=1}^u \Gamma^{u-v}\cE_{t+v}\right) \right]  \le \exp \left( s\frac{\varepsilon^2p^2}2
    | \mathfrak{a}^{n\cdot} |^2 | \sqrt{\Sigma}|^2(1-|\Gamma|)^{-2} \right).
    \end{align}
    Thus, appealing to~\eqref{eq en dire un peu plus ?}, we get
    \begin{align*}
         \EE\left[ \exp(\varepsilon p \lambda \sum_{u=1}^s\mathfrak{a}^{n\cdot}\cZ_{t+u})\right]
        &\le C_{p,\epsilon} \exp \left( s\frac{\varepsilon^2p^2}2
         | \mathfrak{a}^{n\cdot} |^2 | \sqrt{\Sigma}|^2(1-|\Gamma|)^{-2} \right).
    \end{align*}
    Finally, combining the above inequalities, we obtain
    \begin{align*}
         \EE \left[  
            | \Delta_s | 
            \right]  \le
            C_{p,\epsilon} \varepsilon s  \exp \left( s\frac{\varepsilon^2 p}2
            | \mathfrak{a}^{n\cdot} |^2 | \sqrt{\Sigma}|^2(1-|\Gamma|)^{-2} \right),
    \end{align*}
and then
$$
\sum_{s=1}^K \mathfrak{C} e^{\varrho_n s}
    \EE \left[  
                | \Delta_s | 
                \right] 
                \le 
                \varepsilon \sum_{s=1}^K C_{p,\epsilon} s e^{p(\rho-r) s}.
$$
For $p-1>0$ small enough, we thus get 
\begin{align}
   \|(V^{n,K}_t - \cV^{n,K}_t)/F^n_t\|_1 \le \sum_{s=1}^K \mathfrak{C} e^{\varrho_n s}
         \EE \left[  
                | \Delta_s | 
                \right] 
                \le 
                C \varepsilon.
\end{align}
    The proof is thus concluded letting $K$ goes to infinity in~\eqref{eq triangle ineq}.
    \end{proof}

\section{Factor selection by LASSO regression} \label{LASSO}
We perform LASSO regression questioning the relationship between credit risk (described by the logit of the default rate) and economics conditions (described by the macroeconomic variables as we assumed in Section~\ref{sec2}), we use S\&P ratings for data on the ratings and default, on a yearly basis from 1995 to 2019, of 7046 large US companies belonging to~13 sectors. We can analyze and use them to compute the historical probability of default (displayed Figure~\ref{historical_PD}) and the migration matrix by sector. 
The USA macroeconomic time series can be found in the World Bank database and in the FRED Saint-Louis database~\cite{FredSt}.

\begin{table}[!ht]
\small\centering
\begin{tabular}{|r|r|r|r|}
\hline
\textbf{}                              & \textbf{Coef\_} & \textbf{Importance} & \textbf{Percentage} \\ \hline
\textbf{Industry value added growth}          & -0.433       & 0.433            & 73.979           \\ \hline
\textbf{Real GDP per capita growth}                & -0.073       & 0.073            & 12.485          \\ \hline
\textbf{Unemployment rate}                  & 0.046        & 0.046            & 7.934            \\ \hline
\textbf{Stocks returns}                      & -0.033       & 0.033            & 5.602            \\ \hline
\textbf{Export of goods and services} & 0       & 0           & 0            \\ \hline
\textbf{Real GDP growth}                    & 0       & 0            & 0            \\ \hline
\textbf{Inflation rate}                     & 0       & 0           & 0         \\ \hline
\textbf{10-year interest rate}         & 0       & 0            & 0 \\ \hline
\end{tabular}
\caption{Factor selection by LASSO}
\label{tab:LASSO}
\end{table}

{\color{black}
\section{Sectoral groups}\label{appendix:sectors}
We use the output and GHG emissions by sector to compute the carbon intensity (which is the tons of GHG emitted per euro of output) per sector. Then we compute their annual average and we group the sectors together if their annual average carbon intensities are close.

\begin{small}
    \begin{enumerate}
    \item Very High Emitting
    \begin{itemize}
    \begin{multicols}{2}
        \item Manufacture of basic metals and fabricated metal products, except machinery and equipment 
        \item Water supply; sewerage, waste management and remediation activities 
        \item Manufacture of rubber and plastics products, and other non-metallic mineral products 
        \item Agriculture, forestry and fishing
    \end{multicols}
    \end{itemize}
    \item High Emitting
    \begin{itemize}
    \begin{multicols}{2}
        \item Electricity, gas, steam and air conditioning supply
        \item Transportation and storage
        \item Manufacture of chemicals and chemical products
        \item Manufacture of coke and refined petroleum products
    \end{multicols}
    \end{itemize}
    \item Low Emitting
    \begin{itemize}
    \begin{multicols}{2}
        \item Manufacture of food products, beverages and tobacco products  
        \item Manufacture of wood and paper products, and printing
        \item Mining and quarrying
    \end{multicols}
    \end{itemize}
    \item Very Low Emitting
    \begin{itemize}
    \begin{multicols}{2}
        \item Other Service Activities 
        \item Arts, Entertainment and Recreation \item Social Work Activities 
        \item Human Health Activities 
        \item Education
        \item Public 
        \item Administration and Defence; Compulsory Social Security 
        \item Administrative and Support Service Activities 
        \item Advertising and Market Research; Other Professional, Scientific and Technical Activities; Veterinary Activities 
        \item Scientific Research and Development Legal and Accounting Activities; Activities of Head Offices; Management Consultancy Activities; Architecture and Engineering Activities; Technical Testing and Analysis \item Real Estate Activities 
        \item Financial and Insurance Activities \item Computer Programming, Consultancy and Related Activities; Information Service Activities 
        \item Telecommunications
        \item Publishing, Audiovisual and Broadcasting Activities 
        \item Accommodation and Food Service Activities 
        \item Wholesale and Retail Trade, Repair of Motor Vehicles and Motorcycles 
        \item Construction 
        \item Manufacture of Furniture; Other Manufacturing; Repair and Installation of Machinery and Equipment 
        \item Manufacture of Transport Equipment \item Manufacture of Machinery and Equipment N.E.C. 
        \item Manufacture of Electrical Equipment \item Manufacture of Computer, Electronic and Optical Products 
        \item Manufacture of Basic Pharmaceutical Products and Pharmaceutical Preparations \item Manufacture of Textiles, Wearing Apparel and Leather Products
    \end{multicols}
    \end{itemize}
    
\end{enumerate}
\end{small}

}
\section{Plots of historical data}
We plot the data described in Section~\ref{result:data}.
\begin{figure}[!ht]
    \centering
    \includegraphics[width=0.9\textwidth]{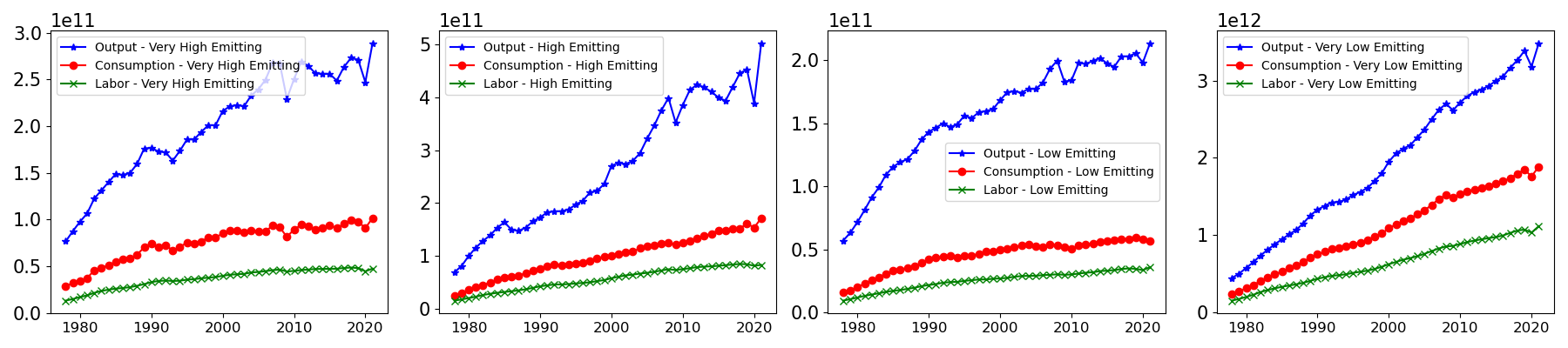}
    \caption{\textcolor{black}{Nominal consumption, labor, and output (described in item~\ref{histo-macro-data})}}
    \label{Output_and_Consumption}
\end{figure}
\begin{figure}[!ht]
    \centering
    \includegraphics[width=0.9\textwidth]{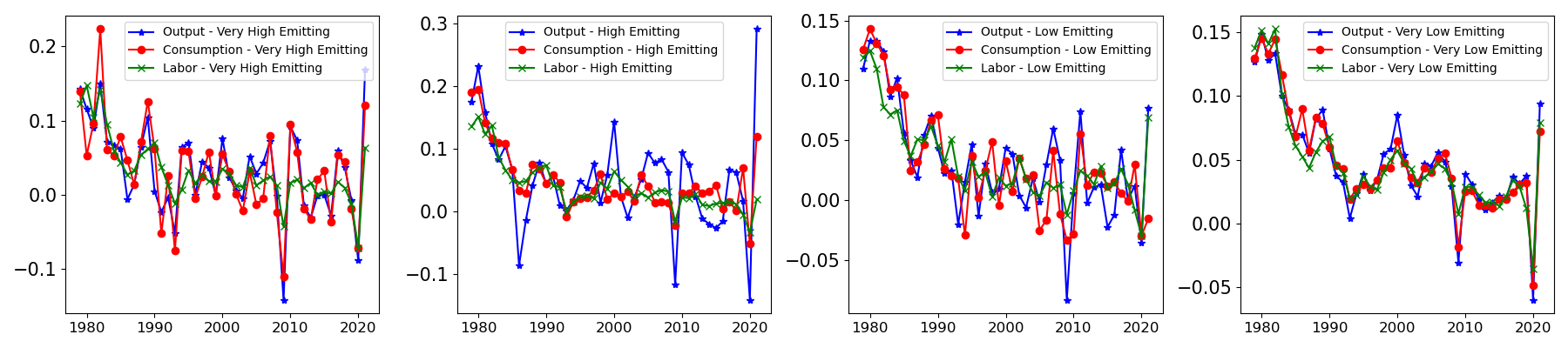}
    \caption{\textcolor{black}{Consumption, labor, and output growth (described in item~\ref{histo-macro-data})}}
    \label{Output_and_consumption_growth}
\end{figure}
\begin{figure}[!ht]
    \centering
    \includegraphics[width=0.9\textwidth]{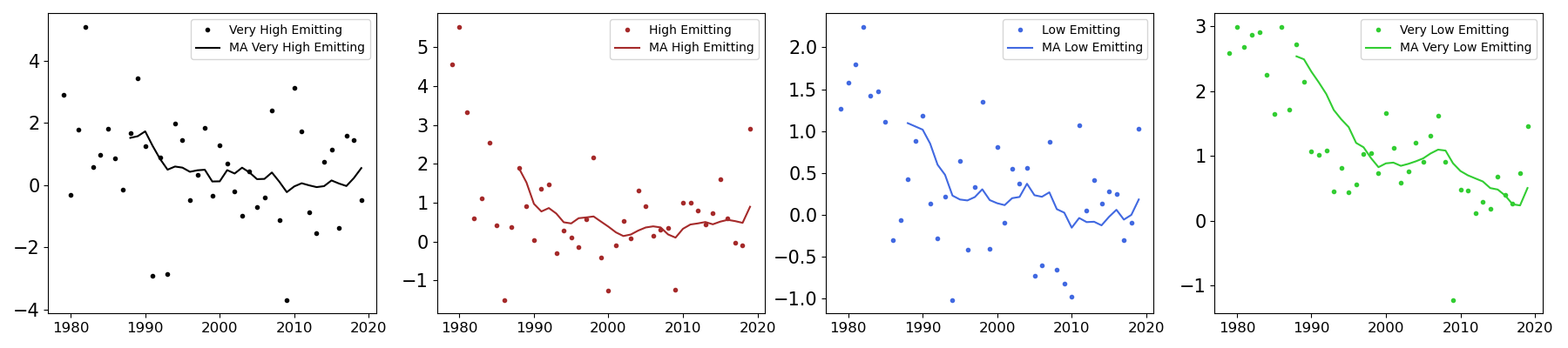}
    \caption{\textcolor{black}{Annual and 10-year moving average productivity growth}}
    \label{fig:Productivity_growth}
\end{figure}
\begin{figure}[!ht]
    \centering
    \includegraphics[width=0.7\textwidth]{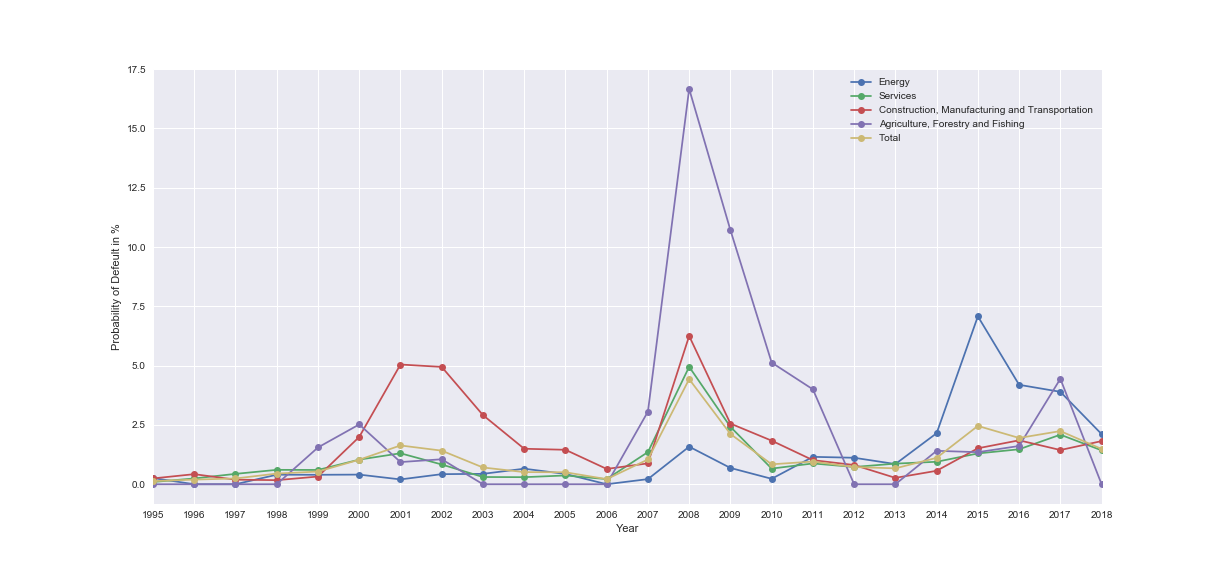}
    \caption{Historical data of a chosen portfolio  - France - from 1995 to 2018 (described in item~\ref{LASSO})}
    \label{historical_PD}
\end{figure}

\begin{table}[ht!]
\small \centering
\begin{tabular}{|r|r|r|r|r|}
\hline
\textit{\textbf{Emissions Level}} & \multicolumn{1}{l|}{\textbf{\begin{tabular}[c]{@{}l@{}}$\varphi_0$\end{tabular}}} & \multicolumn{1}{l|}{\textbf{\begin{tabular}[c]{@{}l@{}}$g_{\varphi,0}$ \end{tabular}}} &  \multicolumn{1}{l|}{\textbf{\begin{tabular}[c]{@{}l@{}}$\theta_\varphi$(\%)\end{tabular}}}  \\ \hline\hline
\textit{\textbf{$\kappa_{\text{Very High}}$}} & 0.003& -0.026&  0.001 \\ \hline
\textit{\textbf{$\kappa_{\text{High}}$}} &1.123& -0.040&   0.001 \\ \hline
\textit{\textbf{$\kappa_{\text{Low}}$}} & 0.003& -0.026&  0.001 \\ \hline
\textit{\textbf{$\kappa_{\text{Very Low}}$}} & 0.003& -0.026 &  0.001 \\ \hline\hline
\textit{\textbf{$\zeta_{\text{Very High, Very High}}$}} & 0.124& -0.043&  1.5 \\ \hline
\textit{\textbf{$\zeta_{\text{Very High, High}}$}} & 0.017& -0.045&  0.001 \\ \hline
\textit{\textbf{$\zeta_{\text{Very High, Low}}$}} &  0.088& -0.065&  0.001 \\ \hline
\textit{\textbf{$\zeta_{\text{Very High, Very Low}}$}} & 0.034& -0.042&  3.6 \\ \hline\hline

\textit{\textbf{$\zeta_{\text{High, Very High}}$}} & 0.051& -0.049 & 0.001 \\ \hline
\textit{\textbf{$\zeta_{\text{High, High}}$}} & 0.177 &-0.046&  1.1 \\ \hline
\textit{\textbf{$\zeta_{\text{High, Low}}$}} & 0.022& -0.081&  0.001 \\ \hline
\textit{\textbf{$\zeta_{\text{High, Very Low}}$}} & 0.026& -0.030&   0.001 \\ \hline\hline

\textit{\textbf{$\zeta_{\text{Low, Very High}}$}} & 0.037& -0.055 & 11.1 \\ \hline
\textit{\textbf{$\zeta_{\text{Low, High}}$}} & 0.117& -0.079&  0.001 \\ \hline
\textit{\textbf{$\zeta_{\text{Low, Low}}$}} &  0.111& -0.065&  0.3 \\ \hline
\textit{\textbf{$\zeta_{\text{Low, Very Low}}$}} &  0.026& -0.018&  0.001 \\ \hline\hline

\textit{\textbf{$\zeta_{\text{Very Low, Very High}}$}} & 0.019 &-0.052 & 0.1 \\ \hline
\textit{\textbf{$\zeta_{\text{Very Low, High}}$}} & 0.025 & -0.05 &  2.1 \\ \hline
\textit{\textbf{$\zeta_{\text{Very Low, Low}}$}} & 0.016& -0.088 & 0.001 \\ \hline
\textit{\textbf{$\zeta_{\text{Very Low, Very Low}}$}} & 0.059 & -0.034 & 0.001 \\ \hline
\end{tabular}
\caption{\textcolor{black}{Carbon intensities parameters}}
\label{tab:carbon_intensities_parameters}
\end{table}

\newpage
\section{Figures}\label{appendix:figures}
\begin{figure}[!ht]
    \centering
    \includegraphics[width=0.85\textwidth]{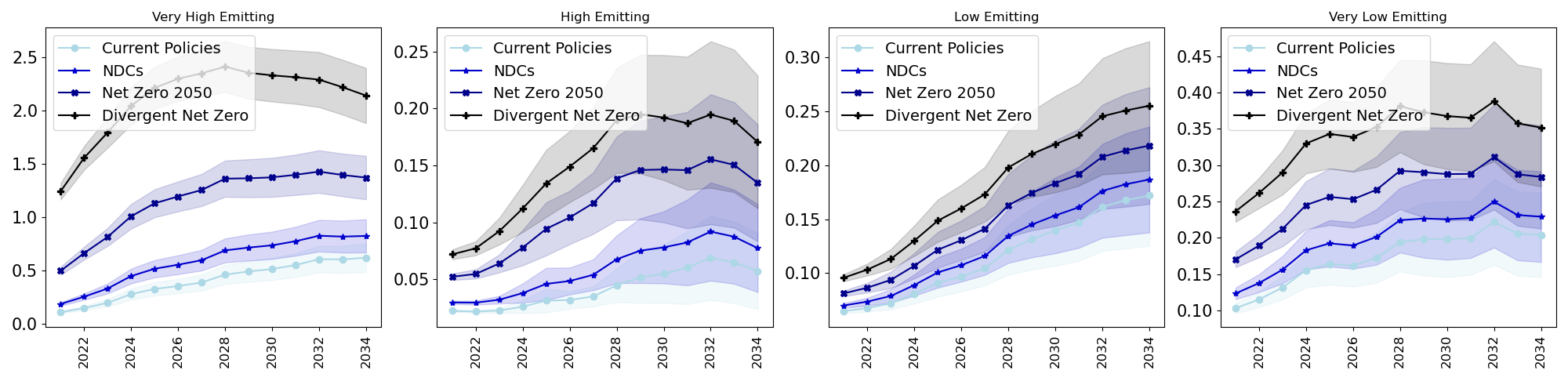}
    \caption{\textcolor{black}{Average annual EL per scenario for some firms}}
    \label{fig:EL_per_firms}
\end{figure}
\begin{figure}[!ht]
    \centering
    \includegraphics[width=0.85\textwidth]{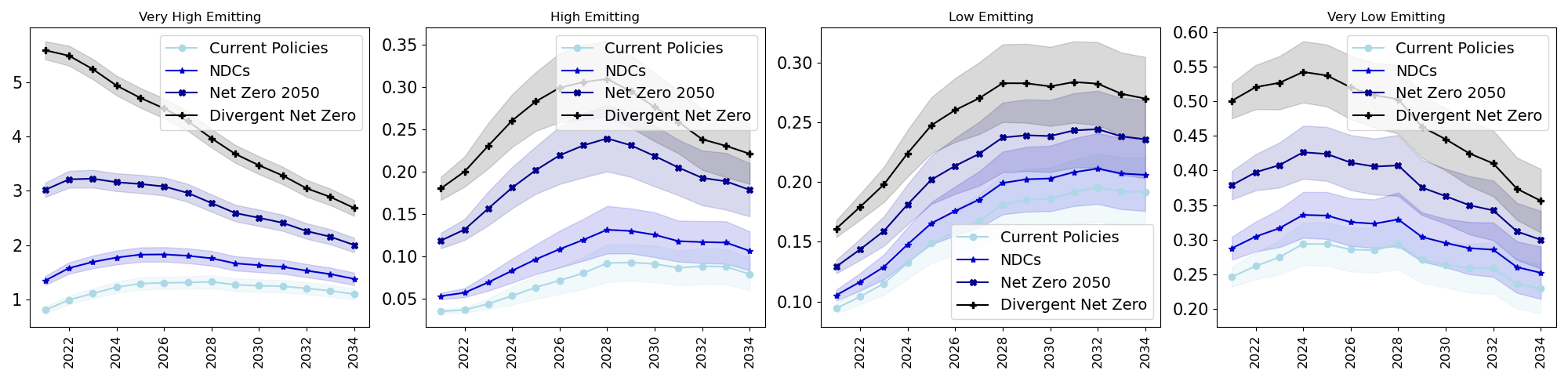}
    \caption{\textcolor{black}{Average annual UL per scenario for some firms}}
    \label{fig:UL_per_firms}
\end{figure}

\begin{figure}[!ht]
    \centering
    \includegraphics[width=0.7\textwidth]{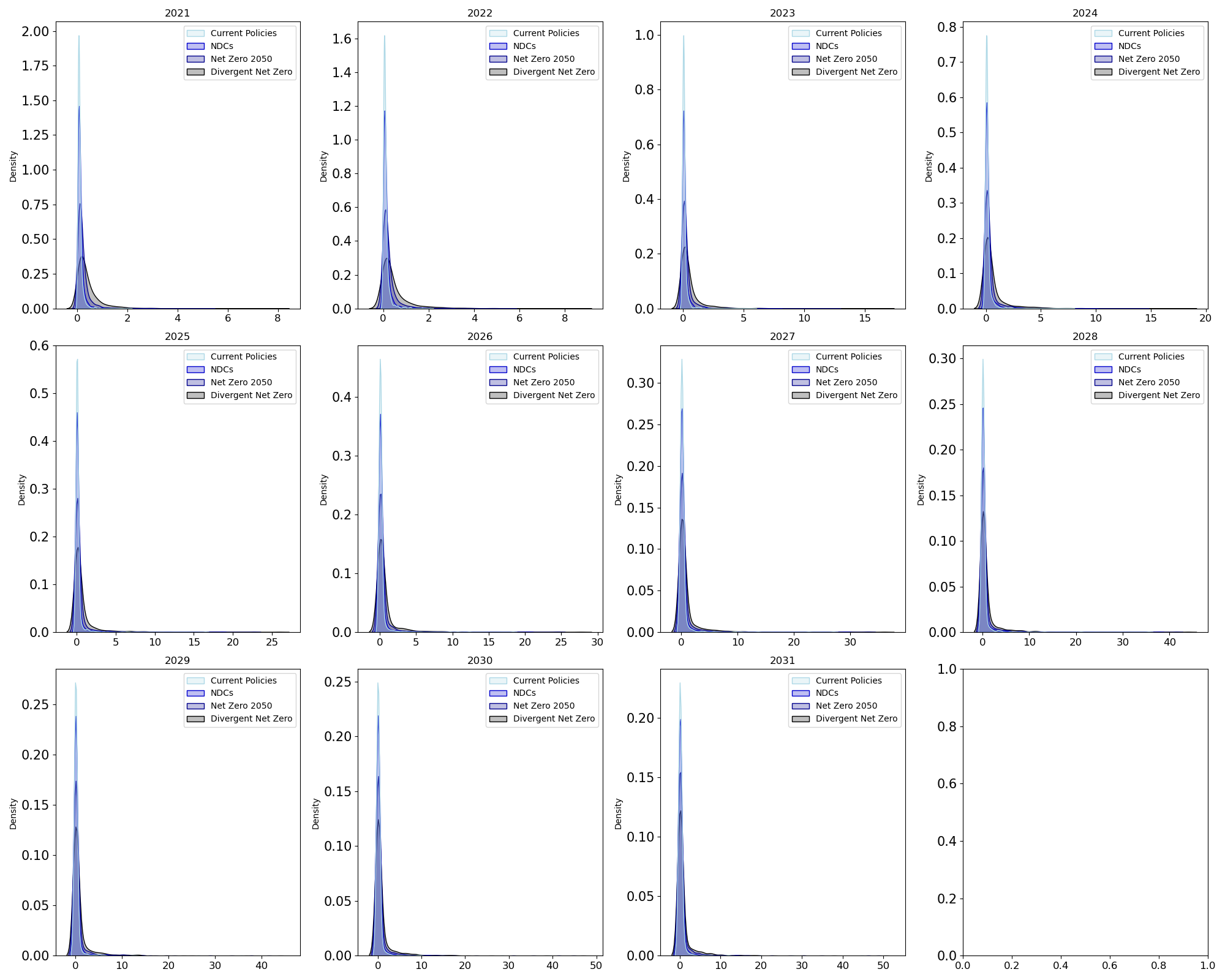}
    \caption{\textcolor{black}{Annual EL distribution per scenario}}
    \label{fig:EL_distribution}
\end{figure}
\begin{figure}[!ht]
    \centering
    \includegraphics[width=0.7\textwidth]{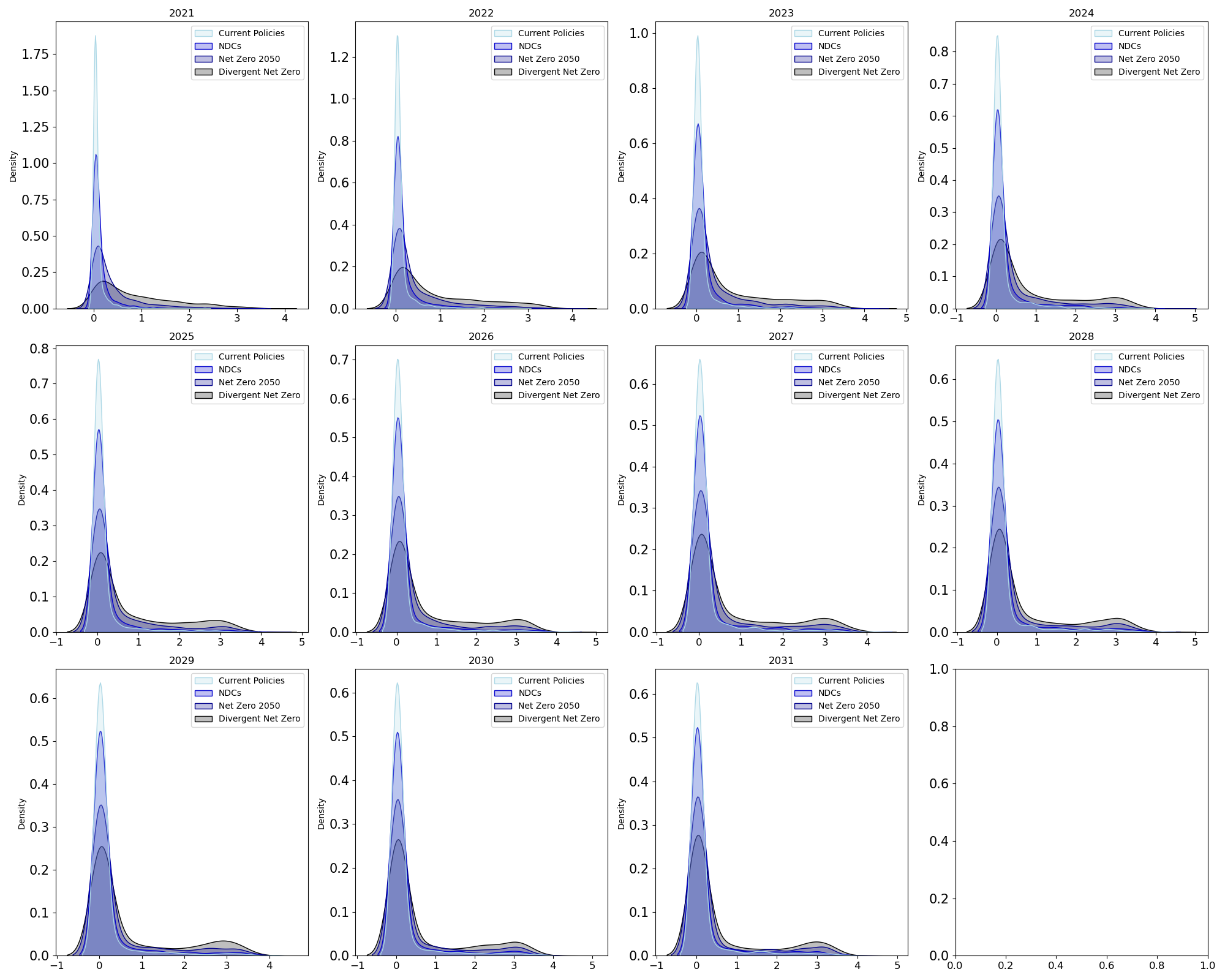}
    \caption{\textcolor{black}{Annual UL distribution per scenario}}
    \label{fig:UL_distribution}
\end{figure}

\end{document}